\DeclareMathOperator*{\argmax}{arg\,max}
\DeclareMathOperator*{\argmin}{arg\,min}
\newcommand{\indep}{\perp \!\!\! \perp}
\DeclareMathOperator{\Tr}{Tr}
\DeclareMathOperator{\PCA}{PCA}
\DeclareMathOperator{\CCA}{CCA}
\DeclareMathOperator{\CMI}{CMI}
\newcommand{\EIG}{\mathrm{EIG}}
\newcommand{\OPT}{\mathrm{OPT}}
\newcommand{\R}{\mathbb{R}}
\newcommand{\Normal}{\mathcal{N}}
\newcommand{\I}{\mathcal{I}}
\newcommand{\KL}{\text{KL}}
\DeclareMathOperator{\Dkl}{\mathcal{D}_{\KL}}
\definecolor{darkgreen}{rgb}{.15,.55,0}
\newcommand{\rsb}[1]{\textcolor{black}{#1}} % Ricardo
\newcommand{\fl}[1]{\textcolor{black}{#1}} %
\newcommand{\revise}[1]{\textcolor{black}{#1}}
\setlist[enumerate]{leftmargin=.5in}
\setlist[itemize]{leftmargin=.5in}
\crefname{hypothesis}{Hypothesis}{Hypotheses}
\title{Expected information gain estimation via density approximations:\\ Sample allocation and dimension reduction} %\thanks{Submitted to the editors DATE.
\author{Fengyi Li\thanks{Massachusetts Institute of Technology  (\email{fengyil@mit.edu}, \email{ymarz@mit.edu}).}
\and Ricardo Baptista\footnotemark[2]\thanks{University of Toronto
  (\email{r.baptista@utoronto.ca}) }
\and Youssef Marzouk\footnotemark[1]}
\newcommand*{\addFileDependency}[1]{% argument=file name and extension
  \typeout{(#1)}% latexmk will find this if $recorder=0 (however, in that case, it will ignore #1 if it is a .aux or .pdf file etc and it exists! if it doesn't exist, it will appear in the list of dependents regardless)
  \@addtofilelist{#1}% if you want it to appear in \listfiles, not really necessary and latexmk doesn't use this
  \IfFileExists{#1}{}{\typeout{No file #1.}}% latexmk will find this message if #1 doesn't exist (yet)
}
\newcommand*{\myexternaldocument}[1]{%
    \externaldocument{#1}%
    \addFileDependency{#1.tex}%
    \addFileDependency{#1.aux}%
}
\begin{document}

\maketitle

% REQUIRED
\begin{abstract}
Computing expected information gain (EIG) from prior to posterior (equivalently, mutual information between candidate observations and model parameters or other quantities of interest) is a fundamental challenge in Bayesian optimal experimental design. We formulate flexible transport-based schemes for EIG estimation in general nonlinear/non-Gaussian settings, compatible with both standard and implicit Bayesian models. These schemes are representative of two-stage methods for estimating or bounding EIG using marginal and conditional density estimates. In this setting, we analyze the optimal allocation of samples between training (density estimation) and approximation of the outer prior expectation. We show that with this optimal sample allocation, the mean squared error (MSE) of the resulting EIG estimator converges more quickly than that of a standard nested Monte Carlo scheme. We then address the estimation of EIG in high dimensions, by deriving gradient-based upper bounds on the mutual information lost by projecting the parameters and/or observations to lower-dimensional subspaces. Minimizing these upper bounds yields projectors and hence low-dimensional EIG approximations that outperform approximations obtained via other linear dimension reduction schemes. Numerical experiments on a PDE-constrained Bayesian inverse problem also illustrate a favorable trade-off between dimension truncation and the modeling of non-Gaussianity, when estimating EIG from finite samples in high dimensions.

\end{abstract}

% REQUIRED
\begin{keywords}
  Optimal experimental design, expected information gain, mutual information, density estimation, transportation of measure, dimension reduction
\end{keywords}

% REQUIRED
\begin{AMS}
  62K05, 62B10, 62G07, 62F15
\end{AMS}

\section{Introduction}
Optimal experimental design (OED) problems are ubiquitous in the physical and biological sciences, engineering, social sciences, and beyond \cite{HuanJMacta2024,Attia17,Krause08,clinical_application, social_application}.
The goal of OED is to identify observational or experimental configurations that are---in a specific sense defined by the experimenter---most {useful}, before the associated data are collected. Most OED approaches begin with a parametric statistical model for the observations and seek a design that maximizes, for example, a notion of \emph{information gain} or uncertainty reduction in the model's parameters, %or predictions,
or another utility that reflects a downstream task in which the parameters are involved. Typically these objectives can be maximized only in expectation over the data.

To make this idea concrete, let the observations associated with some candidate design be represented as a random vector $Y$ taking values in $\R^{n_y}$, such that our parametric statistical model is specified by conditional probability density functions\footnote{Here and throughout the paper, we assume that all distributions are absolutely continuous with respect to Lebesgue measure, so that these Lebesgue densities exist.} of $Y$, $\pi_{Y | X, d}(\cdot \, | x, d)$, for parameter values $x \in \R^{n_x}$ and design parameters $d \in \Xi$. 
We work in the Bayesian setting, such that the parameters $x$ are treated as a random vector $X$ and endowed with a prior density $\pi_X$. The posterior density of $X$ for observations $y$ is given by Bayes' rule as
\begin{equation} \label{eq:Bayes_formula}
    \pi_{X | Y,d}(x | y, d) = \frac{\pi_{Y | X, d}(y|x, d) \, \pi_X(x)}{\pi_{Y | d}(y | d)}.
\end{equation}
Here, the marginal density of the observations $\pi_{Y|d}$ is also called the evidence or marginal likelihood, \revise{and $x \mapsto \pi_{Y | X, d}(y |x, d)$ is the likelihood function.} In the Bayesian paradigm, the prior and posterior distributions represent, respectively, our states of knowledge before and after having observed $Y=y$. We always assume that the prior density of $X$ is functionally independent of the design $d$.

In the special case of linear-Gaussian models, i.e., $Y \vert x, d \sim \Normal(G(d) x, \Sigma_{Y\vert X})$ for some design-dependent matrix $G(d) \in \R^{n_y \times n_x}$ and covariance matrix $\Sigma_{Y\vert X}$, the classical ``alphabetic optimality'' criteria are simply functions of the Fisher information matrix $G(d)^\top \Sigma_{Y\vert X}^{-1} G(d)$ of the model, which is independent of $x$. For example, choosing $d$ to maximize the trace of the Fisher information matrix (A-optimality) is a natural criterion if one would like to minimize the variance of parameter estimates on average for any observation; similarly, maximizing the determinant of the Fisher information (D-optimality) minimizes the volume of the confidence ellipsoid~\cite{BOED_review}. In the \textit{Bayesian} linear-Gaussian setting, these alphabetic optimality criteria can be modified to include the prior: for instance, Bayesian $D$-optimality then seeks to minimize the determinant of the posterior covariance, and so on.

In the nonlinear setting, the Fisher information matrix---and the Hessian of the log-posterior density---depend on the parameters $x$, and thus these simple alphabetic optimality criteria do not directly apply. Many have suggested averaging these criteria over the parameter space, leading to a variety of nonlinear design heuristics \cite{nonlinear_alphabetic_regression, pronzato2013design} which in some cases permit interesting information theoretic interpretations \cite{walker2016,overstall2022,prangle2023}. % see Acta page 14 for these refs
A more general and principled approach, however, is to consider the \emph{expected information gain} (EIG) from prior to posterior, i.e., 
\begin{equation}\label{eig_kl}
   \EIG(d) \coloneqq \mathbb E_{Y|d} \left[ \Dkl (\pi_{X|Y,d} \, || \, \pi_X) \right],   
\end{equation}
where $\Dkl$ denotes the Kullback--Leibler (KL) divergence. This design objective was advocated by~\cite{lindley} as a natural measure of information due to an experiment. It can be justified from a decision-theoretic perspective \cite{Bernardo1979,BOED_review} as the expectation of a utility function based on Shannon information. The \revise{EIG} enjoys other useful properties, such as invariance under reparameterization (inherited from the KL divergence). It is also equivalent to the \emph{mutual information} (MI) between $ X$ and $ Y | d$, denoted $\I( X; Y \vert d)$ \cite{thomas2006elements}. We thus will use these two terms interchangeably.
In the Bayesian linear-Gaussian model, maximizers of the expected information gain from prior to posterior coincide with Bayesian D-optimal designs, i.e., $d^\ast \in \argmin_{d\in \Xi} \log \det \Sigma_{X \vert Y}(d)$, where $\Sigma_{X \vert Y}(d)$ is the design-dependent posterior covariance matrix. In general, however (e.g., outside of the linear-Gaussian setting), closed-form expressions for the EIG are unavailable. Estimating or bounding the EIG is in fact one of the \emph{central challenges of nonlinear Bayesian experimental design.}

This paper proposes and analyzes new \textit{transport-based estimation schemes for EIG in general nonlinear/non-Gaussian settings}. We address settings where the likelihood function and prior density can be evaluated directly, but also ``implicit model'' \cite{cranmer,Kleinegesse2019EfficientBE} settings where one can only simulate (i.e., draw samples from) the prior and/or the data-generating distribution. To improve the estimation of EIG in high dimensions, we then introduce new \textit{dimension reduction} schemes that seek low-dimensional projections of the data and parameters that best preserve the \revise{MI} $\I(X; Y)$. \revise{This dimension reduction approach is adapted from~\cite{Ricarod_dimRed}}.

\subsection{Related work}
To frame our contributions, it is useful to expand the EIG~\eqref{eig_kl} into expectations over marginal and conditional densities of $X$ and $Y$. For notational simplicity, we drop explicit dependence on $d$:
{\allowdisplaybreaks
\begin{align}
    \mathrm{EIG}  
    =& \int\int \pi_{X,Y} ( x, y) \log \frac{\pi_{X, Y} ( x, y)}{\pi_X( x)\pi_Y(y)} \, d x \, d y, \nonumber \\
    =& \, \mathbb E_{\pi_{X,Y}} \left[\log \frac{\pi_{X \vert Y} ( X | Y)}{\pi_X( X)}\right] \label{eq:EIGpost} \\
    =& \, \mathbb E_{\pi_{X,Y}} \left[ \log \frac{\pi_{Y \vert X} ( Y | X)}{\pi_Y( Y)}\right]. \label{eq:EIGmarg}
\end{align}
From the last two expressions, we see that estimating EIG requires either the \textit{normalized} posterior density $\pi_{X \vert Y}$ and the prior density $\pi_X$ \eqref{eq:EIGpost}, or the conditional density $\pi_{Y\vert X}$ (and hence the likelihood) and the evidence $\pi_Y$ \eqref{eq:EIGmarg}. To compute the outer expectation, all of these densities must be evaluated over a range of arguments $x$ and $y$. Depending on what information is available, the problem of EIG estimation can be divided into settings of (roughly) increasing difficulty: 
\begin{enumerate}
    \item One can sample from the joint distribution of parameters and data $\pi_{X,Y}$ (typically by drawing $x^i \sim \pi_X$ and $y^i \sim \pi_{Y | X}(\cdot \,  | x^i)$) and also evaluate the prior density and likelihood function.\label{assump1}
    \item One can sample from joint distribution of parameters and data, but can only evaluate the marginal prior density $\pi_{X}$.\label{assump2}
    \item One can sample from the joint distribution of parameters and data, but cannot evaluate any of its marginal or conditional densities, and hence neither the prior density nor the likelihood. \label{assump3}
\end{enumerate}
The second and third settings above are ``likelihood-free,'' in that likelihood evaluations are unavailable and one can only simulate $Y \vert X = x$ for any $x$. Data-generating models of this form are often called ``implicit'' and are the prototypical target of simulation-based inference (SBI) methods \cite{cranmer} and 
\revise{approximate Bayesian computation (ABC)~\cite{del2012adaptive, sisson2018handbook, OED_ABC}. SBI and ABC methods produce posterior samples that allow further downstream tasks, such as estimating the posterior density or density ratio, to be easily performed~\cite{chakraborty2024likelihoodfreeapproachgoalorientedbayesian}. Our goal here is somewhat different, as we do not seek samples directly from the posterior.}
% This setting differs slightly from ours: we assume that sampling from the joint distribution is possible, but the likelihood function cannot be evaluated. In contrast to ABC, we do not draw samples directly from the posterior.} 
The third setting is additionally ``prior-free'' and encompasses, for instance, generative prior models whose densities are unavailable. In both cases, the estimation of densities from samples is required.  

In the first setting, a workhorse approach to EIG estimation has been nested Monte Carlo (NMC), which to our knowledge was first proposed by~\cite{ryan}. The core idea is to independently estimate the evidence $\pi_Y(y)$ for different values of $y$ and to embed these estimates in an outer Monte Carlo estimator of the expectation over $Y$, following~\eqref{eq:EIGmarg}. NMC is biased at finite sample sizes, but asymptotically unbiased; it also converges more slowly than standard Monte Carlo (see~\cite{Rainforth2018OnNM} and Section~\ref{sec:num_res}). For more details on NMC and its variants, we refer to \cite[Section 3.1]{HuanJMacta2024}. \revise{~\cite{Goda03072020} introduce a more efficient multilevel Monte Carlo estimator of the EIG and report that, for the same level of accuracy, the required number of samples can be reduced by several orders of magnitude compared to simple NMC (see Section~\ref{subsec:LG} for a detailed comparison).}
In the third setting, a classical method for estimating MI from samples is the nonparametric Kraskov--St\"ogbauer--Grassberger (KSG) estimator~\cite{knn_MI}, which is based on the statistics of nearest-neighbor distances. The error of this estimator scales poorly with dimension \cite{GaoOhViswanath2018}, however, and it does not take advantage of smoothness in the associated densities.

% BOUNDS and VARIATIONAL METHODS (whether cast as OED for fixed d or just general MI estimation)

% \ymtd{note to self: bounds with critics (implicitly estimate normalizing constant). Bounds with densities: automatically normalized. TM unifies these in some way.}
% 1 bounds, with critic functions
% 2 learning densities
% 3 densities can be represented more generally, by learning diffeomorphisms. Includes NFs, but also the rectification/transport framework here (which maps arbitrary functions to transport maps to densities). In some sense, full circle: we are back to learning arbitrary (non-monotone) functions!
% 4 Open question: sample allocation in these methods! We provide an asymptotic analysis of this question.

A more modern family of sample-driven approaches involves constructing and optimizing variational bounds for EIG. In general, these approaches involve writing a variational representation of \revise{MI} and optimizing this representation (thus tightening the bound) over a family of so-called ``critic'' functions. Prominent examples include the ``mutual information neural estimation'' approach of \cite{MINE}, which uses the Donsker--Varadhan variational representation of the KL divergence to construct a lower bound for MI, and then maximizes this bound over a family of critic functions represented by deep neural networks. Related methods include the ``tractable unnormalized Barber--Agakov'' bound introduced in \cite{Poole2019OnVB} and the NWJ (Nguyen, Wainwright, Jordan) estimator of \cite{Nguyen2010_MI}. We refer to \cite{Poole2019OnVB} for links among these and other variational bounds as well as trade-offs between the bias and variance of the resulting EIG estimators. One key point is that the optimal critic functions in these bounds typically coincide with the log-conditional density $\log \pi_{Y \vert X}$, the log-evidence $\log \pi_Y$ and/or the log-density ratio $\log \pi_{X \vert Y}  - \log \pi_X  = \log \pi_{Y \vert X} - \log \pi_Y$, depending on the specifics of each construction.

The use of variational bounds is thus linked to density approximation, and an alternative approach is to optimize \fl{the deficit of a bound} over families of density functions (i.e., non-negative functions that integrate to one) rather than generic critic functions. In the context of OED, this approach was first suggested by \cite{Foster2019VariationalBO}, who construct a lower bound for EIG by approximating the posterior density $\pi_{X \vert Y}$ and an upper bound for EIG by approximating the marginal $\pi_Y$. \cite{Foster2019VariationalBO} demonstrate their approach using simple mean-field variational approximations in standard parametric families. Although such approximations are easy to implement, they do not, in general, yield tight bounds (e.g., if the true posterior or evidence is far from the chosen family). 
We also note that other estimators of EIG or \revise{MI} are based on explicit density \textit{ratio} estimation~\cite{qin1998inferences, density_ratio_estimation, bickel2007discriminative,chakraborty2024likelihoodfreeapproachgoalorientedbayesian}. In the machine learning literature, MI estimators that involve learning the density ratio are sometimes called `discriminative,' while those based on learning the relevant densities directly are called `generative'~\cite{letizia}. The comparative performance of these two classes of methods is in general not clear. Here, we will concentrate on the latter.

To this end, recent work has replaced the simpler density approximations of~\cite{Foster2019VariationalBO} with more flexible density approximations based on transportation of measure. For instance, \cite{Herrmann,dong2024variational,ihlerOEDNFs} use conditional normalizing flows \cite{pmlr-v37-rezende15,Papamakarios_NF} trained from samples of $\pi_{X,Y}$ to approximate the density of $\pi_{X \vert Y}$. 
\cite{Koval2024TractableOE} instead constructs functional tensor-train approximations~\cite{dirt} of the triangular Knothe--Rosenblatt (KR) transport maps using direct evaluations of the joint density $\pi_{X,Y}$, which is further extended by \cite{cui2025subspaceacceleratedmeasuretransport} (combined with dimension reduction) to the sequential design setting. The approach we develop in this paper is also based on transport. In contrast with~\cite{Koval2024TractableOE}, we estimate the KR map from samples, and then use the associated plug-in estimator of any desired marginal and/or conditional density to form an EIG approximation. Depending on the problem setting, this approximation may also be an upper or lower bound. Our KR approximation is built on a general representation of monotone triangular maps that employs an invertible rectification operator, proposed in \cite{baptista2020adaptive}; this operator transforms generic (non-monotone) functions $f$ into monotone invertible functions, in particular triangular diffeomorphisms. The resulting EIG approximations are thus parameterized by $f$, and in this sense our EIG approximation comes full circle and is analogous to a critic-based variational bound. Crucially, under appropriate assumptions on the target distribution, we can take advantage of theory in \cite{baptista2020adaptive} that guarantees recovery of the exact KR rearrangement and hence the exact marginals and conditionals. Thus, as the class of functions chosen to represent $f$ is enriched, our bounds on the EIG can become arbitrarily tight. A specific version of EIG estimators explored in this paper was also demonstrated in~\cite{baptista2022bayesian}.

% 3 densities can be represented more generally, by learning diffeomorphisms. Includes NFs, but also the rectification/transport framework here (which maps arbitrary functions to transport maps to densities). Rich enough: no bias. Bound arbitrarily tight. In some sense, full circle: we are back to learning arbitrary (non-monotone) functions!

% 4 Open question: sample allocation in these methods! We provide an asymptotic analysis of this question.
The methods discussed above can be viewed effectively as \emph{two-step procedures:} first, density approximations
%
% \ymtd{Omit [[bracketed phrase]]? I think that variational bounds like MINE and TUBA tend to put optimization outside of the expectation, and thus I'm not sure that they face the same sample allocation issue. Fengyi, what do you think? Instead, we are looking at methods where there is $ \mathbb{E} ( \max ( \mathbb{E} ) ) $, so optimization/estimation is inside of an outer expectation. FL: agreed!}
%
are learned from a set of ``training samples''; then, the outer prior expectation (e.g., in \eqref{eq:EIGpost} or \eqref{eq:EIGmarg}) is estimated via Monte Carlo with a separate set of ``evaluation samples.'' 
%
% \ymtd{One point to check, which I think is important: Couldn't one use the same set of samples for both steps? In other words, why not re-use the training samples as evaluation samples? Do the critic-based variational bounds effectively do this, in putting the optimization as the outer loop?}
% \ymtd{Placeholder to add a remark later, in Section 2: the critic-based variational bounds (like MINE) are one-step, in that they directly maximize something that is a bound rather than first minimize a deficit and plug the result of that minimization into another expectation to obtain the bound. So they don't face the same sample allocation question.}
%
% \todo{Hmmm good points! For the two notes above, I think intuitively, they (critic-based) are trying to optimize the bound, which is already a bound for MI. Here what we do is actually, minimizing the deficit (see line 354 on page 9). Because (MI bound) + (deficit) = true MI, so minimizing the deficit is equivalent to maximizing the MI bound I think? In terms of the format of the optimization problem, ours is a bit subtle to see the connection with MI. 
% But you are right, we can just use the same set of samples for both training and evaluation, although it might introduce some bias.  }
% \ymtd{Possible remark for Section 2: one could re-use the training samples as evaluation samples, but this will introduce additional bias.}
%
In many applications of OED, the generation of samples $(y^i, x^i) \sim \pi_{Y,X}$ is the most computationally expensive element of the problem, as simulating the data $Y \vert x^i$ requires evaluating a computationally intensive model. It is thus natural to ask how to make best use of a given sample budget---i.e., how to balance the training sample size $N$ and evaluation sample size $M$, given a total sample size $L = M+N$, to minimize the mean squared error of the EIG estimate. To our knowledge, this sample allocation question has not yet been analyzed in the literature. Yet it is analogous to one that has previously been posed for NMC estimators, where the inner-loop sample size (for estimating the evidence $\pi_Y(y)$ at any $y$) must be balanced with the outer-loop sample size to achieve an optimal convergence rate of the MSE with $L$ \cite{Tempone_NMC,Rainforth2018OnNM}. 
% \ymtd{Verify that this is the right Tempone reference} 
% \ymtd{Having now juxtaposed these two cases, it is interesting to consider why, when approximating the inner density function (let's say $\pi_Y(y)$, to make things concrete), the optimal sample allocation yields a different ultimate convergence rate than in the NMC setting. In other words, why is our problem different than what has been analyzed before for NMC? Putting our finger precisely on the source of the difference would be illuminating. I think the difference is that we estimate the inner evidence, for all values of $Y$, only once; NMC uses an independent estimator at each value of $Y$!}

%%%

We also note that most of the aforementioned EIG estimation methods, despite their widespread adoption, have only been applied in relatively low-dimensional regimes. Intuitively, these methods either explicitly 
% (variational methods, KSG estimators) 
or implicitly 
%(NMC methods and it variants) 
involve density estimation, and in the absence of further structure, the number of samples required for non-parametric or semi-parametric density estimation scales exponentially with dimension \cite{McDonald2017MinimaxDE, wang2022minimax}. 
Some papers on density ratio estimation do include high-dimensional examples, but these examples are either Gaussian \cite{choi2021} or the density ratio is parameterized by a simple exponential family \cite{density_ratio_estimation}. 
Yet a recurring structure in high-dimensional Bayesian inference problems is that the change from prior to posterior is well-captured by a \textit{low-dimensional subspace} of the parameter space \cite{LIS,zahm2022certified,brennan2020greedy, Cui_2021, cui2025subspaceacceleratedmeasuretransport}. Similarly, conditioning on a \textit{low-dimensional projection} of the realized observations $y$ might closely approximate the result of conditioning on the full vector $y$ \cite{giraldi,jayanth, li2024nonlinear, li2024new}. An information theoretic perspective on these projections is developed in \cite{Ricarod_dimRed}. In the present paper, we will use this perspective to create dimension reduction methods for OED, with \textit{error guarantees} on the estimated value of the EIG.

Prior work on dimension reduction for OED has taken several forms. \cite{omar_JUQ} uses Laplace approximations of the posterior, averaged over many realizations of the data, to approximate the EIG. These Laplace approximations are simplified using low-rank approximations of the prior-preconditioned Hessian of the log-likelihood, which is equivalent to parameter-space dimension reduction for linear (or linearized) problems~\cite{spantini2015}. \cite{cui2025subspaceacceleratedmeasuretransport} utilizes a likelihood-informed subspace to identify the informative direction, which is subsequently used to reduce the dimensionality of the parameters. Alternatively, in the setting of nonlinear inverse problems with additive noise, e.g., $Y = G(x) + \mathcal{E}$,~\cite{omar_oed_nn} constructs deep neural network approximations of the deterministic parameter-to-observable map $G$ after performing dimension reduction in both the input space (of parameters $X$) and output space (of observations $Y$). The resulting surrogate is then used in an NMC estimator of the EIG. 
%
% We will employ the same parameter-space projection in our transport-based EIG estimators, but pair it with a different projection of the observations that yields sharper error bounds. 
We employ the same parameter-space projection here, but a different projection of the observations that yields sharper error bounds. Moreover, instead of using NMC, we use both dimension reduction methods in transport-based EIG estimators. 

%%%%%%%%%%%%%%%%%%%%%%%%%%%%%%%%%%%%%%%%%%%%%%%%%%%%%

\subsection{Contributions} 
We summarize our main contributions as follows.
We first present a general formulation for EIG estimation using transport-based density estimates. Our formulation is applicable to each of the three settings identified above, including \revise{implicit Bayesian models~\cite{cranmer,Kleinegesse2019EfficientBE} and goal-oriented variants of the OED problem~\cite{li2019combinatorial, chakraborty2024likelihoodfreeapproachgoalorientedbayesian}}. This formulation encompasses \textit{any} method for transport-based marginal and conditional density estimation, including conditional normalizing flows~\cite{condNFs} or conditional optimal transport~\cite{wang2024COT}, though we will demonstrate it here using an adaptive representation of monotone triangular maps that provides certain approximation guarantees~\cite{baptista2020adaptive}.

% One of our goals is to demonstrate the flexibility and generality of this approach, and its applicability to each of the three settings identified above---which include implicit models and goal-oriented variants of the EIG estimation problem.

% In this paper, we formulate the problem of EIG estimation based on transport-based density estimation in full generality, applicable to both standard and implicit Bayesian models. 

% A specific version of this EIG estimator was demonstrated in parallel work~\cite{baptista2022bayesian}. This formulation encompasses several recent methods, and we demonstrate it here with a particular transport map (TM) representation. In a parallel work, \cite{Koval2024TractableOE} also propose using TM for Bayesian optimal experimental design (BOED), where the densities are estimated using tensor train decomposition.  \todo{subtle difference: not fully implicit model?} However, the paper does not consider a fully implicit model—the prior distribution is treated as known. Moreover, instead of focusing on the estimation of EIG and its extension to high-dimensional settings, they concentrate on studying the optimal design itself.

As our first technical contribution, we analyze the underlying problem of sample allocation and convergence rates for such EIG estimators. Our results apply to any parametric density estimation scheme used within the EIG estimation framework. We develop an asymptotically optimal allocation of samples, between density estimation and the estimation of the outer expectation, to balance the decay rates of bias and variance and thus maximize the convergence rate of the mean squared error (MSE) of the EIG estimate. With this allocation, we show that this class of estimators converges at a \textit{faster} rate than an optimally-balanced NMC scheme.  
We numerically verify the conditions yielding optimal rates, in a simple linear-Gaussian setting where the exact EIG is available analytically. As a demonstration, we then evaluate the performance and versatility of density-based EIG estimation algorithms for a non-Gaussian problem with various EIG objectives.

Our second technical contribution is to combine dimension reduction with transport-based EIG estimation to construct estimators of EIG in high dimensions. Our proposed dimension reduction scheme relies on minimizing an \textit{upper bound} for the error in EIG induced by projecting both the parameters $X$ and observations $Y$ to lower-dimensional subspaces. The scheme for computing these projections originates in \cite{Ricarod_dimRed}, where it was used for the purpose of posterior approximation, but this information theoretic perspective on dimension reduction is a natural fit for EIG in non-Gaussian settings. It seeks $(r \ll n_x)$-dimensional and $(s \ll n_y)$-dimensional projections $X_r$ and $Y_s$ of $X$ and $Y$, respectively, that minimize an upper bound for the information loss $\mathcal{I}(X; Y) - \mathcal{I}(X_r ; Y_s)$ at any $r$ and $s$. The preceding EIG estimation schemes can then be applied to the joint distribution of $X_r$ and $Y_s$. 
We show that this dimension reduction approach outperforms other linear approaches, such as principal component analysis (PCA) and canonical correlation analysis (CCA). We also demonstrate how this scheme allows one to \textit{trade off dimensionality with the modeling of non-Gaussianity}: in many EIG estimation problems it is better to truncate the representation of the parameters and observations in a suitable basis and capture the non-Gaussianity of the resulting $\pi_{X_r, Y_s}$, than it is to make cruder (e.g., Gaussian) approximations of the full distribution $\pi_{X,Y}$. 

The remainder of the paper is structured as follows. In Section \ref{sec:EIG_est}, we first present a general formulation of transport-based EIG estimation, where block-triangular transport maps are used to approximate the relevant marginal and/or conditional densities in Bayesian problem settings ranging from standard to simulation-based. We also show how maximum likelihood estimation of block-triangular transport maps corresponds to tightening the bounds on EIG in certain cases. In Section~\ref{sec:theory}, we present our results on optimal sample allocations and convergence rates. In Section~\ref{sec:high_dim}, we describe the construction of parameter-space and observation-space projections that minimize the loss of \revise{MI}, and show how these projections are applied to the estimation of EIG in high dimensions. Section \ref{sec:num_res} contains our numerical experiments, where we verify our results on optimal sample allocation using a linear Gaussian example, demonstrate EIG estimation in non-Gaussian settings, and explore the impact of dimension reduction and density approximations on EIG estimates in a high-dimensional PDE-constrained Bayesian inverse problem.

\section{EIG estimation using transportation of measure} \label{sec:EIG_est}

In this section, we first formulate upper and lower bounds for EIG based on density approximations % to several natural EIG estimators 
in Section~\ref{subsec:bounds}. Second, we show how transport-based density estimates can be used to realize these bounds and elucidate the link between maximum likelihood estimation and tightness of the bounds in Section~\ref{subsec:TM}. Lastly, we analyze the asymptotic convergence rate of this class of EIG estimators and derive an asymptotically-optimal rule for allocating samples between density estimation and the estimation of an outer prior expectation in Section~\ref{sec:theory}. 

% we first explore the upper and the lower bounds for EIG. By introducing these bounds, some EIG estimators naturally emerge. Transport maps can be then used for computing these estimators, which in this case, is equivalent to optimizing over the bounds. Moreover, we analyze the convergence rate of our proposed EIG estimator in an asymptotic sense. Specifically, we investigate the optimal allocation of the number of training samples and evaluation samples to minimize the MSE of the EIG estimator, assuming that the exact transport map can be perfectly represented by the chosen parametrization family. The roadmap of this section is as follows. In~\ref{subsec:bounds}, we introduce the EIG bounds and the estimators of interest. We then provide background of transport maps in~\ref{subsec:TM}. We finally present the analysis of the optimal sample allocation in~\ref{sec:theory}. 

\subsection{EIG bounds and estimators} \label{subsec:bounds}
Recall the expressions~\eqref{eq:EIGpost} and~\eqref{eq:EIGmarg} for the EIG from prior to posterior: the first involves the normalized posterior density $\pi_{X|Y}$ and the second involves the evidence $\pi_Y$. In most settings, neither of these densities can be evaluated exactly. Instead we can only evaluate approximations, denoted by $\widetilde \pi_{X|Y}$ and $\widetilde \pi_Y$, respectively. \emph{Any} such approximating densities yield lower and upper bounds, respectively, for the EIG:
\begin{align}
  \mathbb E_{\pi_{X,Y}}\left[\log \frac{\widetilde \pi_{X|Y}( X | Y)}{\pi_X(X)} \right] 
     &\leq \mathbb E_{\pi_{Y}} \left[ \Dkl(\pi_{X|Y}||\widetilde \pi_{X|Y} )\right] + \mathbb E_{\pi_{X,Y}}\left[\log \frac{\widetilde \pi_{X|Y}( X | Y)}{\pi_X( X)}\right]  \nonumber \\ 
     & =\mathrm{EIG} \nonumber\\
     & =\mathbb E_{\pi_{X,Y}}\left[ \log\frac{\pi_{Y|X}( Y | X)}{\widetilde \pi_Y( Y )}\right] - \Dkl\left(\pi_Y||\widetilde \pi_Y\right)  \leq  \mathbb E_{\pi_{X,Y}} \left[\log \frac{\pi_{Y|X}( Y | X)}{\widetilde \pi_Y( Y)} \right]   \label{bounds_EIG}
\end{align}
We refer readers to~\cite{Foster2019VariationalBO,Poole2019OnVB} for the derivation. We note that equality holds in \eqref{bounds_EIG} if and only if the relevant approximation is exact, i.e., if $\widetilde \pi_{X|Y} = \pi_{X| Y}$ or if $\widetilde \pi_Y = \pi_Y$, and that the deficits in the inequalities are precisely $\mathbb E_{\pi_Y}\left[\Dkl( \pi_{X|Y} || \widetilde{\pi}_{X|Y}) \right]$ and $\Dkl(\pi_Y || \widetilde{\pi}_Y)$. Therefore, approximating the EIG by maximizing the lower bound or minimizing the upper bound is equivalent to finding good approximations in this Kullback--Leibler sense. 
Given the appropriate density approximations and $M$ i.i.d.\ joint samples $\{x^i, y^i\}_{i=1}^M \sim \pi_{X,Y}$, we can also construct Monte Carlo estimators of the lower or upper bounds that satisfy the inequalities in \eqref{bounds_EIG} with high probability; we will define such estimators explicitly below.

% \fl{ $\widehat{\mathrm{EIG}}_{\mathrm{pos}} \leq \mathrm{EIG} \leq \widehat{\mathrm{EIG}}_{\mathrm{m}}$ with high probability (see~\eqref{eig2} and ~\eqref{eig1} for  definitions of $\widehat{\mathrm{EIG}}_{\mathrm{pos}}$ and $\widehat{\mathrm{EIG}}_{\mathrm{m}}$).}

% the following inequality (with high probability, for large $M$), 
% \begin{align}
%     \frac1M \sum_{i = 1}^M \log \frac{\widetilde \pi_{X| Y} ( x^i| y^i)}{\pi_X( x^i)}  \le \mathrm{EIG} \le \frac 1M \sum_{i = 1}^M \log \frac{\pi_{Y | X}( y^i| x^i)}{\widetilde \pi_Y( y^i)}. \label{eq:stochasticbounds_EIG}    
% \end{align}
% \ymtd{Is there some alternative to the $\leq$ notation that reflects that fact that these inequalities only hold with high probability for large $M$?}
% \todo{RSB: Given that we define the sample estimates of the upper and lower bounds below, how about removing this display and just saying in words that this inequalities can hold with high probability, as we do above?}
%\ymtd{Now using $M$ and $N$ for sample sizes. @FL: could you please propagate this change everywhere below?}
If we replace the prior or likelihood in equations~\eqref{eq:EIGpost} or~\eqref{eq:EIGmarg} with approximations $\widetilde{\pi}_X$ and $\widetilde{\pi}_{Y|X}$, the expected information gain (EIG) is neither upper-bounded nor lower-bounded \cite{Foster2019VariationalBO}. Specifically, neither $\mathbb{E}_{\pi_{X,Y}} \left[\log \frac{\widetilde{\pi}_{X|Y}(x|y)}{\widetilde{\pi}_X(x)} \right]$ nor $\mathbb{E}_{\pi_{X,Y}} \left[\log \frac{\widetilde{\pi}_{Y|X}(y|x)}{\widetilde{\pi}_Y(y)} \right]$ serves as a bound for EIG. Furthermore, in cases where both the prior \textit{and} likelihood are intractable, it becomes necessary to approximate both the numerator and denominator densities. This requires using Monte Carlo estimators for $\mathbb{E}_{\pi_{X,Y}} \left[\log \frac{\widetilde{\pi}_{X|Y}(x|y)}{\widetilde{\pi}_X(x)} \right]$ and $\mathbb{E}_{\pi_{X,Y}} \left[\log \frac{\widetilde{\pi}_{Y|X}(y|x)}{\widetilde{\pi}_Y(y)} \right]$. While these estimators can be consistent, they do not provide (stochastic) bounds for EIG.

To summarize, evaluating EIG according to the expressions above involves: (i) approximating a conditional and/or a marginal density, depending on what problem information is \textit{a priori} available (e.g., samples versus the ability to evaluate exact densities); and (ii) using samples $\{x^i, y^i\}_{i=1}^M$ from $\pi_{X,Y}$ to approximate the outer expectation. We then have four estimators that are useful in practice:
\begin{align}
    \widetilde{\mathrm{EIG}}_{\mathrm{m}} &\coloneqq \frac 1M \sum_{i = 1}^M\log \frac{\pi_{Y|X}( y^i| x^i)}{\widetilde \pi_Y( y^i)},  \label{eig1}\\
    \widetilde{\mathrm{EIG}}_{\mathrm{pos}} &\coloneqq \frac 1M \sum_{i = 1}^M\log \frac{\widetilde \pi_{X|Y}( x^i| y^i)}{\pi_X( x^i)},  \label{eig2}\\
   \widetilde{\mathrm{EIG}}_{\mathrm{lik}} &\coloneqq \frac 1M \sum_{i = 1}^M\log \frac{\widetilde \pi_{Y|X}( y^i| x^i)}{\widetilde \pi_Y( y^i)},  \label{eig3}\\
    \widetilde{\mathrm{EIG}}_{\mathrm{pr}} &\coloneqq \frac 1M \sum_{i = 1}^M\log \frac{\widetilde \pi_{X|Y}( x^i| y^i)}{\widetilde \pi_X( x^i)}.  \label{eig4}
\end{align}
The first two provide stochastic upper and lower bounds as described in~\eqref{bounds_EIG}. The last three can be used in the likelihood-free setting, as the expressions do not involve evaluating the likelihood function $\pi_{Y|X}$; the second, however, does require evaluating the prior density. \revise{To be precise, all four estimators are applicable under Case~\ref{assump1}. Under Case~\ref{assump2}, the estimators $\widetilde{\mathrm{EIG}}_{\mathrm{pos}}$, $\widetilde{\mathrm{EIG}}_{\mathrm{lik}}$, and $\widetilde{\mathrm{EIG}}_{\mathrm{pr}}$ remain applicable, whereas under Case~\ref{assump3}, only $\widetilde{\mathrm{EIG}}_{\mathrm{lik}}$ and $\widetilde{\mathrm{EIG}}_{\mathrm{pr}}$ are applicable. Note that NMC is only applicable in Case~\ref{assump1}.} In the next section, we will introduce an approach to compute these four EIG estimators using transport maps.

\subsection{Transport maps for density estimation} \label{subsec:TM}
Measure transport provides an expressive and infinitely refinable way of approximating complex probability densities. Here we give a brief and intuitive introduction to this approach; for more details, we refer the reader to \cite{Marzouk2016,spantini2018jmlr,baptista2020adaptive}. To illustrate key ideas, we first set aside the OED problem and the specific probability distributions it entails, returning to them in Section~\ref{subsubsec:marg_cond_est}.

Let $\pi$ denote an intractable target distribution that we would like to characterize, and let $\eta$ denote a reference distribution whose density we can easily evaluate---for example, a standard Gaussian. We would like to find a map $S\colon \mathbb{R}^{n} \rightarrow \mathbb{R}^n$ such that $\eta(A) = \pi(S^{-1}(A))$ for any measurable set $A$. 
% \ymtd{I flipped around the definition here; earlier you had transposed $\pi$ and $\eta$. For the notion of a pushforward, we want to define $\eta$ in terms of $\pi$!}
A map that satisfies this property is said to \emph{push forward} $\pi$ to $\eta$, or equivalently to \emph{pull back} $\eta$ to $\pi$. The existence of $S$ is guaranteed when both $\pi$ and $\eta$ are absolutely continuous with respect to Lebesgue measure~\cite{Santambrogio}. 
% \ymtd{Check: conditions for existence of the pushforward and pullback.} 
We use subscript and superscript $\sharp$ symbols to denote the pushforward and pullback operations, respectively:
\begin{align*}
    & S_{\sharp} \pi = \eta \\
    & S^{\sharp} \eta= \pi.
\end{align*}
When $S$ is invertible and sufficiently smooth, we can write the target density $\pi$ explicitly in terms of the reference density using the change-of-variables formula\footnote{Here, since all probability measures are assumed to have Lebesgue densities, we abuse notation by using the same symbol to denote either object.}
\begin{align} 
    \pi(z) = S^{\sharp} \eta( z) = \eta \circ S( z) \left| \det \nabla S(z) \right|. \label{pullback}
\end{align}
This expression lets us estimate the density $\pi$ simply by plugging in an estimate for the map $S$~\cite{wang2022minimax}. Moreover, the map allows us to sample from the target distribution by evaluating the inverse map $S^{-1}(z^i)$ at reference samples $z^i \sim \eta$.

For absolutely continuous measures $\eta$ and $\pi$, there are infinitely many transport maps coupling one to the other. For the purpose of EIG estimation, we are particularly interested in marginal and conditional density estimation. A triangular transport map, known as the Knothe--Rosenblatt (KR) rearrangement~\cite{Villani2008OptimalTO,Santambrogio, bogachev}, is particularly well suited to this goal. The KR rearrangement is the unique\footnote{Up to re-labeling/re-ordering of the coordinates.} map $S$ satisfying $S_\sharp \pi = \eta$ that has the \textit{lower-triangular} structure
\begin{align}
    S( z) = 
    \begin{bmatrix*}[l]
    S^1(z_1)\\
    S^2(z_1, z_2)\\
    \vdots \\
    S^n(z_1, z_2, \ldots, z_n)\\
    \end{bmatrix*}, \label{triangular_map}
\end{align}
and where each component function $S^k: \R^k \to \R$ is monotone in its last argument, i.e., $z_k \mapsto S^k(z_1, \ldots, z_k)$ is monotone (increasing) for all $(z_1, \ldots, z_{k-1}) \in \R^{k-1}$, $k=1,\ldots,n$. Component functions of the triangular map correspond to conditionals of $\pi$ \cite{Santambrogio}, and thus this map immediately provides conditional density estimates using an adaptation of~\eqref{pullback}, which we will detail below. Strict monotonicity guarantees that the map $S$ is invertible, and evaluations of the inverse map $S^{-1}$ essentially involve a sequence of $n$ univariate root-finding problems. Moreover, the Jacobian determinant of $S$ is easy to evaluate as $\det \nabla S( z) = \prod_{k = 1}^n \partial_{k} S^k(z_{1:k})$. 

Although we have the freedom to choose $\eta$ to be any distribution on $\R^n$, putting $\eta$ equal to the standard normal distribution $\Normal(0, I_n)$ substantially simplifies the problem of estimating the map $S$, which we discuss next.

% \ymtd{There is some commented-out material on sampling. I don't think it's too relevant, as to my knowledge we do not use the map to draw samples (e.g., by evaluating $S^{-1}$). We are only interested in density estimation with maps. Let me know if that is incorrect, though.}

% This map allows us to sample from $\eta$ by drawing samples $ z_\pi \sim \pi$, and computing $S(z_\pi)$ or to sample from $\pi$ by drawing samples $ z_\eta \sim \eta$ and computing $S^{-1}( z_\eta)$ if $S$ is invertible. That is 
% \begin{align}
%      z_\pi \sim \pi, &\quad S(z_\pi) \sim \eta, \\
%      z_\eta \sim \eta, &\quad  S^{-1}(z_\eta) \sim \pi.
% \end{align}

% Given the monotone triangular structure as in \eqref{triangular_map}, where the $k$-th row of $S$ is a continuously differentiable function that maps a vector from $\mathbb R^k$ to $\mathbb R$. The function $z_k \rightarrow S^k(z_1, \cdots, z_k)$ is strictly monotone for all $(z_1, \cdots, z_{k-1}) \in \mathbb R^{k-1}$. Note that this monotone structure enables us to compute its inverse and the determinant of the Jacobian as $\det \nabla S( z) = \prod_{k = 1}^d \partial_k S^k(z_{1:k})$.

\subsubsection{Maximum likelihood estimation of transport maps} \label{subsubsec:TM_opt}
Here we formulate an optimization method for identifying lower triangular transport maps $S$, given a set of samples drawn from $\pi$. One approach is to minimize the KL divergence from $S^\sharp\eta$ to $\pi$ over a class of monotone triangular functions $\mathcal S$~\cite{Marzouk2016}, i.e., 
\begin{align}
    S_{\OPT} = \argmin_{S \in \mathcal S} \Dkl(\pi || S^\sharp \eta). \label{opt_KL}
\end{align}
Choosing $\eta$ to be the standard Gaussian distribution, we can then write the objective in~\eqref{opt_KL} as 
\begin{align*}
    \Dkl(\pi||S^{\sharp}\eta) &= \mathbb E_{z \sim \pi}\left[ \log  \frac{\pi(z)}{S^{\sharp} \eta(z)}\right]\\
     &= \mathbb E_\pi \left[\log \pi \right] + \frac n2 \log(2\pi) + \sum_{k = 1}^n \mathbb E_{z \sim \pi} \left[\frac 12 S^k(z_{1:k})^2 - \log \partial_k S^k( z_{1:k}) \right].
\end{align*}
Since the first two terms are independent of $S$, and each term in the summation depends only on $S^k$, the optimization problem is separable; we can learn each map component independently by minimizing the objective
\begin{align}
    J_k(S^k) \coloneqq \mathbb E_\pi \left[\frac 12 S^k( z_{1:k})^2 - \log \partial_k S^k( z_{1:k}) \right], \label{opt_prob1}
\end{align} 
for $k=1, \ldots, n$. Note that $J_k(S^k)$ is convex in $S^k$ for every $k$. Therefore the optimization problem \eqref{opt_KL}, subject to the linear constraints $\partial_k S^k( z_{1:k}) > 0$ for $k=1, \ldots, n$, is a convex problem. In practice, we replace the expectation in \eqref{opt_prob1} with its empirical approximation,
\begin{align}
    \widehat J_k(S) = \frac 1N \sum_{i = 1}^N \left[\frac 12 S( z^i_{1:k})^2 - \log  \partial_k S( z^i_{1:k}) \right]  \label{opt_prob2},
\end{align}
where $\{z^i\}_{i=1}^N$ are i.i.d.\ samples from $\pi$. Let $\widehat S^k_{\OPT} =  \argmin_{S^k\in \mathcal S^k} \widehat J_k(S^k)$, where $\mathcal{S}^k$ is a space of maps from $\R^k$ to $\R$ satisfying the monotonicity constraint $\partial_k S^k >0$. As explained in \cite{baptista2020adaptive,wang2022minimax}, each $\widehat S^k_{\OPT}$ is a \textit{maximum likelihood} estimate of the corresponding map component $S^k$, and concatenating these functions yields a maximum likelihood estimate of the entire triangular map: $\widehat{S}_{\OPT} = (\widehat S^1_{\OPT}, \widehat S^2_{\OPT}, \ldots, \widehat S^n_{\OPT})$. 

We can then form a plug-in estimate of the density $\pi$: $\widehat \pi( z) \coloneqq \widehat S^\sharp_{\OPT} \eta( z)$. Thus $\widehat \pi$ is a transport-induced estimate of the target density, given $N$ samples.

% Here and throughout the paper, we use $\widehat{\pi}$ (with its associated subscript) to denote the estimate of $\pi$ from samples.
% The solution to the optimization problem $\widehat S^k =  \argmin_{S\in \mathcal S^k} \widehat J_k(S)$, subject to the constraint $\partial_k S(z_{1:k})>0$ can then be used to approximate the target distribution, i.e. 

\subsubsection{Parameterization and the choice of basis functions}\label{subsubsec:ATM}

Solving the maximum likelihood estimation problem above requires parameterizing classes of monotone maps; specifically, we must explicitly construct classes $\mathcal{S}^k$ of map components $S^k \colon \R^k \to \R$ that are strictly increasing in their last argument, i.e., satisfying $\partial_k S^k(x_{1:k}) > 0, \ \forall x_{1:k} \in \R^k$. Here we do so by expressing $S^k$ as the transformation of an arbitrary (non-monotone) function $f \colon \mathbb{R}^k \rightarrow \mathbb{R}$ through the operator $\mathcal{R}_k$:
\begin{equation}
    S^k(z_{1:k}) = \mathcal{R}_k(f)( z_{1:k}) \coloneqq f( z_{1:k-1},0) + \int_0^{ z_k} g(\partial_k f( z_{1:k-1},t)) dt,
\end{equation}
where $g: \R \to \R_{> 0}$ is strictly positive and bijective, and satisfies further regularity conditions detailed in~\cite{baptista2020adaptive}. We
substitute this representation of each $S^k$ into \eqref{opt_prob2} and solve the resulting \textit{unconstrained} minimization problem, $\min_f \widehat{J}_k(\mathcal{R}_k(f)) $ over some linear space of functions $\mathcal{V}_k \ni f$. As explained in~\cite{baptista2020adaptive}, this construction provides important optimization guarantees: there are no spurious local minima (every local minimizer $f^\ast$ is global minimizer) and under certain tail conditions on $\pi$ and $f$, we provably recover the true KR rearrangement.

Here, we parameterize $f$ using a linear expansion of tensor-product Hermite polynomials $\psi_{\alpha}$ as $f( z_{1:k}) = \sum_{\alpha} c_{\alpha} \psi_{\alpha}( z_{1:k})$ and solve for its coefficients $c_{\alpha}$. We refine this polynomial space using a greedy adaptive procedure and apply cross-validation
%\todo{RB: Did we use a single validation set or cross-validation. FL: we used cross validation here.} 
to select the total number of polynomial basis functions to retain. We refer to~\cite{baptista2020adaptive} for complete details on these numerical procedures. 

We note also that while our focus thus far in Section~\ref{subsec:TM} and Section~\ref{subsubsec:ATM}, and in the numerical experiments reported later in this paper, is on \textit{strictly} triangular maps, we will see in Section~\ref{subsubsec:marg_cond_est} that \textit{block triangular} maps as described in \cite{baptista2024conditional} are also suitable for estimating the marginal and conditional densities relevant to the OED problem. In contrast with the strictly triangular case, there is no unique choice of block-triangular map. One natural choice is a conditional Brenier map, as described in \cite{carlier_brenier,baptista2024conditional,wang2024COT}. Another broad family of choices are given by conditional normalizing flows, which are specific parameterizations of the bottom block of a block-triangular map. We use strictly triangular maps in our numerical experiments because the associated computations are relatively fast, and because they enjoy the optimization guarantees mentioned above; to our knowledge, such guarantees are not available for other continuous map representations.
% \ymtd{Check this addition.}

\subsubsection{From marginal and conditional density estimation to EIG bounds}\label{subsubsec:marg_cond_est}
In the previous discussion, we let $z$ be a generic vector in $\mathbb R^n$. Now let $z = (y,  x)$ where $ x\in \mathbb R^{n_x}$ and $ y \in \mathbb R^{n_y}$. Given samples from the joint distribution $\pi_{X,Y}$, we consider the problem of estimating the densities $\pi_Y$ and $\pi_{X|Y}$. To do so, we construct a \textit{block lower-triangular} map $S: \mathbb R^{n_y + n_x} \rightarrow \mathbb R^{n_y + n_x}$ of the form: 
\begin{align*}
     S(y, x) = \begin{bmatrix*}[l]
    S^{\mathcal Y}(y)\\
    S^{\mathcal X}(y, x)\\
    \end{bmatrix*},
\end{align*}
where $S^{\mathcal Y}\colon \R^{n_y} \rightarrow \R^{n_y}$ and $S^{\mathcal X} \colon \R^{n_y + n_x} \to  \R^{n_x}$. We choose $\eta$ to be standard Gaussian on $\R^{n_y + n_x}$. Our parameterizations ensure that $S^\mathcal{Y}$ is invertible and that $\nabla_y S^\mathcal{Y}$ exists, and similarly that $x \mapsto S^\mathcal{X}(y, x)$ is invertible and that $\nabla_x S^\mathcal{X}(y, x)$ exists for every $y \in \R^{n_y}$; this applies specifically to the numerical representations of triangular maps mentioned above. 
Note that such properties are satisfied almost everywhere by the KR map for absolutely continuous distributions with full support; see \cite[Sec.\ 3]{spantini2018jmlr} and \cite[Sec.\ 2.3]{Santambrogio} for a discussion.

If we select our reference distribution $\eta$ to be \textit{any} distribution whose density factorizes as $\eta(x,y) = \eta_X(x)\eta_Y(y)$ (which clearly includes the standard normal), \cite[Theorem 2.4]{baptista2024conditional} shows that we can express the marginal density of $Y$ as $\pi_Y = (S^{\mathcal Y})^{\sharp} \eta_Y$ and the posterior density of $X$ as $\pi_{X|Y=y} = S^{\mathcal X}( y, \cdot)^\sharp \eta_X$ for any $y \in \R^{n_y}$. 

% As we have seen in the previous part, using \eqref{opt_prob2}, we can estimate the map $S$ by minimizing the empirical average,
% \begin{align*}
%     (\widehat S^{\mathcal Y})^k = \argmin_S \widehat J_k^{\mathcal Y}(S) = \argmin_S \frac 1N \sum_{i = 1}^N \left( \frac 12 S( y_{1:k}^i)^2 - \log |\partial_{k} S( y_{1:k}^i)|\right),
% \end{align*}
% subject to the constraints that $\partial_{k} S( y^i_{1:k}) > 0$, and
% \begin{align*}
%     (\widehat S^{\mathcal X})^{n_y + k} = \argmin_S \widehat J_{n_y + k}^{\mathcal X}(S) = \argmin_S \frac 1N \sum_{i = 1}^N \left( \frac 12 S( y^i, x^i_{1:k})^2 - \log |\partial_{n_y+k} S( y^i,  x^i_{1:k})|\right),
% \end{align*}
% subject to the constraints that $\partial_{n_y+k} S( y^i,  x^i_{1:k}) > 0$. Here $\{ y^i,  x^i\}_{i=1}^N$ are i.i.d samples from the joint $\pi_{Y,X}$. The optimizer $\widehat S^{\mathcal Y}$ and $\widehat S^{\mathcal X}$ can then be used to approximate the marginal $\widehat \pi_{Y}=  (\widehat S^{\mathcal Y})^{\sharp}\eta_Y$ and the posterior $\widehat \pi_{X|Y = y}:= \widehat S^{\mathcal X}( y,\cdot)^{\sharp}\eta_X$. 

By reversing the ordering of $x$ and $y$, we obtain another lower-triangular map %is %given by
\begin{align*}
     U(x, y) = 
    \begin{bmatrix*}[l]
    U^{\mathcal X}( x)\\
    U^{\mathcal Y}( x, y)\\
    \end{bmatrix*}.
\end{align*}
We can then exchange the roles of $y$ and $x$ and repeat the procedure above to approximate the prior $\pi_X$ and the likelihood $\pi_{Y|X}$. Thus, the two maps with both variable orderings provide all four density estimates, $\widehat{\pi}_X$, $\widehat{\pi}_{X|Y}$, $\widehat{\pi}_Y$, and $\widehat{\pi}_{Y|X}$, as follows:
\begin{align*}
    \widehat \pi_X &= (\widehat U^{\mathcal X})^\sharp \eta_X, \\
    \widehat \pi_{X|Y = y} &= \bigl(\widehat S^{\mathcal X}( y,\cdot)\bigr)^\sharp \eta_X, \\
     \widehat \pi_Y &= (\widehat S^{\mathcal Y})^\sharp \eta_Y, \\
     \widehat \pi_{Y|X = x} &= \bigl(\widehat U^{\mathcal Y}( x,\cdot) \bigr)^\sharp \eta_Y.
\end{align*}
where we use $\widehat{S}$ and $\widehat{U}$ to denote maximum likelihood estimates of $S$ and $U$. These density estimates can then be substituted into any of \eqref{eig1}--\eqref{eig4}. We summarize the proposed EIG estimation scheme in Algorithm~\ref{EIG_alg}.

% \todo{How do you both feel about using $L$ rather than $L$ for the total number of samples? $L$ seems on closer footing to $M,N$.}
\begin{algorithm}[!ht]
\caption{EIG estimation using transport maps}
\label{EIG_alg}
\begin{algorithmic}[1]
\STATE{\textbf{Input}: Partition the i.i.d.\ pairs of joint samples $\{(x^i,y^i)\}_{i=1}^L \sim \pi_{X,Y}$ into $N$ training and $M$ evaluation samples, such that $L = N+M$}
\STATE{\textbf{Output}: EIG estimator}
\STATE{Use $N$ training samples to learn the transport maps $\widehat{S}$ and/or $\widehat{U}$, and form plug-in estimates of densities that are not available in closed form, e.g.,  $\widehat{\pi}_Y$, $\widehat{\pi}_{Y|X}$, $\widehat{\pi}_X$, and/or $\widehat{\pi}_{X|Y}$.}
\STATE{Use $M$ pairs of samples to estimate EIG by evaluating the empirical averages $\frac 1M \sum_{i = 1}^M \log \frac{\pi_{Y|X}( y^i| x^i)}{\widehat \pi_Y( y^i)} $, $\frac 1M \sum_{i = 1}^M \log \frac{\widehat \pi_{X|Y}( x^i| y^i)}{\pi_X( x^i)} $, $\frac 1M \sum_{i = 1}^M  \log \frac{\widehat \pi_{Y|X}( y^i| x^i)}{\widehat \pi_Y( y^i)} $, and/or $\frac 1M \sum_{i = 1}^M \log \frac{\widehat \pi_{X|Y}( x^i| y^i)}{\widehat \pi_X( x^i)} $. }
\end{algorithmic}
\end{algorithm}

Now we make the crucial observation that solving the optimization problem \eqref{opt_KL} is equivalent to tightening the bounds in \eqref{bounds_EIG}. Observe that the objective of \eqref{opt_KL} separates:
\begin{align*}
    \Dkl\left( \pi_{Y,X}||S^\sharp \eta_{Y,X}\right) 
     = & \mathbb E_{\pi_{Y,X}} \left[ \log \frac{\pi_{Y,X}(Y,X)}{ (S^\sharp \eta_{Y,X} )(Y,X)}\right]\\
     = & \mathbb E_{\pi_{Y}} \left[ \log \frac{\pi_Y(Y)}{ (S^{{\mathcal{Y}, \sharp }}\eta_Y )(Y)}\right] + 
     \mathbb E_{\pi_{Y,X}}\left[ \log \frac{\pi_{X|Y}(X|Y)}{\left(S^{\mathcal{X}} (Y, \cdot)^\sharp\eta_X \right)(X)}\right]\\
     = & \Dkl\left(\pi_Y || S^{{\mathcal{Y} \sharp }}\eta_Y \right)
     + \mathbb E_{\pi_{Y}} \left[ \Dkl \left(\pi_{X|Y}||S^{\mathcal{X}} (y, \cdot)^\sharp\eta_X \right) \right].
\end{align*}
Therefore, finding $\argmin_{S \in \mathcal{S}} \Dkl\left( \pi_{Y,X}||S^\sharp \eta_{Y,X}\right)$ is equivalent to solving both
\begin{align*}
    \min_{S^\mathcal{X} \in \mathcal{S}^\mathcal{X}}\mathbb E_{\pi_{Y}} \left[ \Dkl \left(\pi_{X|Y}||S^{\mathcal{X}} (y, \cdot)^\sharp\eta_X \right) \right]
\ \ \text{and} \ \     \min_{S^\mathcal Y \in \mathcal{S}^\mathcal{Y}} \Dkl\left(\pi_Y || S^{{\mathcal{Y}}^{\sharp }}\eta_Y \right),
\end{align*}
where $\mathcal{S} = \mathcal{S}^\mathcal{Y} \times \mathcal{S}^\mathcal{X} $. As noted in~\ref{subsec:bounds}, these two terms are \textit{precisely} the deficits of the upper and lower bounds in~\eqref{bounds_EIG}. Therefore, given a particular parameterization of the transport map, minimizing these two quantities is equivalent to maximizing and minimizing, respectively, the lower and upper bounds in~\eqref{bounds_EIG}. Maximum likelihood estimation of the maps $S^\mathcal{X}$ and $S^\mathcal{Y}$ is a sample approximation of these bound-tightening problems. 

\subsection{Convergence analysis and optimal sample allocation}
\label{sec:theory}
% saving this here, will use shortly
% The methods discussed above can be viewed effectively as \emph{two-step procedures:} first, density approximations are learned from a set of ``training samples''; then, the outer prior expectation (e.g., in \eqref{eq:EIGpost} or \eqref{eq:EIGmarg}) is estimated via Monte Carlo with a separate set of ``evaluation samples.''  \ymtd{Placeholder to add a remark later, in Section 2: the critic-based variational bounds (like MINE) are one-step, in that they directly maximize something that is a bound rather than first minimize a deficit and plug the result of that minimization into another expectation to obtain the bound. So they don't face the same sample allocation question.}

Given that the transport maps are estimated using $N$ training samples (see~\eqref{opt_prob2}), and the empirical average is computed using $M$ evaluation samples (see~\eqref{eig1},~\eqref{eig2},~\eqref{eig3},~\eqref{eig4}), we aim to investigate the optimal sample allocation between the number of training samples $N$ and the number of evaluation samples $M$, given a total of $L = N+M$ samples, with the objective of minimizing the MSE of the EIG estimator. Here, we use $\widehat{\mathrm{EIG}}_{M,N}$ to indicate that the EIG estimator is obtained using $N$ training samples and $M$ evaluation samples, \revise{which are assumed to be independent}. For an EIG estimator $\widehat{\mathrm{EIG}}_{M,N}$, the bias is defined as 
\begin{align*}
    \mathrm{bias} = \mathbb E \left[ \widehat{\mathrm{EIG}}_{M,N} - \mathrm{EIG}\right],
\end{align*}
where the expectation is taken over all the non-deterministic terms: the random map obtained through $N$ training samples and Monte Carlo sum calculated using $M$ evaluation samples. We denote the variance of the EIG estimator by 
\begin{align*}
    \mathrm{variance} = \mathbb{V}\left[ \widehat{\mathrm{EIG}}_{M,N}\right].
\end{align*}
Recall that the MSE is defined as 
\begin{align*}
    \mathrm{MSE} = \mathrm{bias}^2 + \mathrm{variance}.
\end{align*}
Now consider the transport map $S$ parameterized by coefficients $\alpha \in \R^p$, with $\alpha^\ast$ being the optimizer that minimizes the exact KL divergence. That is,
\begin{align}
    \alpha^* &= \argmin_\alpha \Dkl \left(\pi || S^\sharp_\alpha \eta \right) = \argmax_\alpha \mathbb E_\pi\left[ \log S_\alpha^\sharp \eta(z)\right],
    \label{eq:bestapprox}
\end{align}
where $\pi$ is a generic probability measure. 

To simplify the exposition, we first consider estimating only the marginal distribution $\pi_Y$. Let $\pi_Y(y)$ denote the true density evaluated at $y$. Let 
\begin{align}\label{eq:parametric_Qy}
    \mathcal{Q}^{\mathcal Y} = \{q_Y( \cdot \, ; \alpha): \R^{n_y} \to \R_+, \  \alpha \in \R^p \} 
\end{align} 
be a parametric class of densities within which we will approximate $\pi_Y$. \fl{For example, in the specific case of transport-based approximations, an element $q_Y \in Q^{\mathcal Y}$ can be written as the pullback of $\eta_Y$ by $S^{\mathcal Y}_\alpha$, i.e., $q_Y( y; \alpha) = ( (S_\alpha^{\mathcal{Y}})^\sharp \eta_Y)(y)$, where $S_\alpha^{\mathcal Y}$ belongs to the parametric class of transport maps $\mathcal S = \{S_\alpha: \R^{n_y} \to \R^{n_y}, \  \alpha \in \mathbb R^p\}$ and $\eta_Y$ is the standard normal distribution.}
% In the specific case of transport-based approximations, $\mathcal{Q}^{\mathcal Y}$ follows from a parametric class of transport maps $\mathcal{S} = \{ S_\alpha : \alpha \in \R^p \}$, where $q_Y( \cdot ; \alpha)$ is the pullback of $\eta_Y$ by $S^{\mathcal Y}_\alpha$, i.e., $(S_\alpha^{{\mathcal Y}^\sharp} \eta_Y)(y) = q_Y( y; \alpha)$. 
The results below, however, apply to any class of densities parameterized by $\alpha$.

Given i.i.d.\ samples $\{y^i\}_{i=1}^N\sim \pi_Y$, the maximum likelihood estimator (MLE) of $\alpha$ follows from simply replacing the expectation in \eqref{eq:bestapprox} by its empirical approximation, 
\begin{align}
\label{eqn:alphahat}
    \widehat{\alpha}_N = \argmax_{\alpha \in \R^p} \frac 1N\sum_{i = 1}^N   \log q_Y( y^i; \alpha),
\end{align}
where the subscript $N$ indicates that $\widehat{\alpha}_N$ is learned using $N$ training samples. \revise{Next we make some assumptions about this estimation problem.}

\begin{assumption}\label{ass:inclass} The target density $\pi_Y$ is in-class for the parametric model, i.e., there exists a parameter $\alpha^*$ such that $q_Y(y ; \alpha^*) = \pi_Y(y)$.
\end{assumption}
\revise{Assumption~\ref{ass:inclass} may appear rather strong, but we note that the parametric families of densities $\mathcal{Q}^{\mathcal Y}$ used in practice (e.g., involving triangular transport maps or normalizing flows) are highly expressive and can effectively capture the true distribution as $p$ becomes large. Below we remark further on issues of approximation error (asymptotic bias) in the density estimate.}

% and~\ref{ass:inclass_beta} state that the true distribution is \emph{in-class}, meaning that it is contained within the families $q_Y(\,\cdot\,; \alpha)$ and $ q_{Y|X}(\,\cdot\,|x; \beta)$. However, when the number of parameters is sufficiently large, the corresponding parametric family becomes highly flexible and can effectively capture the true distribution.} 

% \todo{Added this explicit assumption so we can refer to it in the Theorem}
%We then assume that our model is in class, in a sense that there exists an $\alpha^*$, such that $q_Y( y; \alpha^*) = \pi_Y( y)$.
% Recall that the MLE $\widehat{\alpha}_N$ is \textit{consistent} if it converges to $ \alpha^*$ in probability as $N \rightarrow \infty$. We say $ \widehat{\alpha}_N$ is \textit{asymptotically normal} if $\sqrt N \left(  \widehat{\alpha}_N -  \alpha^* \right)$ converges to $\mathcal N(0, I( \alpha^*)^{-1})$ in distribution, where $I( \alpha^*) \in \R^{p \times p}$ is the non-singular Fisher information matrix evaluated at $\alpha^*$ with entries 
% \begin{align*}\label{eq:fisher}
%     I(\alpha)_{ij} = -\mathbb E_{\pi_Y}\left[ \partial_{\alpha_i}\partial_{\alpha_j}\log q_Y(y;\alpha)\right].
% \end{align*}
% Then applying the second order Delta method, the convergence \ymtd{convergence of what?} of the \fl{EIG} estimator follows from the consistency and asymptotic normality under some regularity conditions; see %~\cite{MLE_stanford}\t and 
% Section 10.6.2 of~\cite{stat_inf}. 
%
\begin{assumption} \label{ass:consist_and_normal} 
The maximum likelihood estimator $\widehat{\alpha}_N$ is consistent and asymptotically normal. That is $\widehat{\alpha}_N$ converges to $ \alpha^*$ in probability as $N \rightarrow \infty$ and $\sqrt N \left(  \widehat{\alpha}_N -  \alpha^* \right)$ converges to $\mathcal N(0, I( \alpha^*)^{-1})$ in distribution, where the entries of $I(\alpha)$ are defined as
\begin{align}\label{eq:fisher}
    I(\alpha)_{ij} = -\mathbb E_{\pi_Y}\left[ \partial_{\alpha_i}\partial_{\alpha_j}\log q_Y(y;\alpha)\right].
\end{align}
\end{assumption}
We refer the reader to Section 10.6.2 of~\cite{stat_inf} for the regularity conditions that guarantee the consistency and asymptotic normality of the MLE.
We now present our main result on the bias and variance of an resulting EIG estimator constructed using an approximate density based on transport maps. 
The proof is given in Appendix~\ref{proof_of_thms}. 

\begin{theorem}\label{thm:EIG_rate}
% \todo{I made some edits to the statement of the theorem by inverting the statement and assumptions}
    Under Assumptions~\ref{ass:inclass} and~\ref{ass:consist_and_normal}, let $g( \alpha) = \mathbb E_{\pi_Y} \left [ \log q_Y( y; \alpha)\right]$, where $q_Y\in \mathcal Q^{\mathcal Y}$.
    % \todo{What do we mean by the vertical bar? I'm guessing we can drop this, since $g$ depends explicitly on $\alpha$}, 
    Suppose $g$ is continuous and bounded, and that both $\nabla g(\alpha), \nabla^2 g( \alpha)$ exist and are continuous. Then, the EIG estimator 
    $$\frac{1}{M}\sum_{j = 1}^M \log \frac{\pi_{Y|X}( y^j| x^j)}{q_Y( y^j; \widehat{\alpha}_N )},$$
     \revise{where $\{x^i, y^i\}_{i=1}^{N}$ and $\{x^j, y^j\}_{j=1}^{M}$ are drawn i.i.d.\ from $\pi_{X,Y}$,}
    has bias of $\mathcal O(1/N)$ and variance of $\mathcal O(1/M + 1/N^2)$. The MSE of this estimator thus converges to zero at a rate of $\mathcal O(1/M + 1/N^2)$.
\end{theorem}

\revise{Next, we can develop analogous results for EIG estimators involving conditional density estimation.} Define the family of densities
\begin{align*}
    \mathcal Q^{\mathcal Y|X} = \{q_{Y|X}(\cdot \vert x; \beta): \R^{n_y} \to \R_+; \ x \in \mathbb{R}^{n_x},  \beta \in \mathbb R^{t} \},
\end{align*}
where $\beta$ are the coefficients in a parameterization of the likelihood function. In our case, $q_{Y|X} \in \mathcal Q^{\mathcal Y|X}$ is represented as the pullback of $\eta_Y$ by a transport map $U^{\mathcal Y}_\beta(x, \cdot)$. That is, $q_{Y|X}( y| x; \beta) = \bigl(U_\beta^{\mathcal Y}( x,\cdot)^\sharp \eta_Y\bigr) (y)$, where $U_\beta^{\mathcal Y}( x,\cdot)$ is an element in the parametric class of $x$-dependent transport maps $\mathcal U = \{U_\beta(x,\cdot): \R^{n_y} \to \R^{n_y}; \ x \in \mathbb{R}^{n_x}, \beta\in\mathbb R^t \}$. } 
Now let $\widehat \beta_N$ be the MLE of $\beta$,
\begin{align}
\label{eqn:betahat}
    \widehat \beta_N = \argmax_{\beta \in \R^t} \frac 1N \sum_{i=1}^N \log q_{Y|X}(y^i|x^i;\beta).
\end{align}
Analogous to Assumptions~\ref{ass:inclass} and~\ref{ass:consist_and_normal}, we assume the following for estimation over $\mathcal{Q}^{\mathcal{Y}|\mathcal{X}}$. 
%
% In the context of transport mpas, for instance, we assume that the map is in-class and that we estimate $\pi_{X|Y}(x|y)$ using the lower part of the map. The analysis is almost identical to the discussion above, and in this case, we observe that for a fixed observation $y^\ast$, the lower part of the map $S^\mathcal{X}(y^\ast, x)$ is a function of $x$ only. 
\begin{assumption}\label{ass:inclass_beta} The target density $\pi_{Y|X}$ is in-class for the parametric model, i.e., there exists a parameter $\beta^*$ such that $q_{Y|X}(y|x ; \beta^*) = \pi_{Y|X}(y|x)$.
\end{assumption}
\begin{assumption} \label{ass:consist_and_normal_beta} 
The maximum likelihood estimator $\widehat{\beta}_N$ is consistent and asymptotically normal. 
\end{assumption}

\begin{theorem}\label{thm:EIG_rate_2}
% \todo{I made some edits to the statement of the theorem by inverting the statement and assumptions}
    Under Assumptions~\ref{ass:inclass_beta} and~\ref{ass:consist_and_normal_beta}, let $h(\beta) = \mathbb E_{\pi_{X,Y}} \left [ \log q_{Y|X}( y|  x; \beta) \right]$, where $q_{Y|X}$ $\in \mathcal Q^{\mathcal {Y|X}}$.
    % \todo{What do we mean by the vertical bar? I'm guessing we can drop this, since $g$ depends explicitly on $\alpha$}, 
    Suppose $h$ is continuous and bounded and that both $\nabla h(\beta), \nabla^2 h( \beta)$ exist and are continuous. Then, the EIG estimator 
    $\frac{1}{M}\sum_{j = 1}^M \log \frac{q_{Y|X}( y^j | x^j; \widehat{\beta}_N)}{\pi_Y( y^j)}$,
     \revise{where $\{x^i, y^i\}_{i=1}^{N}$ and $\{x^j, y^j\}_{j=1}^{M}$ are drawn i.i.d.\ from $\pi_{X,Y}$,}
    has bias of $\mathcal O(1/N)$ and variance of $\mathcal O(1/M + 1/N^2)$. The MSE of this estimator thus converges to zero at a rate of $\mathcal O(1/M + 1/N^2)$.
\end{theorem}

Building on the two preceding theorems, we can obtain an similar result for EIG estimators that employ simultaneous marginal and conditional density estimates. The proof is deferred to Appendix~\ref{proof_of_thms}. 
\begin{theorem}\label{cor:MSE_rate}
    Under Assumptions~\ref{ass:inclass},~\ref{ass:consist_and_normal},~\ref{ass:inclass_beta} and~\ref{ass:consist_and_normal_beta}, suppose that $g$ and $h$ are both continuous and bounded, and that $\nabla g(\alpha), \nabla^2 g( \alpha)$ and $\nabla h(\beta), \nabla^2 h( \beta)$ exist and are continuous in some neighborhood of $ \alpha^*$ and $\beta^*$, respectively. Then bias of the EIG estimator $$\widehat{\mathrm{EIG}}_{M,N} = \frac{1}{M}\sum_{j = 1}^M \log \frac{q_{Y|X}( y^j| x^j; \widehat{\beta}_N)}{q_Y( y^j;  \widehat{\alpha}_N)},$$ \revise{where $\{x^i, y^i\}_{i=1}^{N}$ and $\{x^j, y^j\}_{j=1}^{M}$ are drawn i.i.d.\ from $\pi_{X,Y}$,} is $\mathcal O(1/N)$ and its variance is $\mathcal O(1/M + 1/N^2)$. Therefore, the MSE of this $\widehat{\mathrm{EIG}}_{M,N}$ is $\mathcal O(1/M + 1/N^2)$. 
\end{theorem}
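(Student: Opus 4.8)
The plan is to reduce Theorem~\ref{cor:MSE_rate} to Theorems~\ref{thm:EIG_rate} and~\ref{thm:EIG_rate_2} by an \emph{exact} algebraic decomposition of the estimator. Introduce the true (possibly unavailable, but well-defined) marginal density $\pi_Y$ and the true conditional $\pi_{Y|X}$ only as analysis devices, and write, for the $M$ evaluation samples $\{(x^i,y^i)\}_{i=1}^M$,
\begin{align*}
 \frac{1}{M}\sum_{i=1}^M \log\frac{q_{Y|X}(y^i|x^i;\widehat\beta_N)}{q_Y(y^i;\widehat\alpha_N)}
 &= \underbrace{\frac{1}{M}\sum_{i=1}^M \log\frac{q_{Y|X}(y^i|x^i;\widehat\beta_N)}{\pi_Y(y^i)}}_{=:\ \widehat A_{M,N}}
  \;+\; \underbrace{\frac{1}{M}\sum_{i=1}^M \log\frac{\pi_{Y|X}(y^i|x^i)}{q_Y(y^i;\widehat\alpha_N)}}_{=:\ \widehat B_{M,N}} \\
 &\quad -\; \underbrace{\frac{1}{M}\sum_{i=1}^M \log\frac{\pi_{Y|X}(y^i|x^i)}{\pi_Y(y^i)}}_{=:\ \widehat C_M},
\end{align*}
which holds identically because the $\log\pi_Y(y^i)$ and $\log\pi_{Y|X}(y^i|x^i)$ terms cancel. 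The point of this rearrangement is that $\widehat A_{M,N}$ is \emph{exactly} the estimator analyzed in Theorem~\ref{thm:EIG_rate_2}, $\widehat B_{M,N}$ is \emph{exactly} the estimator analyzed in Theorem~\ref{thm:EIG_rate}, and $\widehat C_M$ is the plain single-loop Monte Carlo estimator of $\mathrm{EIG}$ built from the true densities (cf.~\eqref{eq:EIGmarg}).

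Next I would control each of the three pieces. Since the hypotheses of Theorem~\ref{cor:MSE_rate} include Assumptions~\ref{ass:inclass}--\ref{ass:consist_and_normal_beta} together with the required continuity/boundedness of $g$ and $h$ and of their first two derivatives near $\alpha^*$ and $\beta^*$, Theorems~\ref{thm:EIG_rate} and~\ref{thm:EIG_rate_2} apply verbatim, giving that $\widehat A_{M,N}$ and $\widehat B_{M,N}$ each have bias $O(1/N)$ and variance $O(1/M+1/N^2)$. The term $\widehat C_M$ is an i.i.d.\ average over the evaluation samples of the pointwise mutual information $\log(\pi_{Y|X}(Y|X)/\pi_Y(Y))$, hence exactly unbiased for $\mathrm{EIG}$, with variance $O(1/M)$ provided $\mathbb V_{\pi_{X,Y}}\!\left[\log(\pi_{Y|X}(Y|X)/\pi_Y(Y))\right]<\infty$; I would state this finite-variance condition explicitly, noting that it is mild and is in any case implicit in the proofs of the two preceding theorems (it is what makes the conditional-variance terms there finite).

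Finally I would combine the pieces. By linearity of expectation, $\mathrm{bias}\bigl(\widehat A_{M,N}+\widehat B_{M,N}-\widehat C_M\bigr) = \mathrm{bias}(\widehat A_{M,N}) + \mathrm{bias}(\widehat B_{M,N}) + \bigl(\mathbb E[\widehat C_M]-\mathrm{EIG}\bigr) = O(1/N)+O(1/N)+0 = O(1/N)$. For the variance, the crude bound $\mathbb V[\xi_1+\xi_2+\xi_3]\le 3\bigl(\mathbb V[\xi_1]+\mathbb V[\xi_2]+\mathbb V[\xi_3]\bigr)$—which holds irrespective of the (shared training and evaluation sample) dependence among the three estimators—gives $\mathrm{variance} \le 3\bigl(O(1/M+1/N^2)+O(1/M+1/N^2)+O(1/M)\bigr) = O(1/M+1/N^2)$. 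Therefore $\mathrm{MSE} = \mathrm{bias}^2 + \mathrm{variance} = O(1/N^2) + O(1/M+1/N^2) = O(1/M+1/N^2)$, as claimed.

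The main obstacle here is conceptual bookkeeping rather than hard analysis: one must arrange the decomposition so that every map-dependent quantity is absorbed into an estimator already covered by Theorems~\ref{thm:EIG_rate}--\ref{thm:EIG_rate_2} while the auxiliary density $\pi_Y$ cancels exactly, and one must check that the extra finite-variance regularity used for $\widehat C_M$ is innocuous. If one instead wanted a self-contained treatment of the ``$q_Y/\pi_Y$'' contribution $\widehat B_{M,N}-\widehat C_M$, the key technical step would mirror the proof of Theorem~\ref{thm:EIG_rate}: condition on the training set, identify the conditional mean as $g(\widehat\alpha_N)-g(\alpha^*) = -\Dkl\bigl(\pi_Y\,\|\,q_Y(\cdot;\widehat\alpha_N)\bigr)$, Taylor-expand $g$ to second order about its maximizer $\alpha^*$ (so the first-order term vanishes), and upgrade the asymptotic normality of $\sqrt N(\widehat\alpha_N-\alpha^*)$ to the moment bound $\mathbb E\|\widehat\alpha_N-\alpha^*\|^2 = O(1/N)$ via uniform integrability, with the cross-terms in the variance handled by the law of total variance exactly as in that proof.
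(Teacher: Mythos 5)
Your proposal is correct and follows essentially the same route as the paper: the bias is split into a marginal-density error term and a conditional-density error term, each $O(1/N)$ by the analysis of Theorems~\ref{thm:EIG_rate} and~\ref{thm:EIG_rate_2}, and the variance is controlled by bounding cross-covariances via Cauchy--Schwarz (your $3(\mathbb V[\xi_1]+\mathbb V[\xi_2]+\mathbb V[\xi_3])$ bound is just a repackaged form of the paper's covariance estimate). The only cosmetic difference is that you add and subtract $\log\pi_Y$ and $\log\pi_{Y|X}$ so as to invoke the two earlier theorems as black boxes plus a plain Monte Carlo term, whereas the paper works directly with $\mathbb V[\frac1M\sum\log q_{Y|X}]$ and $\mathbb V[\frac1M\sum\log q_Y]$; both are valid and rely on the same underlying estimates.
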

% \todo{I wonder if this should be a theorem? Or possibly another }

With this analysis of the convergence rate, we can then explore the optimal allocation between training and evaluation samples.
\begin{corollary}\label{cor:opt_allocation}
    \revise{In the context of the preceding theorems,}
    % Theorem~\ref{thm:EIG_rate} and Theorem~\ref{cor:MSE_rate},} 
    % \todo{why not also Theorem~\ref{thm:EIG_rate_2}?}
    given $L = M+N$ samples from the joint distribution $\pi_{X,Y}$, the optimal allocation between the training and evaluation samples should be set as $\frac{M}{N} \sim \mathcal O(L^{1/3})$. Correspondingly, the optimal convergence rate of the MSE of $\widehat{\mathrm{EIG}}_{M,N}$ is $\mathcal O(L^{-1})$.
    % \todo{It would be good to specify that it holds for both estimators in Theorem and Corollary!}
\end{corollary}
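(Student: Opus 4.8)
The plan is to reduce the statement to a one-dimensional constrained optimization of the leading-order MSE. By Theorem~\ref{thm:EIG_rate} (and analogously Theorems~\ref{thm:EIG_rate_2} and~\ref{cor:MSE_rate}), each of the transport-based EIG estimators has bias $O(1/N)$ and variance $O(1/M + 1/N^2)$, so that
\begin{equation*}
  \mathrm{MSE} = \mathrm{bias}^2 + \mathrm{variance} = O(1/N^2) + O(1/M + 1/N^2) = O(1/M + 1/N^2).
\end{equation*}
Hence there are constants $c_1, c_2 > 0$ and $L_0$ such that $\mathrm{MSE} \le c_1/M + c_2/N^2$ for all $L \ge L_0$ and every split $L = M + N$. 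The first step is therefore to minimize $\phi(M, N) := c_1/M + c_2/N^2$ over $\{(M, N) : M + N = L,\ M, N > 0\}$; I would relax $M$ and $N$ to positive reals, since rounding to integers perturbs the leading terms only by lower-order factors.

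Second, I would carry out the optimization. Substituting $N = L - M$ and differentiating $\phi(M) = c_1/M + c_2/(L - M)^2$ in $M$ gives the stationarity condition $c_1/M^2 = 2 c_2/(L - M)^3$, equivalently $N^3 = (2 c_2/c_1)\,M^2$. Since $\phi(M) \to \infty$ as $M \to 0^+$ and as $M \to L^-$, the unique interior stationary point is the global minimizer. Rewriting the condition as $N^3 = (2 c_2/c_1)(L - N)^2$ and solving asymptotically by a bootstrap argument (guess $N = o(L)$, so the right-hand side is $\Theta(L^2)$, whence $N = \Theta(L^{2/3})$, which is indeed $o(L)$ and hence self-consistent) yields
\begin{equation*}
  N \sim (2 c_2/c_1)^{1/3}\,L^{2/3}, \qquad M = L - N \sim L, \qquad \frac{M}{N} \sim (c_1/(2 c_2))^{1/3}\,L^{1/3},
\end{equation*}
which is the claimed allocation $M/N \sim O(L^{1/3})$.

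Third, I would substitute this minimizer back into the bound:
\begin{equation*}
  \mathrm{MSE} \le \frac{c_1}{M} + \frac{c_2}{N^2} = O(L^{-1}) + O(L^{-4/3}) = O(L^{-1}),
\end{equation*}
establishing the rate. One may additionally note that, within the scope of these bounds, no split can do asymptotically better than $O(L^{-1})$, since $M \le L$ forces the $c_1/M$ term to be at least $c_1/L$; and, as emphasized in the introduction, comparing this with the slower-than-Monte-Carlo rate of an optimally balanced NMC estimator exhibits the promised improvement. I do not anticipate a serious obstacle here: the only technical care needed is the asymptotic solution of the stationarity equation $N^3 = (2 c_2/c_1)(L - N)^2$ — a routine fixed-point/continuity argument — together with tracking the fact that, because the constants in the $O(\cdot)$ rates inherited from the theorems are not explicit, the conclusion is a statement about rates and the scaling $M/N \sim L^{1/3}$ rather than about exact sample counts.
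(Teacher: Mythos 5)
Your proposal is correct and follows essentially the same route as the paper: both minimize the leading-order MSE expression $A/M + B/N^2$ subject to $M+N=L$, and extract the scaling $M/N \sim L^{1/3}$ and the resulting $O(L^{-1})$ rate from the stationarity condition. The only cosmetic difference is that the paper parameterizes by $\gamma = M/N$ and writes out the root of the resulting cubic explicitly, whereas you substitute $N = L - M$ and solve the equivalent first-order condition asymptotically by a bootstrap argument.
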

The proof of this corollary is provided in Appendix~\ref{proof_of_thms}. We observe that if $\frac{M}{N} \sim \mathcal O(L^{\varpi})$ for some $\varpi > 0$, the mean squared error (MSE) decreases at a rate of $2L^{\varpi-2} + L^{-\varpi-1} + L^{2\varpi-2} + L^{-2} + L^{-1}$. If $\varpi > 1/2$, the leading term has an exponent greater \revise{(i.e., slower)} than $-1$. For all $\varpi \in (0, \frac{1}{2}]$, the \textit{asymptotic} convergence rate remains $\mathcal O(L^{-1})$. The optimal value, $\varpi = 1/3$, is obtained by solving a root-finding problem (see Appendix~\ref{proof_of_thms} for more detail), where the difference lies in the order of the non-leading term.
The rate of convergence is confirmed in the numerical examples presented in Section~\ref{subsec:LG}. 

\revise{While the analysis above assumes the error of the best approximation in our parametric class of densities to be negligible (i.e., zero), there are results in the literature that quantify the bias introduced by transport map approximation. In the case of analytic densities, \cite{zech2022sparse} establish convergence rates for sparse polynomial and neural network approximations of the KR map, as well as for the associated pushforward distributions, with error decreasing exponentially in the size of the neural network. 
In a different context, \cite{baptista2025approximation} study polynomial approximation of transport maps on compact domains and show, for $C^k$ densities, that the KL divergence between the target distribution and the pullback through the approximate map vanishes at a polynomial rate in the degree of the polynomial space (though the size of this space will grow exponentially with dimension).
\cite{wang2022minimax} simultaneously consider approximation error and statistical finite-sample estimation error for density estimation using transport, in a nonparametric setting. They show that Hellinger distance and KL divergence from the target distribution to the pullback through the approximate map decays at the optimal minimax rate. This rate is of course dimension-dependent, though as the smoothness of the target density tends to infinity, it approaches the dimension-independent parametric rate considered here.
As we discuss in Section~\ref{sec:conclusion}, using these results to generalize the analysis of optimal sample allocation presented here would be a useful avenue for future work. 
}

\section{EIG estimation in high dimensions}\label{sec:high_dim}
When the dimension of the parameters and data is high, directly \fl{estimating MI becomes computationally expensive.}
%approximating the distribution using transport maps can be computationally expensive.
% \todo{I wonder if we can state this more broadly than just for transport maps, since it is a problem for MI estimation more broadly}  
To address this, we aim to reduce the dimensionality of the parameters and data simultaneously, ensuring that the lower-dimensional subspace still preserves the \revise{MI}. In this section, we apply the dimension reduction methods proposed in~\cite{Ricarod_dimRed}. Given two unitary matrices $U = [U_r, U_\perp] \in \R^{n_x \times n_x}$ and $V = [V_s, V_\perp] \in \R^{n_y \times n_y}$, consider the decompositions % of the parameter and the observation
\begin{align*}
    X &= U_r X_r + U_\perp X_\perp, \quad \text{where} \quad X_r = U_r^\top X, \; X_\perp = U_\perp^\top X \\
    Y &= V_s Y_s + V_\perp Y_\perp, \quad \text{where} \quad Y_s = V_s^\top Y, \;
     Y_\perp = V_\perp^\top Y.
\end{align*}
%where %$U_r$ and $V_s$ are the first $r$ and $s$ columns of unitary matrices $U$ and $V$, and 
We interpret $X_\perp$ as the non-informed part of $X$, %if $X_\perp \indep Y \mid X_r$. 
and similarly $Y_\perp$ is interpreted as the non-informative part of $Y$. % if $X_r \indep Y_\perp \mid Y_s$.
Given such a decomposition, 
%Under these conditional Independence conditions, the posterior density factorizes as $\pi_{X|Y} (x|y) = \pi_{X_r|Y_s}(x_r|y_s) \pi_{X_\perp|X_r}(x_\perp|x_r)$. We define the
we define approximate posterior $\pi^*_{X|Y}$ as $\pi^*_{X|Y}(x|y) := \pi_{X_r|Y_s}(x_r|y_s) \pi_{X_\perp|X_r}(x_\perp|x_r)$. While this approximation is equal to the true posterior under the conditional independence conditions $X_\perp \indep Y \mid X_r$ and $X_r \indep Y_\perp \mid Y_s$, these conditions are difficult to satisfy in practice, which generally leads to $\pi^*_{X|Y} \neq \pi_{X|Y}$. Therefore, we aim to find transformations $U$ and $V$ such that the expected KL divergence from the approximate posterior $\pi^*_{X|Y}$ to the true posterior $\pi_{X|Y}$ is small relative to some tolerance $\epsilon > 0$. That is 
\begin{align*}
    \mathbb E_{\pi_Y} \left[ \Dkl\left(\pi_{X|Y}\left(\cdot |y\right)|| \pi^*_{X|Y}\left(\cdot |y\right)\right)\right] \leq \epsilon.
\end{align*}
% Note that this decomposition is unique in the sense that once $X$ and the unitary matrix $U$ are fixed, the decomposition into $X_r$ and $X_\perp$ is also fixed. In other words, this decomposition functions similarly to expressing a vector-valued random variable as the joint of two random variables, $X = [X_1, X_2]$, where the decomposition is uniquely determined once the dimensions of $X_1$ and $X_2$ are fixed. The same applies to $Y$ and $V$. 
%\todo{I wonder if the comments in this paragraph are needed for the theorem below or if we can skip them? FL: skipped} 
To find rotation matrices $U$ and $V$ and reduced dimensions $r,s$ that minimize the posterior approximation error and satisfy the bound above, we use the following theorem from~\cite{Ricarod_dimRed}.
\begin{theorem}[Theorem 1 in \cite{Ricarod_dimRed}]
    Let $X$ and $Y$ be random variables %defined in the previous setting 
    whose joint distribution satisfies the logarithmic Sobolev inequality with constant $C(\pi_{X,Y})$ as stated by Definition 2 in~\cite{Ricarod_dimRed}. Then, we have 
    \begin{align*}
     \mathbb E_{\pi_Y} \left[ \Dkl\left(\pi_{X|Y}\left(\cdot |y\right)|| \pi^*_{X|Y}\left(\cdot |y\right)\right)\right] \leq C(\pi_{X,Y})^2\left(\Tr(U_\perp^\top H_X U_\perp) + \Tr(V_\perp^\top H_Y V_\perp) \right),
\end{align*}
for any unitary matrices $U$ and $V$, where
\begin{align}
    H_X = \int \left(\nabla_x \nabla_y \log \pi_{Y|X}(y|x)\right)^\top  \left( \nabla_x \nabla_y \log \pi_{Y|X}(y|x) \right) \pi_{X,Y}(x,y) dxdy \label{eq:diagnostic_X} \\
    H_Y = \int \left(\nabla_x \nabla_y \log \pi_{Y|X}(y|x) \right) \left(\nabla_x \nabla_y \log \pi_{Y|X}(y|x) \right)^\top \pi_{X,Y}(x,y) dxdy \label{eq:diagnostic_Y}.
\end{align}
\end{theorem}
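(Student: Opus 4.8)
The plan is to convert the expected posterior error into a single Kullback--Leibler divergence between joint laws, rewrite that divergence as a sum of conditional mutual informations, and bound each summand by two successive applications of the logarithmic Sobolev inequality. \emph{Step 1 (reduction to an information-theoretic identity).} KL divergences and the log-Sobolev constant are invariant under the unitary changes of variables $x\mapsto U^\top x$, $y\mapsto V^\top y$, which also turn $\Tr(U_\perp^\top H_X U_\perp)$ and $\Tr(V_\perp^\top H_Y V_\perp)$ into the traces of the corresponding lower-right blocks, so I would take $U=I_{n_x}$ and $V=I_{n_y}$. Writing $\pi^*_{X,Y}(x,y)\coloneqq\pi^*_{X|Y}(x|y)\,\pi_Y(y)$, one has $\mathbb E_{\pi_Y}[\Dkl(\pi_{X|Y}(\cdot|y)\,\|\,\pi^*_{X|Y}(\cdot|y))]=\Dkl(\pi_{X,Y}\,\|\,\pi^*_{X,Y})$. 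Telescoping $\pi_{X,Y}=\pi_Y\,\pi_{X_r|Y}\,\pi_{X_\perp|X_r,Y}$ and $\pi^*_{X,Y}=\pi_Y\,\pi_{X_r|Y_s}\,\pi_{X_\perp|X_r}$, the log-ratio separates into a piece depending only on $(x_r,y)$ and one depending only on $(x_\perp,x_r,y)$, giving
\[
 \mathbb E_{\pi_Y}\!\left[\Dkl\!\left(\pi_{X|Y}(\cdot|y)\,\|\,\pi^*_{X|Y}(\cdot|y)\right)\right]
 \;=\; \I(X_r;Y_\perp\mid Y_s) \;+\; \I(X_\perp;Y\mid X_r)
 \;=\; \I(X;Y)-\I(X_r;Y_s).
\]
It then suffices to bound $\I(X_\perp;Y\mid X_r)$ by $C(\pi_{X,Y})^2\,\Tr(U_\perp^\top H_X U_\perp)$ and $\I(X_r;Y_\perp\mid Y_s)$ by $C(\pi_{X,Y})^2\,\Tr(V_\perp^\top H_Y V_\perp)$.

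\emph{Step 2 (bounding $\I(X_\perp;Y\mid X_r)$).} I would write $\I(X_\perp;Y\mid X_r)=\mathbb E_{\pi_X}[\Dkl(\pi_{Y|X}(\cdot|x)\,\|\,\pi_{Y|X_r}(\cdot|x_r))]$, fix $x$, and apply the log-Sobolev inequality to $\pi_{Y|X_r}(\cdot|x_r)$ (which inherits the constant $C(\pi_{X,Y})$): the inner divergence is controlled by the relative Fisher information $\mathbb E_{y\sim\pi_{Y|X}(\cdot|x)}\bigl[\,\bigl|\nabla_y\log\pi_{Y|X}(y|x)-\nabla_y\log\pi_{Y|X_r}(y|x_r)\bigr|^2\,\bigr]$, with a constant proportional to $C(\pi_{X,Y})$ fixed by Definition~2 of \cite{Ricarod_dimRed}. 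The key observation is that $\nabla_y\log\pi_{Y|X_r}(y|x_r)$ is exactly the mean of $x'_\perp\mapsto\nabla_y\log\pi_{Y|X}(y|x_r,x'_\perp)$ under $\pi_{X_\perp|X_r=x_r,Y=y}$, so the integrand is the centered fluctuation of this vector field over the non-informed coordinates. Swapping the order of integration (via $\pi_{X_\perp|X_r}\,\pi_{Y|X}=\pi_{Y|X_r}\,\pi_{X_\perp|X_r,Y}$) and applying the Poincar\'e inequality---implied by log-Sobolev with the same constant---to $\pi_{X_\perp|X_r=x_r,Y=y}$ bounds that fluctuation, componentwise in the vector field, by a constant proportional to $C(\pi_{X,Y})$ times $\mathbb E\bigl[\,\|\nabla_{x_\perp}\nabla_y\log\pi_{Y|X}(y|x)\|_F^2\,\bigr]$. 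Collecting the two factors of $C(\pi_{X,Y})$ and averaging over $\pi_{X,Y}$ yields $\I(X_\perp;Y\mid X_r)\le C(\pi_{X,Y})^2\,\Tr(U_\perp^\top H_X U_\perp)$.

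\emph{Step 3 (the second term, and conclusion).} For $\I(X_r;Y_\perp\mid Y_s)$ I would first use monotonicity of mutual information, $\I(X_r;Y_\perp\mid Y_s)\le\I(X;Y_\perp\mid Y_s)=\I(Y_\perp;X\mid Y_s)$, which has exactly the form treated in Step~2 with the roles of $X$ and $Y$ exchanged; repeating that argument---and using $\nabla_{y_\perp}\nabla_x\log\pi_{X|Y}=\nabla_{y_\perp}\nabla_x\log\pi_{Y|X}$, since the prior and evidence terms are annihilated by the mixed derivative---gives $\I(X_r;Y_\perp\mid Y_s)\le C(\pi_{X,Y})^2\,\Tr(V_\perp^\top H_Y V_\perp)$. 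Substituting both bounds into the identity of Step~1 completes the proof.

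\emph{Main obstacle.} The hard part is Step~2. One must first ensure that every conditional appearing in the argument---$\pi_{Y|X_r=x_r}$ and $\pi_{X_\perp|X_r=x_r,Y=y}$, and their analogues in Step~3---again satisfies a log-Sobolev (hence Poincar\'e) inequality with the constant $C(\pi_{X,Y})$; this is exactly what the log-Sobolev hypothesis together with the definition of $C(\pi_{X,Y})$ must supply, and it rests on the stability of LSI under marginalization and conditioning. Second, the step that upgrades the first-order relative Fisher information obtained from the initial LSI application into the mixed Hessian $\nabla_x\nabla_y\log\pi_{Y|X}$ defining $H_X$ and $H_Y$ hinges on recognizing that relative score as a fluctuation over the non-informed subspace and then invoking Poincar\'e; keeping track of which conditional one integrates against at each stage needs care, and it is this second application of the inequality that accounts for the square on $C(\pi_{X,Y})$.
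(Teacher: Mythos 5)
This theorem is not proved in the paper: it is imported verbatim as Theorem~1 of the cited reference, so there is no in-paper argument to compare against. Judged on its own terms, your reconstruction is correct and follows what is, to my knowledge, the actual route taken in that reference: the exact decomposition $\mathbb E_{\pi_Y}[\Dkl(\pi_{X|Y}\,\|\,\pi^*_{X|Y})]=\I(X_\perp;Y\mid X_r)+\I(X_r;Y_\perp\mid Y_s)=\I(X;Y)-\I(X_r;Y_s)$ is right, the identification of the relative score $\nabla_y\log\pi_{Y|X}-\nabla_y\log\pi_{Y|X_r}$ as a centered fluctuation under $\pi_{X_\perp|X_r,Y}$ is the key identity, and the LSI-then-Poincar\'e chain correctly accounts for both the factor $C(\pi_{X,Y})^2$ and the appearance of the mixed Hessian in $H_X$ and $H_Y$ (including the observation that $\nabla_x\nabla_y\log\pi_{X|Y}=\nabla_x\nabla_y\log\pi_{Y|X}$, and the monotonicity step $\I(X_r;Y_\perp\mid Y_s)\le\I(X;Y_\perp\mid Y_s)$ needed so that the full $\nabla_x$ appears in $H_Y$). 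The only substantive gap is the one you name yourself: the argument applies LSI to $\pi_{Y|X_r}(\cdot\,|\,x_r)$ and Poincar\'e to $\pi_{X_\perp|X_r,Y}(\cdot\,|\,x_r,y)$ with the single constant $C(\pi_{X,Y})$, and since LSI is not stable under conditioning in general, this is not a consequence of a joint LSI alone --- it is precisely what Definition~2 of the cited reference builds into the meaning of $C(\pi_{X,Y})$, so your proof is complete only modulo that definitional input.
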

Our goal is then to solve the following optimization problem:
\begin{align*}
    \argmin_{U_\perp^\top, V_\perp^\top}\; \Tr(U_\perp^\top H_X U_\perp) + \Tr(V_\perp^\top H_Y V_\perp),
\end{align*}
subject to the constraints $U_\perp^\top U_\perp = I_{n_x-r}$ and $V_\perp^\top V_\perp = I_{n_y-s}$. Following classical results in linear algebra, $U_\perp$ are the eigenvectors corresponding to the smallest $r$ eigenvalues of the matrix $H_X$. Similarly, $V_\perp$ are the eigenvectors corresponding to the smallest $s$ eigenvalues of the matrix $H_Y$.

\subsection{Gaussian prior and likelihood case}
If we have a Gaussian prior and likelihood, the previous derivation can be further simplified. Consider the following problem:
\begin{align*}
Y = G(X) + \mathcal E,
\end{align*}
where $X \sim \mathcal{N}(0, \Sigma_X)$, $\mathcal{E} \sim \mathcal{N}(0, \Sigma_\mathcal{E})$, and $G$ is the forward model. Let $\widetilde{X} = \Sigma_X^{-1/2}X$ and $\widetilde{Y} = \Sigma_\mathcal{E}^{-1/2}Y$ be pre-conditioned parameter and data variables. The connection between $\widetilde{Y}$ and $\widetilde{X}$ is given by
\begin{align*}
\widetilde{Y} = \Sigma_\mathcal{E}^{-1/2} G(\Sigma_X^{1/2} \widetilde{X}) + \Sigma_\mathcal{E}^{-1/2} \mathcal{E},
\end{align*}
Using that 
$\nabla_{y} \nabla_{x} \pi_{\widetilde{Y}|\widetilde{X}}(y | x) = \Sigma_\mathcal{E}^{-1/2} \nabla G(X) \Sigma_X^{1/2}$, the diagnostic matrices in~\eqref{eq:diagnostic_X} and~\eqref{eq:diagnostic_Y} are given by
% We then have
% \begin{align*}
%     H_{\widetilde X} = \int \Sigma_{X}^{1/2} \nabla G(X)^\top  \Sigma_{\epsilon}^{-1} \nabla G(X) \Sigma_{X}^{1/2} \pi_{\widetilde X}(x) dx.
% \end{align*}
%Using the fact that $\widetilde{X} = \Sigma_X^{1/2}X$, we then obtain
\begin{align} 
    H_{\widetilde X} &= \Sigma_{X}^{1/2} \left(\int \nabla G(X)^\top  \Sigma_{\mathcal{E}}^{-1} \nabla G(X)  \pi_{X}(x) dx \right) \Sigma_{X}^{1/2} \label{eq:H_X} \\
%\end{align}
%Similarly, we have
%\begin{align} \label{eq:H_Y}
    H_{\widetilde Y} &= \Sigma_{\mathcal{E}}^{-1/2} \left(\int \nabla G(X)  \Sigma_{X} \nabla G(X)^\top  \pi_{X}(x) dx \right) \Sigma_{\mathcal{E}}^{-1/2} \label{eq:H_Y}. 
\end{align}
Let $\widetilde{U}$ and $\widetilde{V}$ denote the matrices whose columns are the eigenvectors of $H_{\widetilde{X}}$ and $H_{\widetilde{Y}}$, respectively. Then, we obtain the matrices $U$ and $V$ using
\begin{align*}
U = \Sigma_X^{-1/2} \widetilde{U}, \quad V = \Sigma_\mathcal{E}^{-1/2} \widetilde{V},  
\end{align*}
where $\widetilde{U}$ and $\widetilde{V}$ satisfy the properties $\widetilde{U}^\top \Sigma_X \widetilde{U} = I$ and $\widetilde{V}^\top \Sigma_\mathcal{E}\widetilde{V} = I$. Let $U_r$ and $V_s$ be the matrices composed of the first $r$ and $s$ columns of $U$ and $V$, respectively. The optimal projection, in terms of preserving \revise{MI}, is given by $X_r = U_r^\top X$ and $Y_s = V_s^\top Y$. We can then compute the \revise{MI} $\I(X_r, Y_s)$ in the reduced coordinates, and the error is proportional to \(\sum_{i = r+1}^{n_x} \lambda_i\left(U^\top_\perp H_X U_\perp\right) + \sum_{i = s+1}^{n_y} \lambda_i\left(V^\top_\perp H_Y V_\perp\right)\), where $\lambda_i$ denotes the $i$th largest eigenvalue of the corresponding matrix. For further details, see \cite{Ricarod_dimRed}.

\subsection{Lower bounds of MI and the choice of estimators}
To reduce the dimensionality of the samples, we need access to the gradient of the forward model, specifically $\nabla G(X)$, which is not available in a completely likelihood-free setting. This limits us to using $\widehat{\mathrm{EIG}}_{\mathrm{m}}$ and $\widehat{\mathrm{EIG}}_{\mathrm{pos}}$. In the rotated space, we define:
\begin{align*}
\widehat{\mathrm{EIG}}_{\mathrm{m}}(r,s) &= \frac{1}{M} \sum_{i=1}^M \log \frac{\pi_{Y_s|X_r}(y_s^i | x_r^i)}{\widehat{\pi}_{Y_s}(y_s^i)} , \\
\widehat{\textrm{EIG}}_{\mathrm{pos}}(r,s) &= \frac{1}{M} \sum_{i=1}^M  \log \frac{\widehat{\pi}_{X_r|Y_s}(x_r^i |y_s^i)}{\pi_{X_r}(x_r^i)} .
\end{align*}
Then note that $\mathbb E_{\pi_{X_r,Y_s}}\left[\log \frac{\widehat{\pi}_{X_r|Y_s}(x_r^i |y_s^i)}{\pi_{X_r}(x_r^i)}\right] \leq \I(X_r; Y_s) \leq \I(X; Y)$, where the first inequality follows from the lower bound of the EIG, and the second is obtained from the data processing inequality. Therefore, $\widehat{\mathrm{EIG}}_{\mathrm{pos}}$ serves as a lower bound in the asymptotic sense. However, the relationship between $\widehat{\mathrm{EIG}}_{\mathrm{m}}(r,s)$ and the true EIG remains unclear. Although $\mathbb E_{\pi_{X_r,Y_s}}\left[\log  \frac{\pi_{Y_s|X_r}(y_s^i | x_r^i)}{\widehat{\pi}_{Y_s}(y_s^i)}\right] \geq \I(X_r, Y_s)$, it does not provide a clear upper or lower bound for $\I(X, Y)$. We thus focus on computing $\widehat{\mathrm{EIG}}_{\mathrm{pos}}$ in high-dimensional settings. We summarize the proposed method for EIG estimation using low-dimensional projections in Algorithm~\ref{alg:EIG_alg_highDim}.

\begin{algorithm}
\caption{EIG estimation using low-dimensional projections}
\label{alg:EIG_alg_highDim}
\begin{algorithmic}[1]
\STATE{\textbf{Input}: Partition the $L$ pairs of samples into $N$ training and $M$ evaluation samples, such that $L = N+M$; $\nabla G(X)$; $\Sigma_X$; $\Sigma_Y$; $\Sigma_\mathcal{E}$; joint samples $\{x^i, y^i\}_{i = 1}^L$}
\STATE{\textbf{Output}: EIG estimator}
\STATE{Compute the diagnostic matrices $H_{\widetilde{X}}$ in~\eqref{eq:H_X} and $H_{\widetilde{Y}}$ in~\eqref{eq:H_Y}}
\STATE{Compute the leading $r$ and $s$ eigenvectors $U_r$ and $V_s$ of $H_{\widetilde{X}}$ and $H_{\widetilde{Y}}$, respectively.}
\STATE{Project parameter and data samples: $x^i_r = U_r^\top x^i$, $y^i_s = V_s^\top y^i$.}
\STATE{Use $N$ training samples to learn the transport maps and obtain $\widehat \pi_{X_r|Y_s}(x_r | y_s)$.}
\STATE{ 
  Use $M$ pairs of samples to estimate EIG by evaluating the empirical average $
    \widehat{\textrm{EIG}}_{\mathrm{pos}} = \frac{1}{M} \sum_{i = 1}^M \log \frac{\widehat \pi_{X_r|Y_s}(x_r^i \mid y_s^i)}{\pi_{X_r}(x_r^i)}$.}
\end{algorithmic}
\end{algorithm}

\section{Numerical experiments} \label{sec:num_res}
In this section, we present three numerical examples to illustrate our proposed method. The first is a linear Gaussian example, where we observe that the empirical convergence rate aligns with the theoretical convergence rate in the asymptotic regime. We then use a nonlinear example to demonstrate that our method performs well even when the joint distribution is non-Gaussian and the forward model is nonlinear. Finally, we consider a challenging nonlinear example where both the data and parameters are high dimensional; here we use our proposed dimension reduction technique to enable effective EIG estimation in high-dimensional settings and compare it to alternative reduction methods. \fl{Throughout this section, we use the following notation for our EIG estimators (as described in Algorithm~\ref{EIG_alg}):
\begin{align*}
    \widehat{\mathrm{EIG}}_{\mathrm{m}} &= \frac 1M \sum_{i = 1}^M\log \frac{\pi_{Y|X}( y^i| x^i)}{\widehat \pi_Y( y^i)},\\
    \widehat{\mathrm{EIG}}_{\mathrm{pos}} &= \frac 1M \sum_{i = 1}^M\log \frac{\widehat \pi_{X|Y}( x^i| y^i)}{\pi_X( x^i)},\\
    \widehat{\mathrm{EIG}}_{\mathrm{lik}} &= \frac 1M \sum_{i = 1}^M\log \frac{\widehat \pi_{Y|X}( y^i| x^i)}{\widehat \pi_Y( y^i)},\\
    \widehat{\mathrm{EIG}}_{\mathrm{pr}} &= \frac 1M \sum_{i = 1}^M\log \frac{\widehat \pi_{X|Y}( x^i| y^i)}{\widehat \pi_X( x^i)}.   
\end{align*}
}

\subsection{Sample allocations and convergence rates}\label{subsec:LG}
In this example, we consider a simple linear Gaussian problem where the EIG can be computed in closed form. \rsb{We restrict our transport map to be affine functions of their inputs, which is sufficient to ensure Assumption~\ref{ass:inclass} is satisfied for the Gaussian target distributions}. We first compare the different estimators, including the nested Monte Carlo (NMC) estimator, against the analytical result. \fl{The forward model $G\colon X \rightarrow Y$ is a random finite-dimensional linear map from $\mathbb{R}^{20}$ to $\mathbb{R}^{10}$ with prescribed eigenvalues $\lambda_i = 0.8 \lambda_{i-1}$, where $\lambda_1 = 1$.} The prior for $X$ and additive noise $\mathcal{E}$ are both multivariate Gaussian, i.e., $X \sim \mathcal{N}(0, \Sigma_{X})$ and $\mathcal{E} \sim \mathcal{N}(0, \Sigma_{\mathcal{E}})$. We generate $\Sigma_{X}$ using a squared exponential kernel \fl{given by $\sigma \exp(-\left\Vert z_i - z_j\right\Vert^2/l^2)$, where we set $\sigma = 0.1$ and $l = 0.1$, and $z_i$'s are $20$ equally spaced points between 0 and 1.} We set $\Sigma_{\mathcal{E}} = 0.01 I_{10}$. The true EIG is easily computed to be $\frac 12 \log \frac{\det \left(G \Sigma_{ X} G^\top + \Sigma_{\mathcal{E}}\right) }{\det \left(\Sigma_{ \mathcal{E}}\right)}$~\cite{thomas2006elements}. 

To investigate the effect of sample allocation, we set the ratio between the evaluation samples and training samples to be proportional to $L^\varpi$, i.e., $M/N \sim \mathcal O(L^\varpi)$, with $\varpi = \frac{1}{8}, \frac{1}{3}, \frac{3}{4}$ respectively. Each experiment is repeated 100 times. In the first study, we assume the model is known and the likelihood function can be evaluated exactly, and we estimate only the marginal distribution of $Y$. That is, we use $\widehat{\mathrm{EIG}}_{\mathrm{m}}$. We also compare the performance of transport map-based estimators with that of the NMC estimator, where the NMC estimator is computed using the optimal sample allocation discussed in~\cite{chi}. We then plot the estimated EIG along with their bias, variance, and mean squared error (MSE) for the different estimators in Figure~\ref{fig:violin_marg} and~\ref{fig:conv_marg}. We observe that when we set $\varpi = 1/3$, the MSE decreases most rapidly as the total number of samples increases. The empirical convergence rate also aligns with the theory presented in the previous section. Next, we show in Figures~\ref{fig:violin_post} and~\ref{fig:conv_post} that by fixing the prior and using transport maps to approximate the posterior density, i.e., setting the EIG estimator to $\widehat{\mathrm{EIG}}_{\mathrm{pos}}$, we achieve the same rate.

We then study the likelihood-free case, where we estimate either the likelihood and the marginal densities and obtain $\widehat{\mathrm{EIG}}_{\mathrm{lik}}$ (Figure~\ref{lik_over_marg_likfr}), or the posterior and prior densities and obtain $\widehat{\mathrm{EIG}}_{\mathrm{pr}}$ (Figure~\ref{post_over_pr_likfr}). As shown in Figures~\ref{fig:conv_lik_over_marg_likfr} and~\ref{fig:conv_post_over_pr_likfr}, in both cases, the convergence rate of the MSE of the transport map-based EIG estimator matches the theoretical rate $\mathcal O(L^{-1})$ in the asymptotic regime, which is faster than that of  NMC, \revise{whose rate is $\mathcal O(L^{-2/3})$ under the optimal sample allocation proposed in~\cite{Beck_laplace,chi}. It is worth noting that in the most general setting (not necessarily that of EIG estimation), the convergence rate of NMC is $\mathcal{O}(L^{-1/2})$; see~\cite{Rainforth2018OnNM} for a detailed discussion.} 

We make a few further observations: The convergence rate of the MSE with $\varpi = 1/3$ is of order $L^{-1}$ in the asymptotic region across all four cases. When $\varpi > \frac{1}{2}$, we observe that the convergence rate of the MSE is considerably slower than for $\varpi \leq \frac{1}{2}$ in all four cases, which supports our theoretical results developed in Section~\ref{sec:theory}. From the violin plot, we see that all the transport map-based results have smaller variance than NMC, and with $\varpi = 1/8$ and $\varpi = 1/3$, transport maps produce good results even with small sample sizes. Another noteworthy observation is that $\widehat{\mathrm{EIG}}_{\mathrm{lik}}$ and $\widehat{\mathrm{EIG}}_{\mathrm{pr}}$ exhibit the same values, up to numerical error (see Figures~\ref{lik_over_marg_likfr} and ~\ref{post_over_pr_likfr}). This occurs because we are in the linear Gaussian case, and if we assume the mean has been shifted to 0, then the map can be computed from covariance matrix of the joint samples. In other words, we can obtain $\widehat{\pi}_Y$, $\widehat{\pi}_{Y|X}$, $\widehat{\pi}_X$, and $\widehat{\pi}_{X|Y}$ directly from the sample covariance matrix of the joint distribution. Therefore, we have exactly $\frac{\widehat \pi_{X|Y}( x| y)}{\widehat \pi_X( x)} =\frac{\widehat \pi_{Y|X}( y| x)}{\widehat \pi_Y( y)} $ for any $x$ and $y$.

\revise{We also compare the convergence behavior of our estimators with that of the MLMC estimator proposed in \cite{Goda03072020}. The number of outer samples per level in MLMC is computed following Algorithm~1 in~\cite{Goda03072020}, with the number of levels set to~$\{8, 10, 10, 12\}$ respectively. As illustrated in panel~(c) of Figures~\ref{fig:conv_marg},~\ref{fig:conv_post},~\ref{fig:conv_post_over_pr_likfr}, and~\ref{fig:conv_lik_over_marg_likfr}, the MLMC estimator achieves a faster convergence rate than NMC but converges more slowly than our proposed TM-based estimators under their optimal sample allocation; moreover, it exhibits a larger error for every value of $L$ tested here.}

\begin{figure}[!htbp]
\centering
\begin{subfigure}{.45\textwidth}
  \centering
  \includegraphics[width=\linewidth]{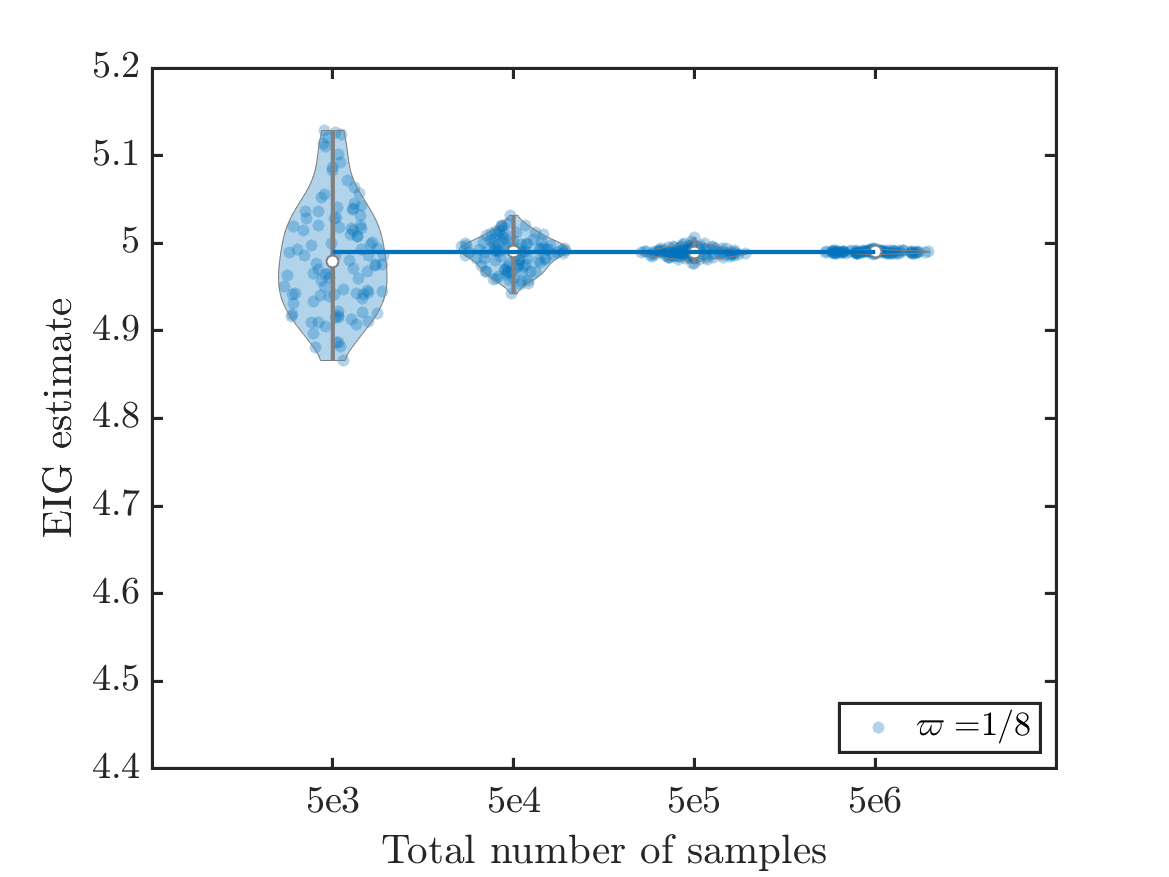}
  %\label{fig:sfig1}
\end{subfigure}%
\begin{subfigure}{.45\textwidth}
  \centering
  \includegraphics[width=\linewidth]{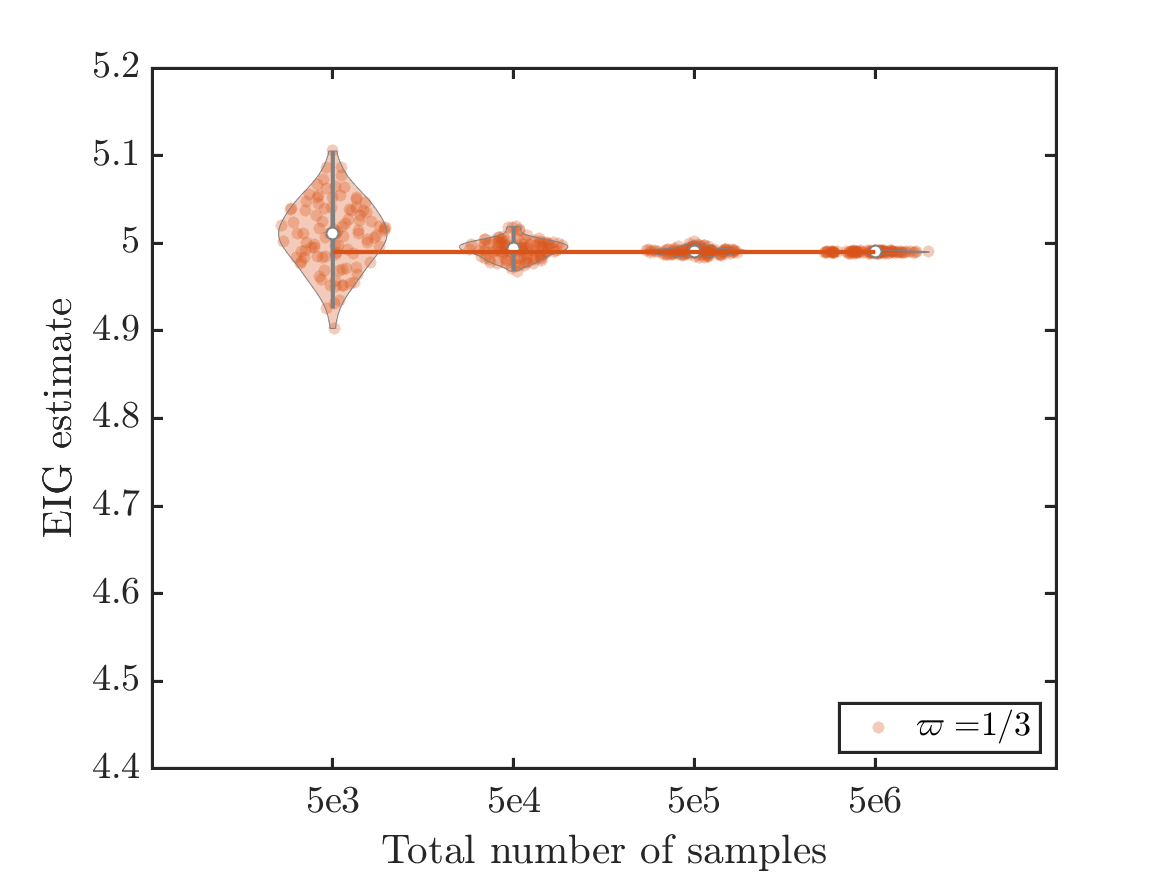}
  %\label{fig:sfig2}
\end{subfigure}
\\
\begin{subfigure}{.45\textwidth}
  \centering
  \includegraphics[width=\linewidth]{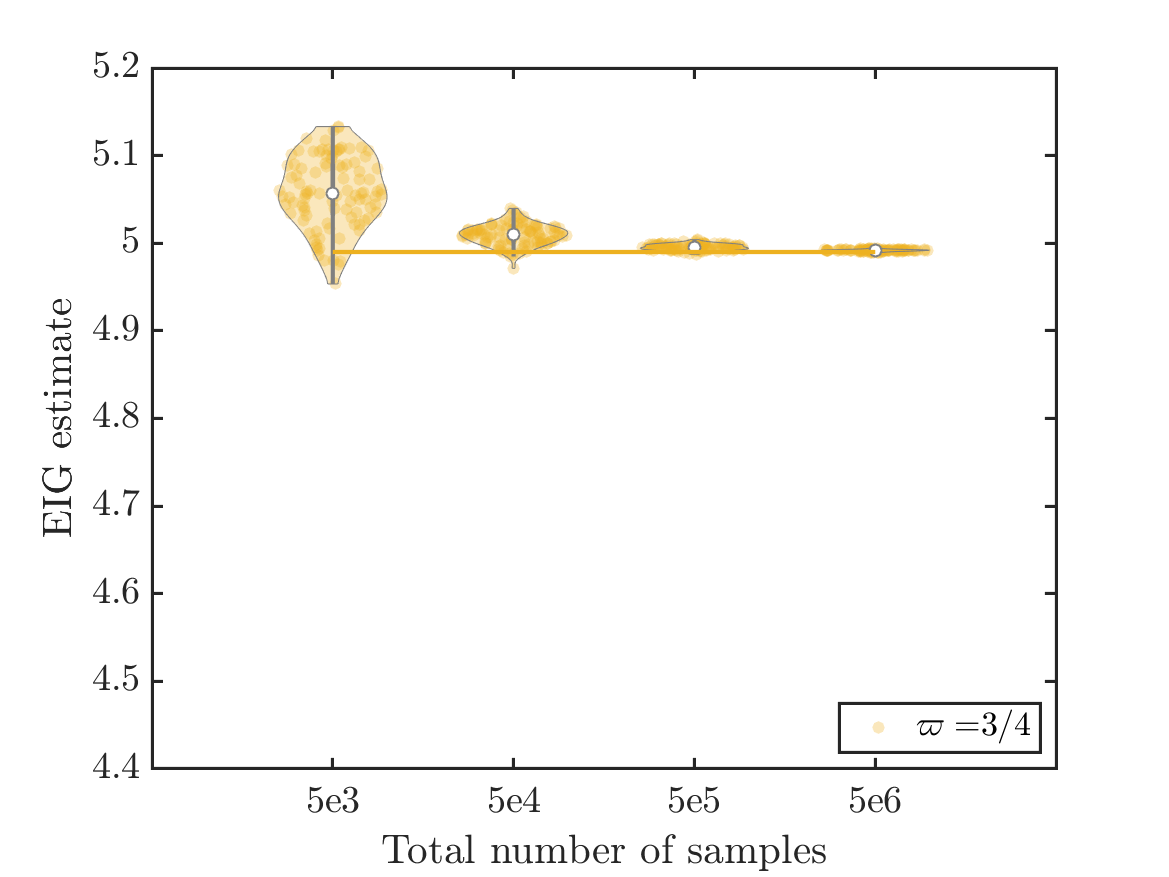}
  %\label{fig:sfig2}
\end{subfigure}
\begin{subfigure}{.45\textwidth}
  \centering
  \includegraphics[width=\linewidth]{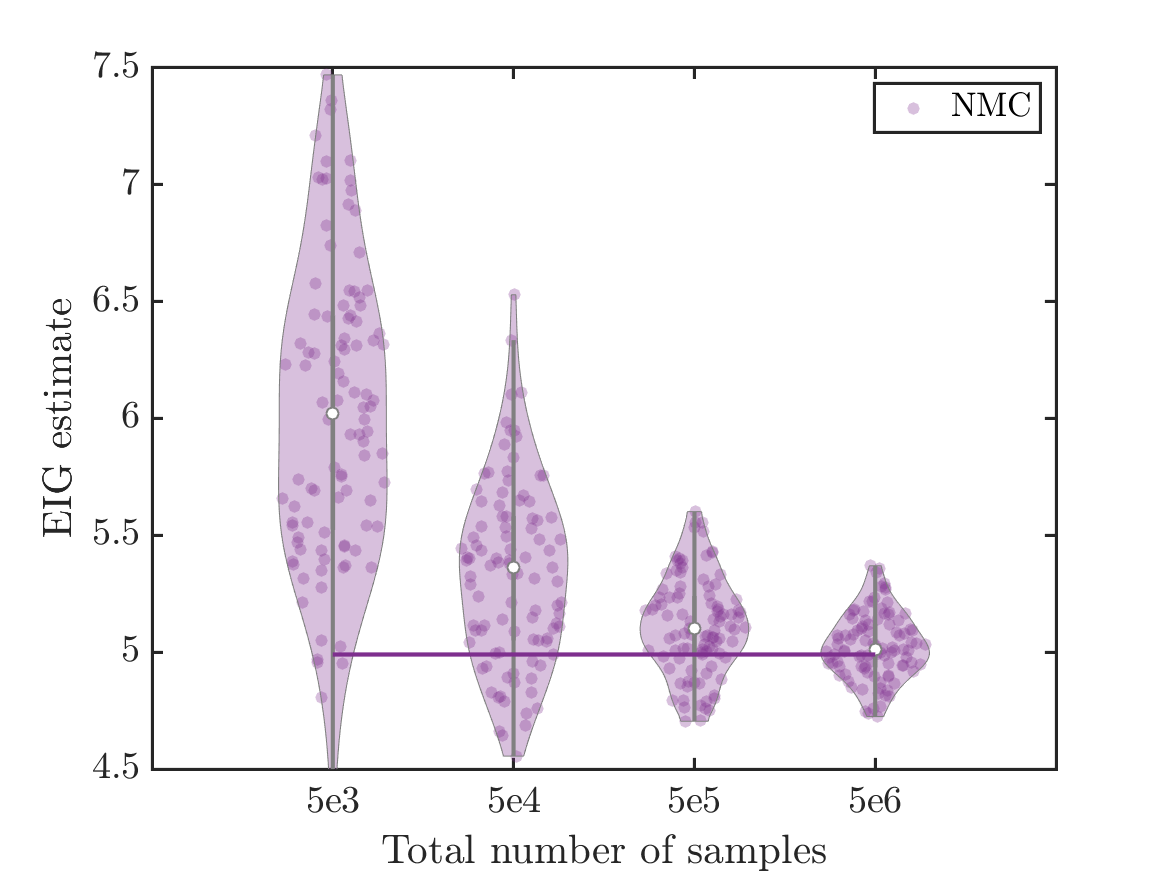}
  %\label{fig:sfig2}
\end{subfigure}
\caption{Violin plot of $\widehat{\mathrm{EIG}}_{\mathrm{m}}$, with different ratios between the number of training and evaluation samples, compared to NMC (bottom right). The solid lines represent the exact EIG value, computed via a closed-form expression.}
% \caption{EIG estimators by approximating the density of marginal: $\widehat{\mathrm{EIG}}_{\mathrm{m}} = \frac 1M \sum_{i = 1}^M\log \frac{\pi_{Y|X}( y^i| x^i)}{\widehat \pi_Y( y^i)}$, with different ratios between training and evaluation samples. The solid line denotes the EIG value computed from the closed-form expression.}
\label{fig:violin_marg}
\end{figure}

\begin{figure}[!ht]
\centering
\begin{subfigure}{.31\textwidth}
  \centering
  \includegraphics[width=\linewidth]{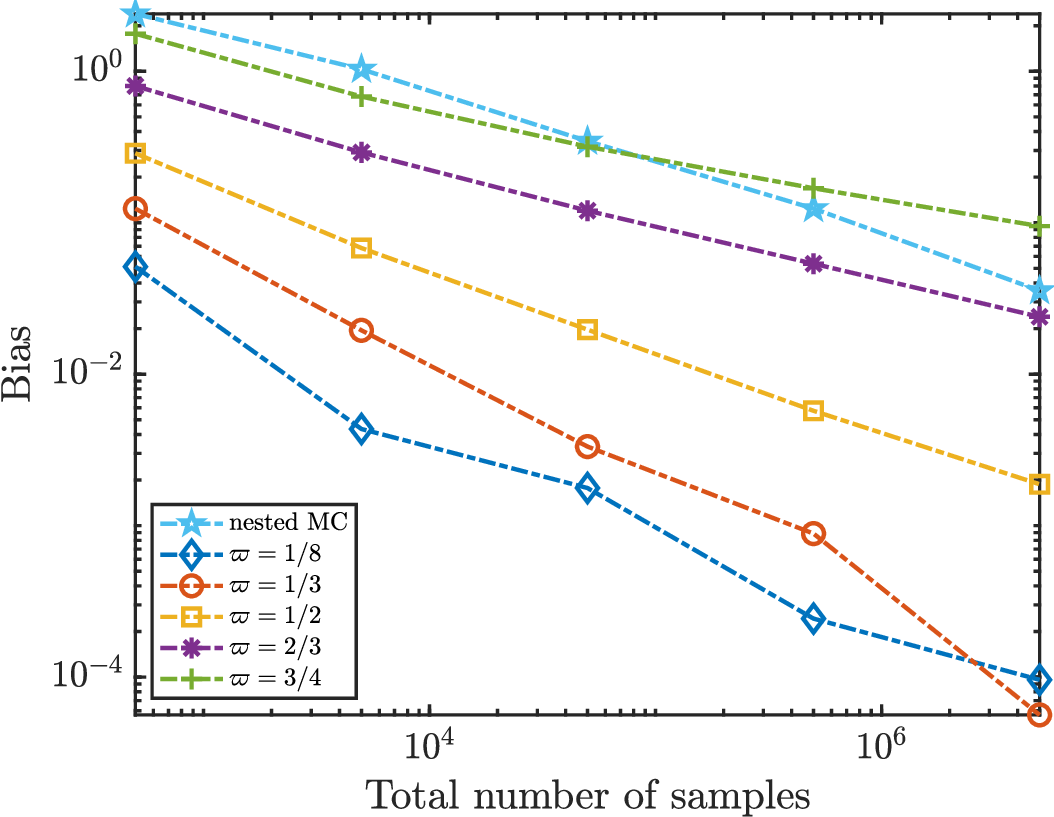}
  %\label{fig:sfig1}
\end{subfigure}%
\hspace{.5em}
\begin{subfigure}{.31\textwidth}
  \centering
  \includegraphics[width=\linewidth]{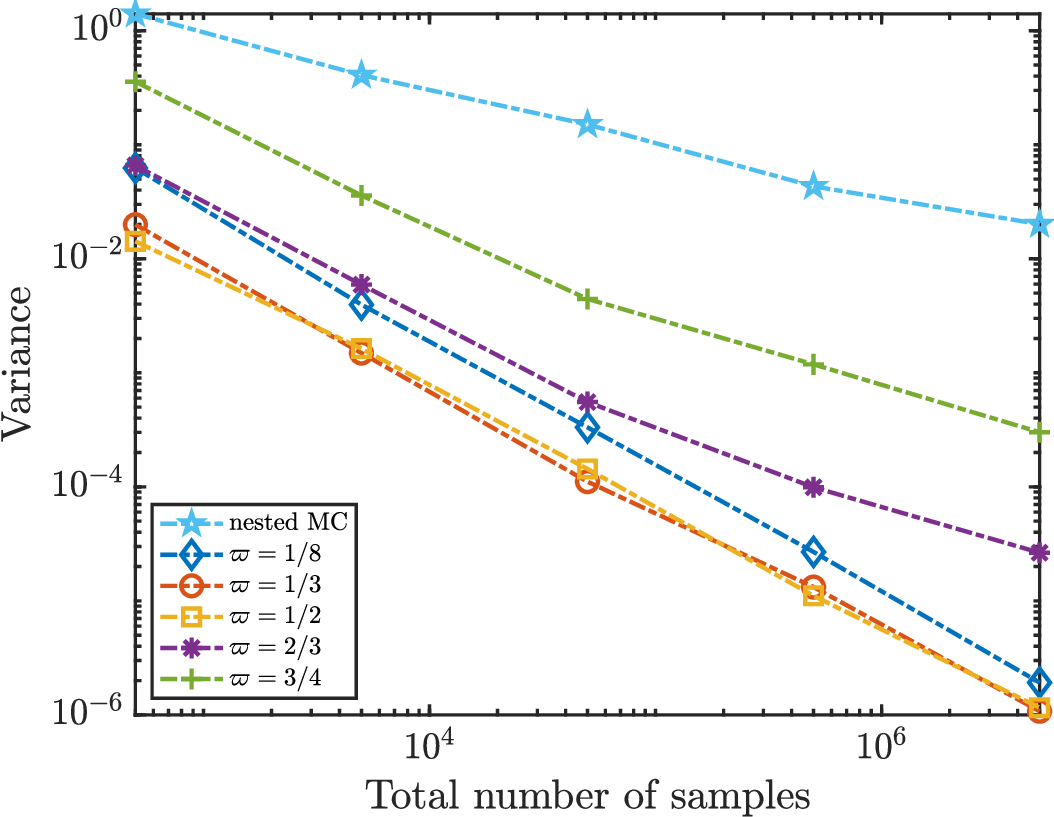}
  %\label{fig:sfig2}
\end{subfigure}
\hspace{.5em}
\begin{subfigure}{.31\textwidth}
  \centering
  \includegraphics[width=\linewidth]{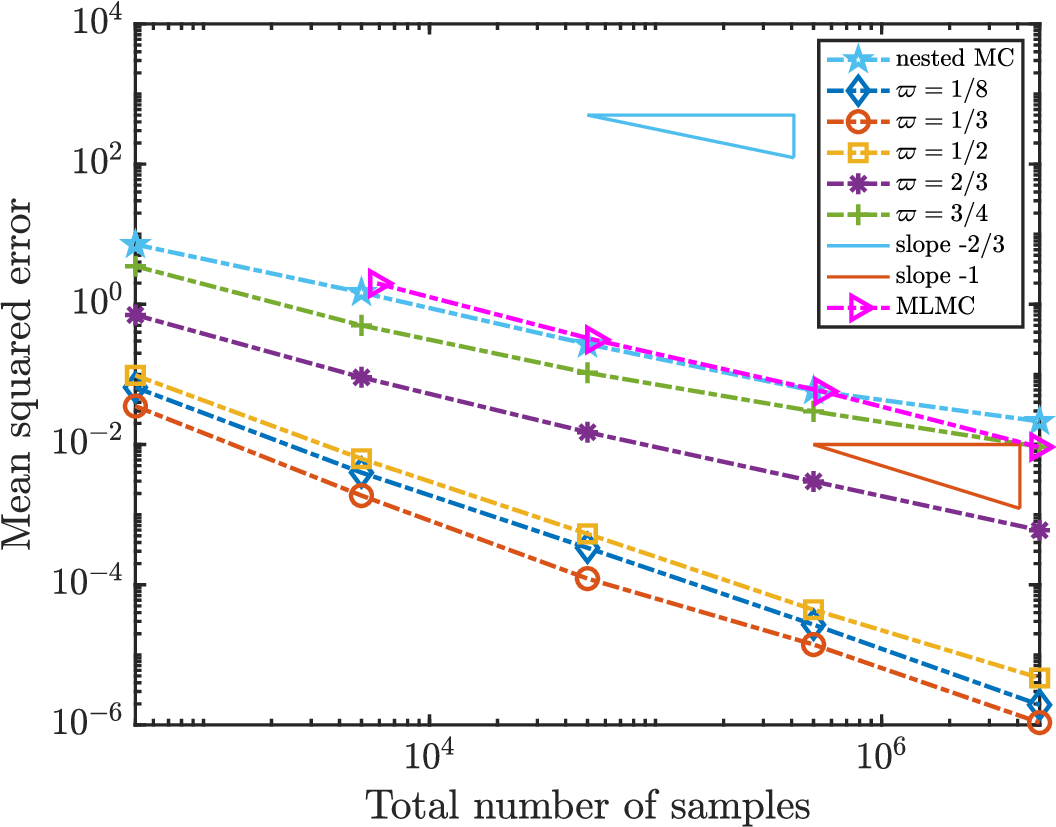}
  %\label{fig:sfig2}
\end{subfigure}
\caption{Convergence of the bias, variance, and MSE of $\widehat{\mathrm{EIG}}_{\mathrm{m}}$, also compared with NMC and MLMC.}
% \caption{Convergence of the bias, variance, and MSE of the EIG estimators by approximating the density of the marginal: $\widehat{\mathrm{EIG}}_{\mathrm{m}} = \frac 1M \sum_{i = 1}^M\log \frac{\pi_{Y|X}( y^i| x^i)}{\widehat \pi_Y( y^i)}$.}
\label{fig:conv_marg}
\end{figure}
% \todo{Out of curiosity, could we make these plots with text of the same font size as Figure 1?}

\begin{figure}[!ht]
\centering
\begin{subfigure}{.45\textwidth}
  \centering
  \includegraphics[width=\linewidth]{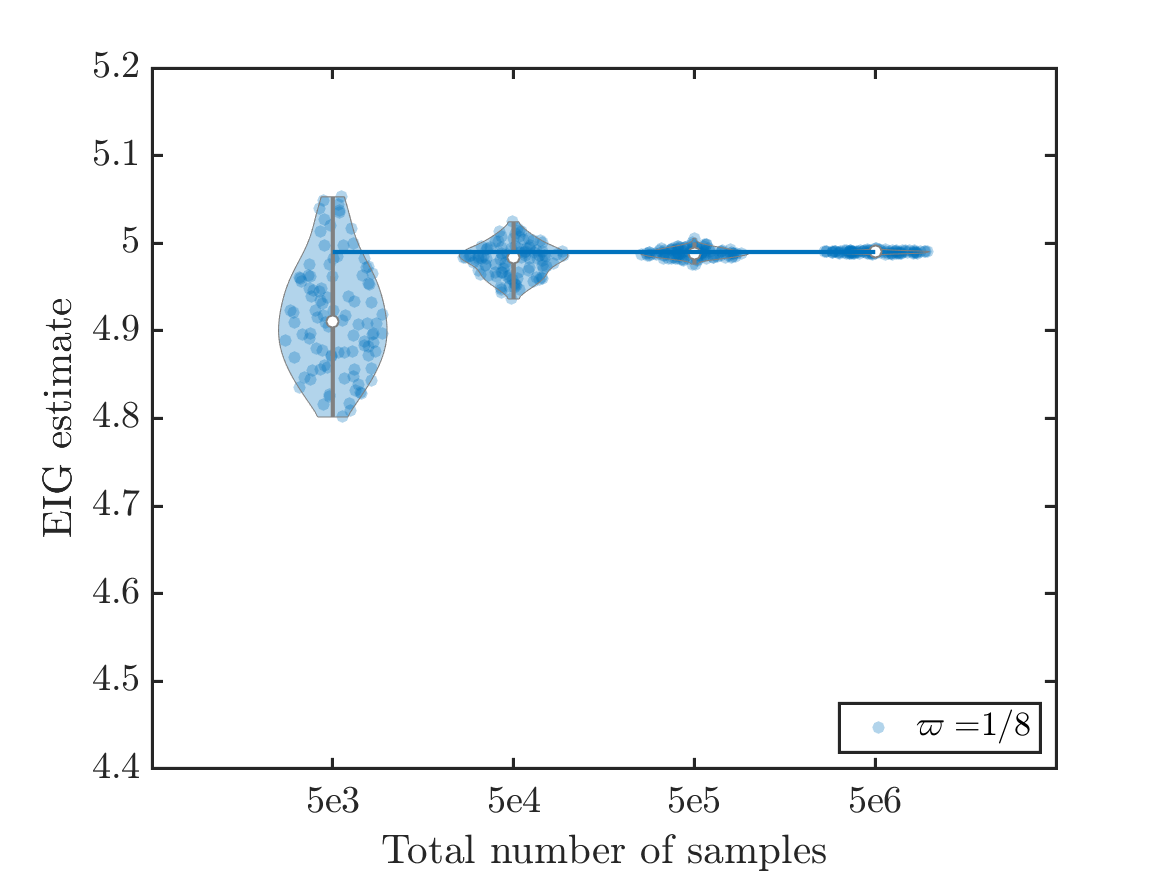}
  %\label{fig:sfig1}
\end{subfigure}%
\begin{subfigure}{.45\textwidth}
  \centering
  \includegraphics[width=\linewidth]{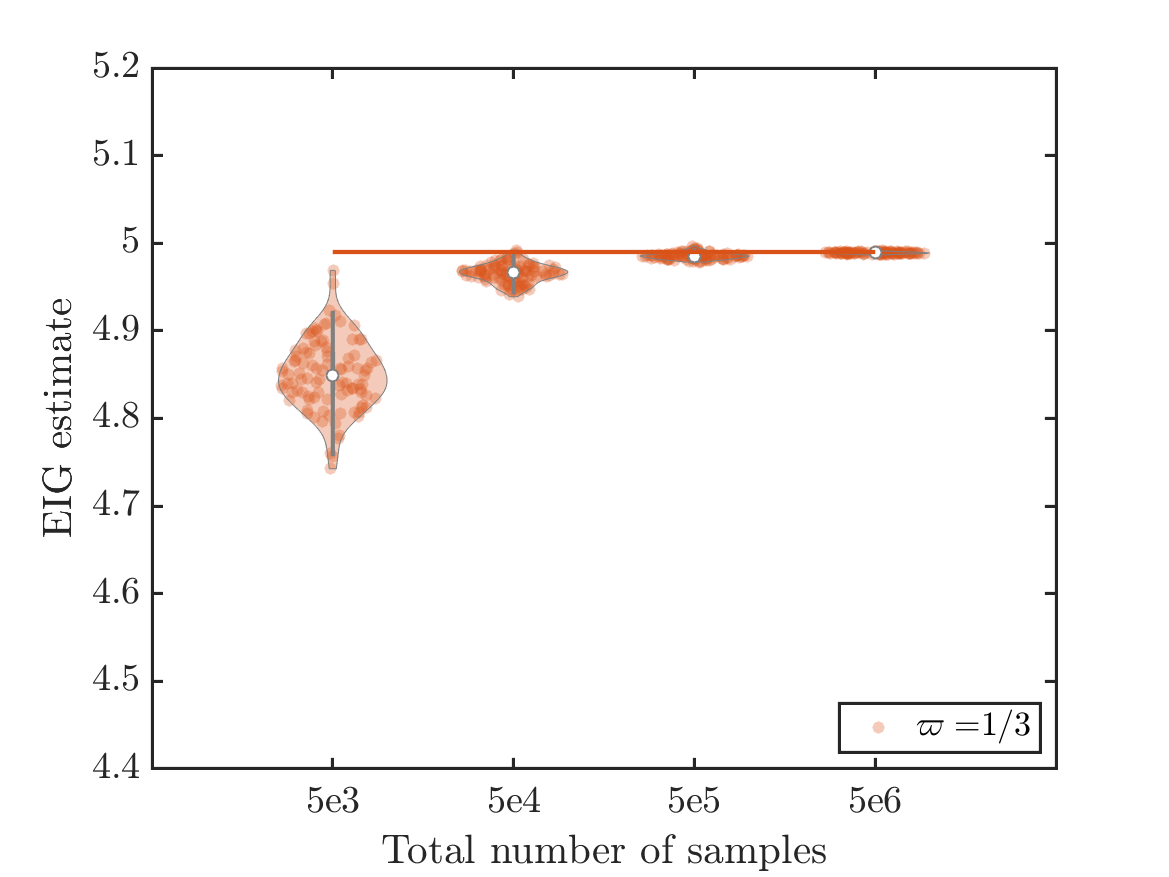}
  %\label{fig:sfig2}
\end{subfigure}
\\
\begin{subfigure}{.45\textwidth}
  \centering
  \includegraphics[width=\linewidth]{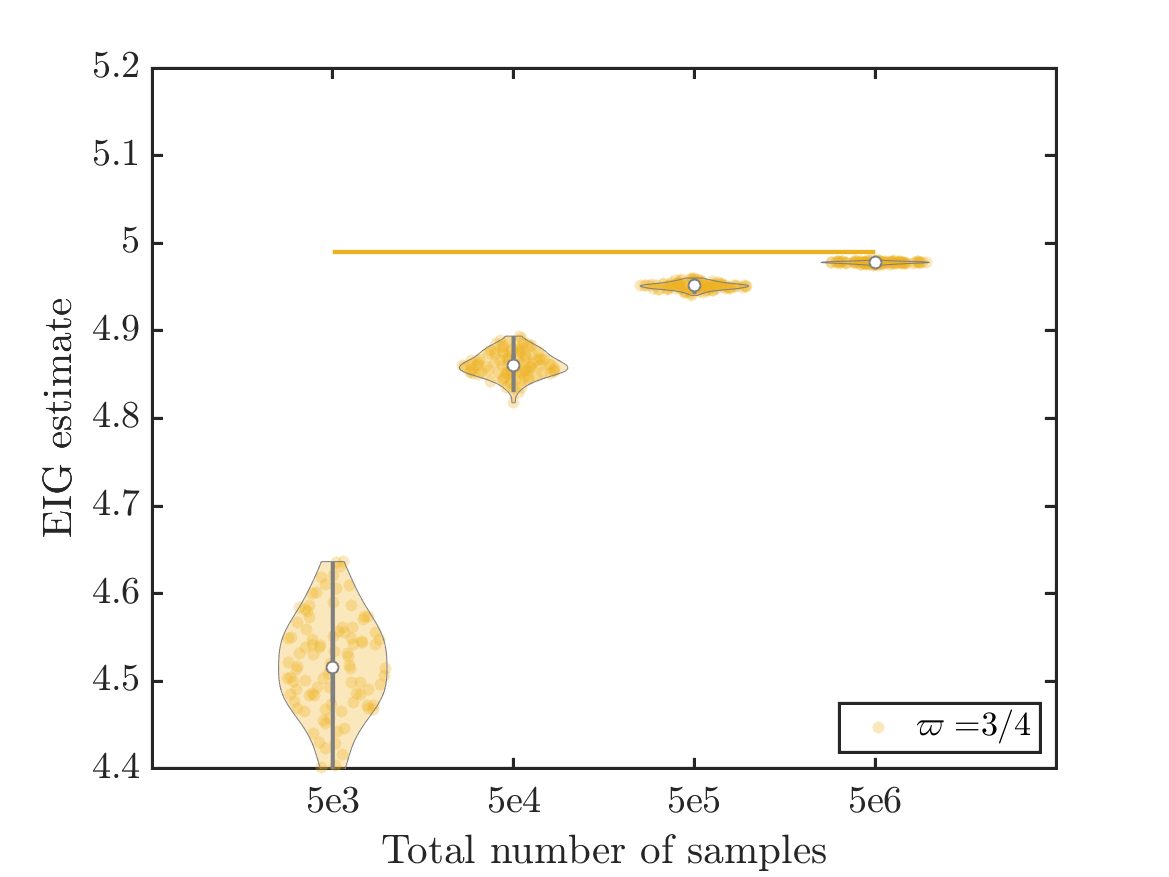}
  %\label{fig:sfig2}
\end{subfigure}
\begin{subfigure}{.45\textwidth}
  \centering
  \includegraphics[width=\linewidth]{violin_nmc.eps}
  %\label{fig:sfig2}
\end{subfigure}
\caption{Violin plot of  $\widehat{\mathrm{EIG}}_{\mathrm{pos}}$, with different ratios between the number of training and evaluation samples, compared to NMC (bottom right). The solid lines represent the exact EIG value.}
% \caption{EIG estimators by approximating the density of the posterior: $\widehat{\mathrm{EIG}}_{\mathrm{pos}} = \frac 1M \sum_{i = 1}^M\log \frac{\widehat \pi_{X|Y}( x^i| y^i)}{\pi_X( x^i)}$, with different ratios between training and evaluation samples. The solid line denotes the EIG value computed from the closed-form expression.}
\label{fig:violin_post}
\end{figure}

\begin{figure}[!ht]
\begin{subfigure}{.31\textwidth}
  \centering
  \includegraphics[width=\linewidth]{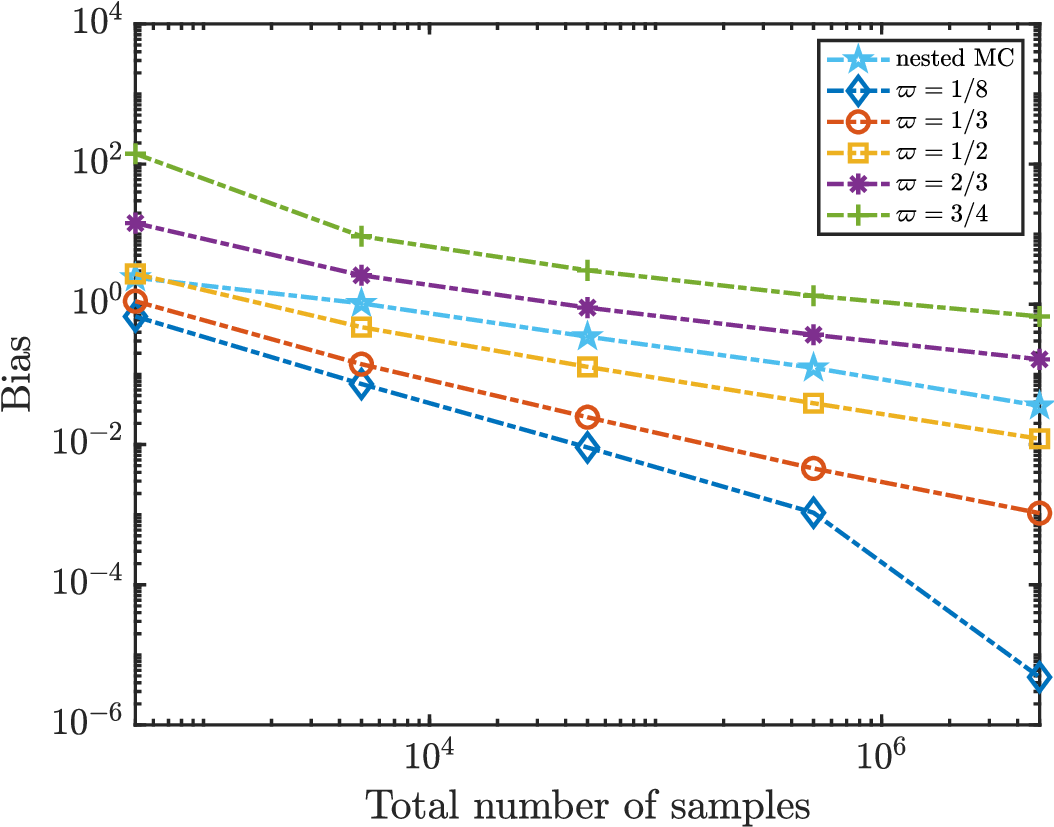}

  %\label{fig:sfig1}
\end{subfigure}%
\hspace{.5em}
\begin{subfigure}{.31\textwidth}
  \centering
  \includegraphics[width=\linewidth]{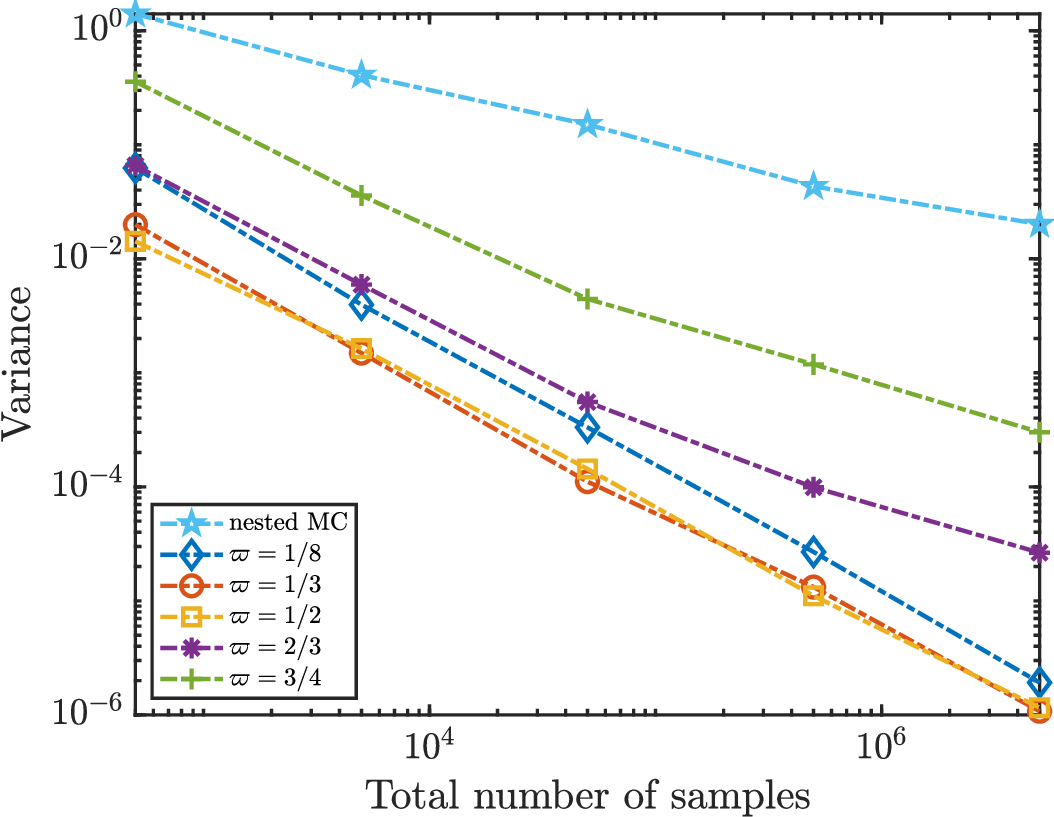}

  %\label{fig:sfig2}
\end{subfigure}
\hspace{.5em}
\begin{subfigure}{.31\textwidth}
  \centering
  \includegraphics[width=\linewidth]{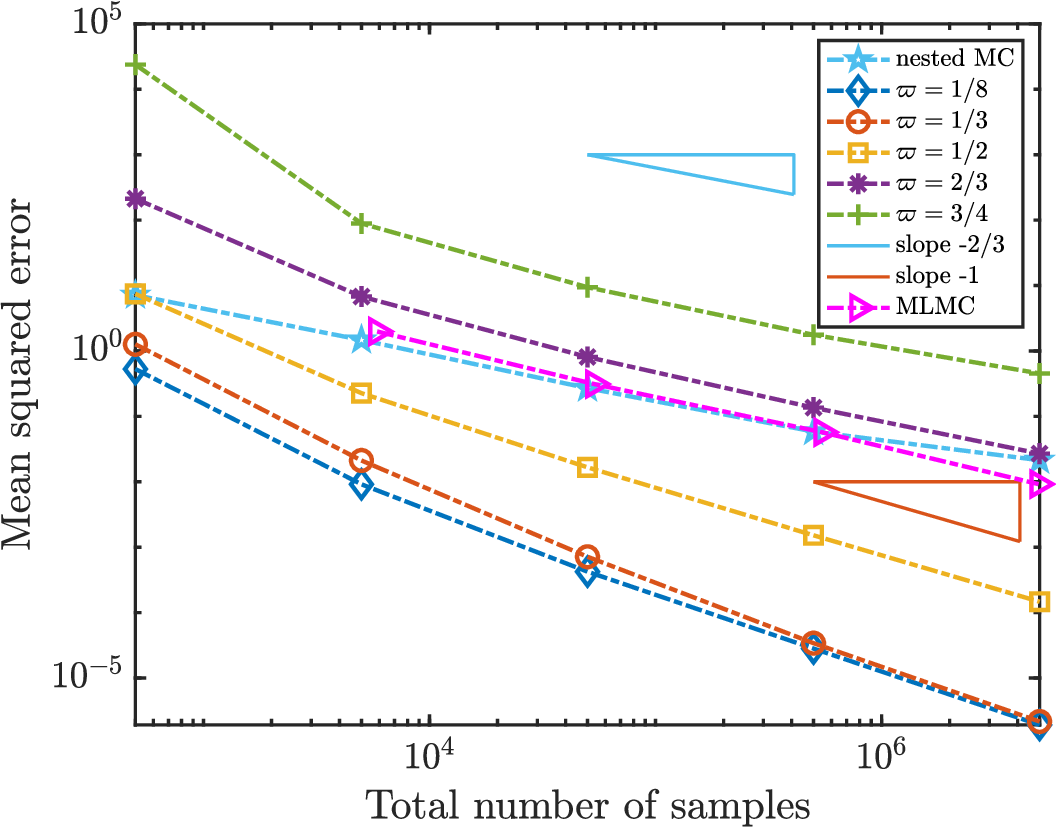}

  %\label{fig:sfig2}
\end{subfigure}
% \caption{Convergence of the bias, variance, and MSE of the EIG estimators by approximating the density of the posterior: $\widehat{\mathrm{EIG}}_{\mathrm{pos}} = \frac 1M \sum_{i = 1}^M\log \frac{\widehat \pi_{X|Y}( x^i| y^i)}{\pi_X( x^i)}$.}
\caption{Convergence of the bias, variance, and MSE of $\widehat{\mathrm{EIG}}_{\mathrm{pos}}$, compared also with NMC and MLMC.}
\label{fig:conv_post}
\end{figure}

\begin{figure}[!ht]
\centering
\begin{subfigure}{.45\textwidth}
  \centering
  \includegraphics[width=\linewidth]{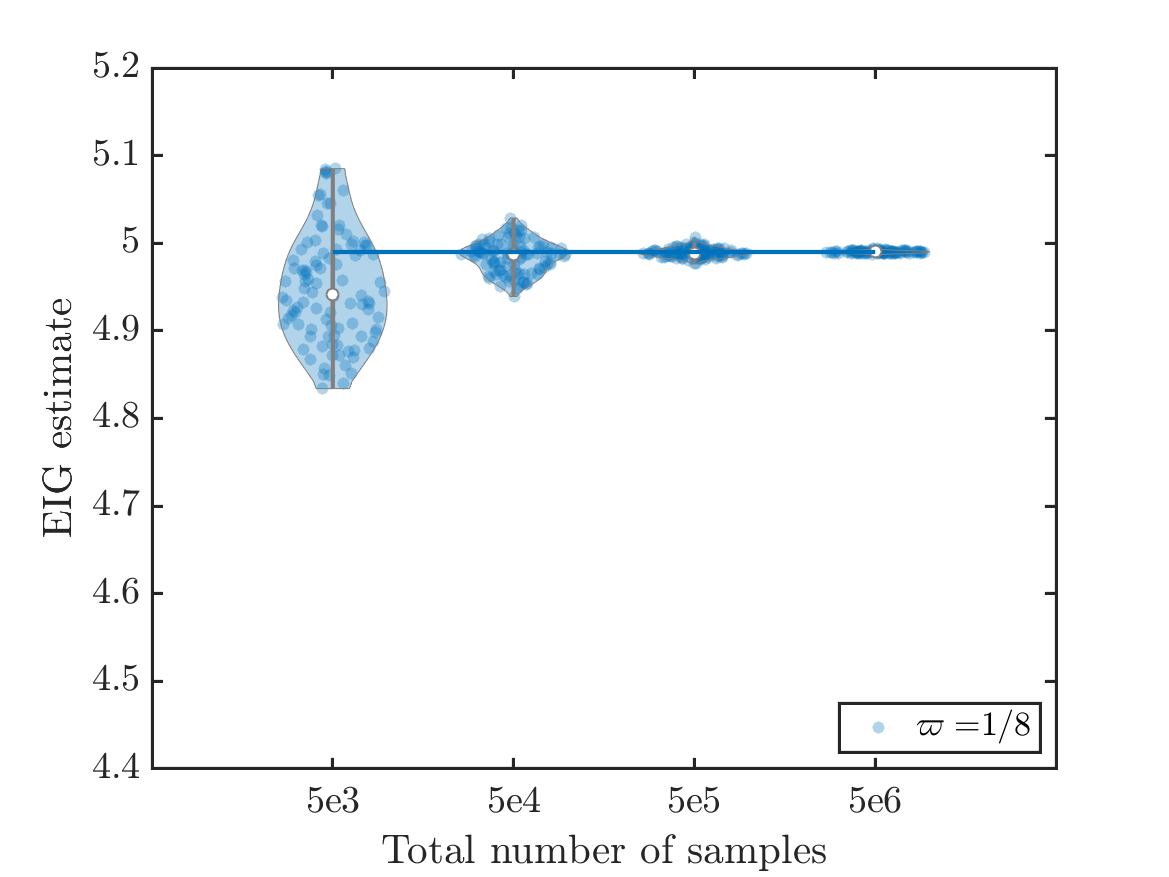}
  %\label{fig:sfig1}
\end{subfigure}%
\begin{subfigure}{.45\textwidth}
  \centering
  \includegraphics[width=\linewidth]{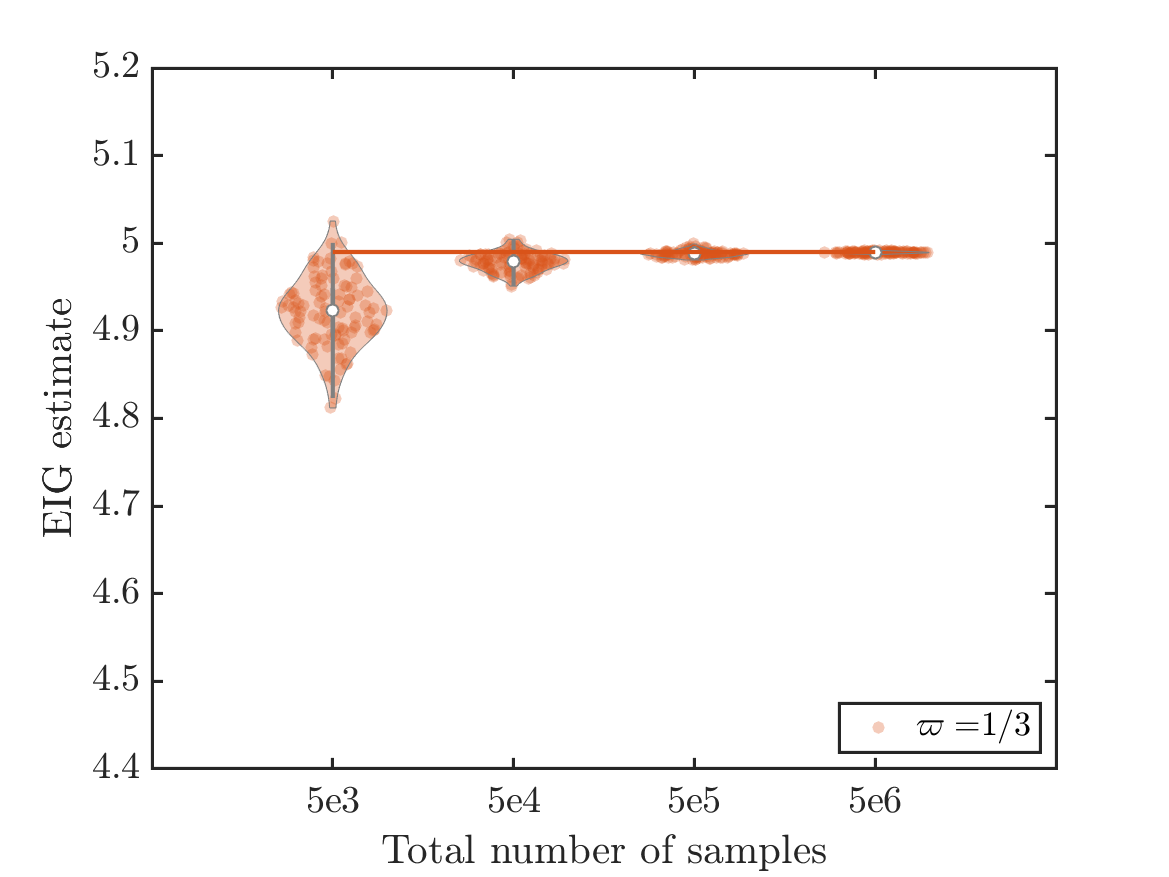}
  %\label{fig:sfig2}
\end{subfigure}
\\
\begin{subfigure}{.45\textwidth}
  \centering
  \includegraphics[width=\linewidth]{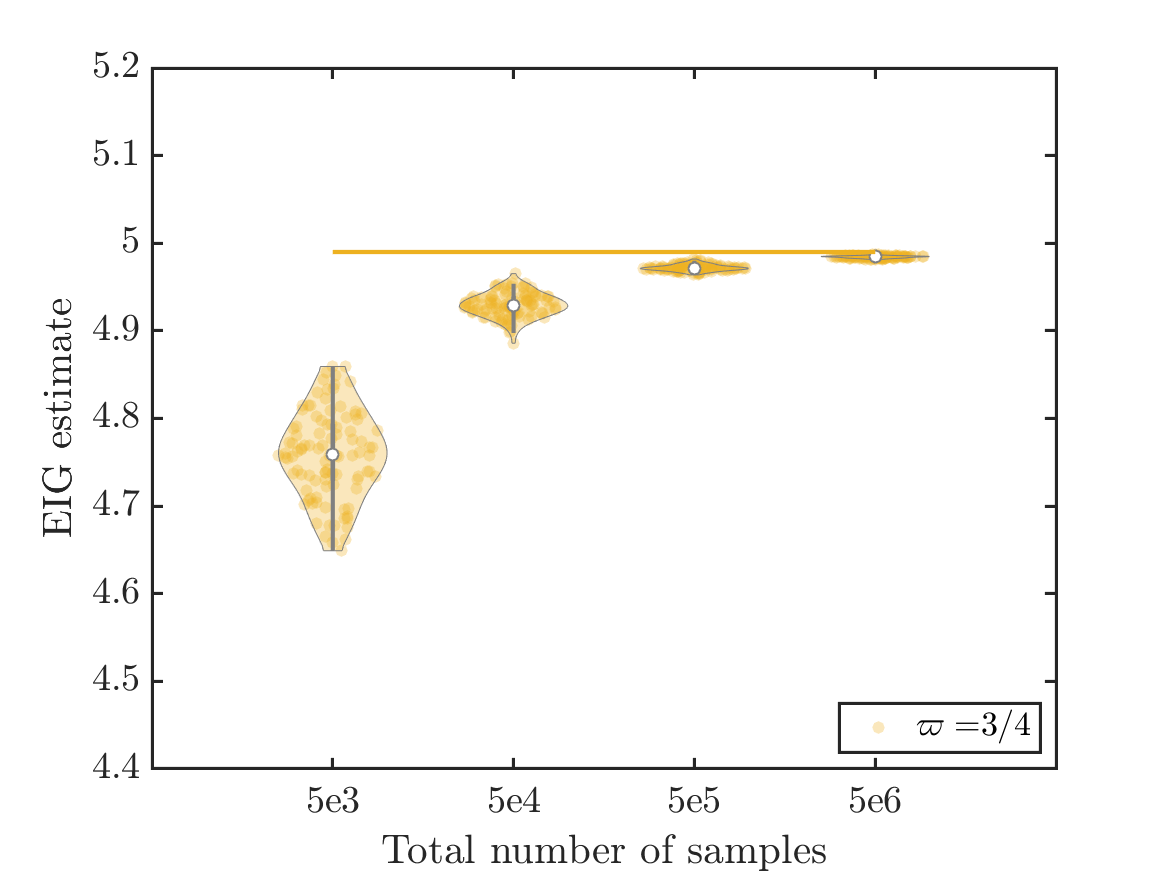}
  %\label{fig:sfig2}
\end{subfigure}
\begin{subfigure}{.45\textwidth}
  \centering
  \includegraphics[width=\linewidth]{violin_nmc.eps}
  %\label{fig:sfig2}
\end{subfigure}
\caption{Violin plot of $\widehat{\mathrm{EIG}}_{\mathrm{lik}}$, with different ratios between the number of training and evaluation samples, compared to NMC (bottom right). The solid lines represent the exact EIG value.}
% \caption{EIG estimators by approximating the densities of both the likelihood and the marginal: $\widehat{\mathrm{EIG}}_{\mathrm{lik}}= \frac 1M \sum_{i = 1}^M\log \frac{\widehat \pi_{Y|X}( y^i| x^i)}{\widehat \pi_Y( y^i)}$, with different ratios between training and evaluation samples. The solid line denotes the EIG value computed from the closed-form expression.}
\label{lik_over_marg_likfr}
\end{figure}

\begin{figure}[!ht]
\centering
\begin{subfigure}{.31\textwidth}
  \centering
  \includegraphics[width=\linewidth]{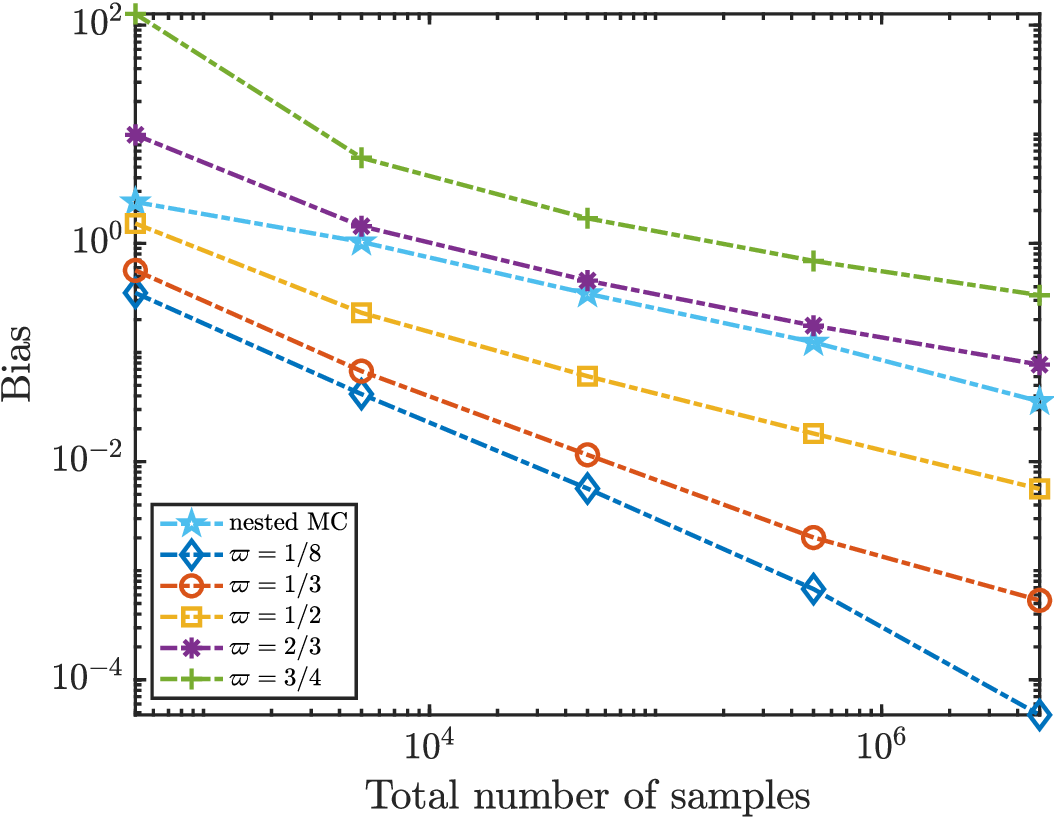}
  \caption{}
\end{subfigure}%
\hspace{.5em}
\begin{subfigure}{.31\textwidth}
  \centering
  \includegraphics[width=\linewidth]{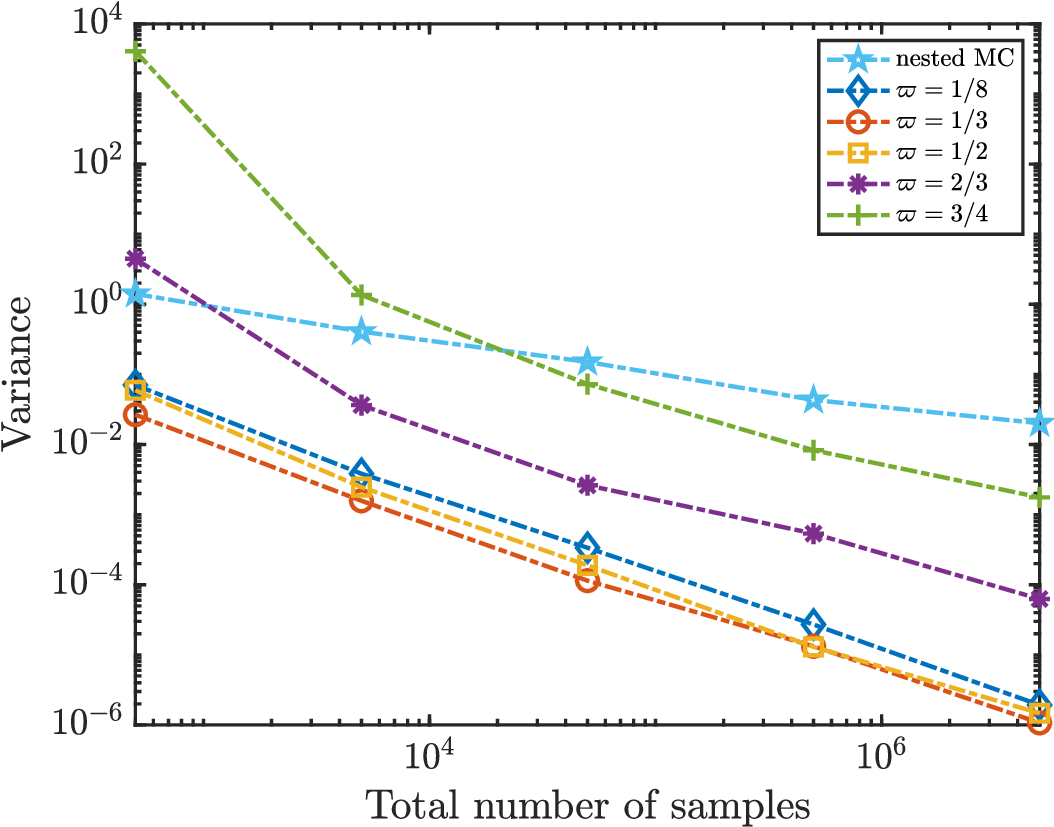}
  \caption{}
\end{subfigure}
\hspace{.5em}
\begin{subfigure}{.31\textwidth}
  \centering
  \includegraphics[width=\linewidth]{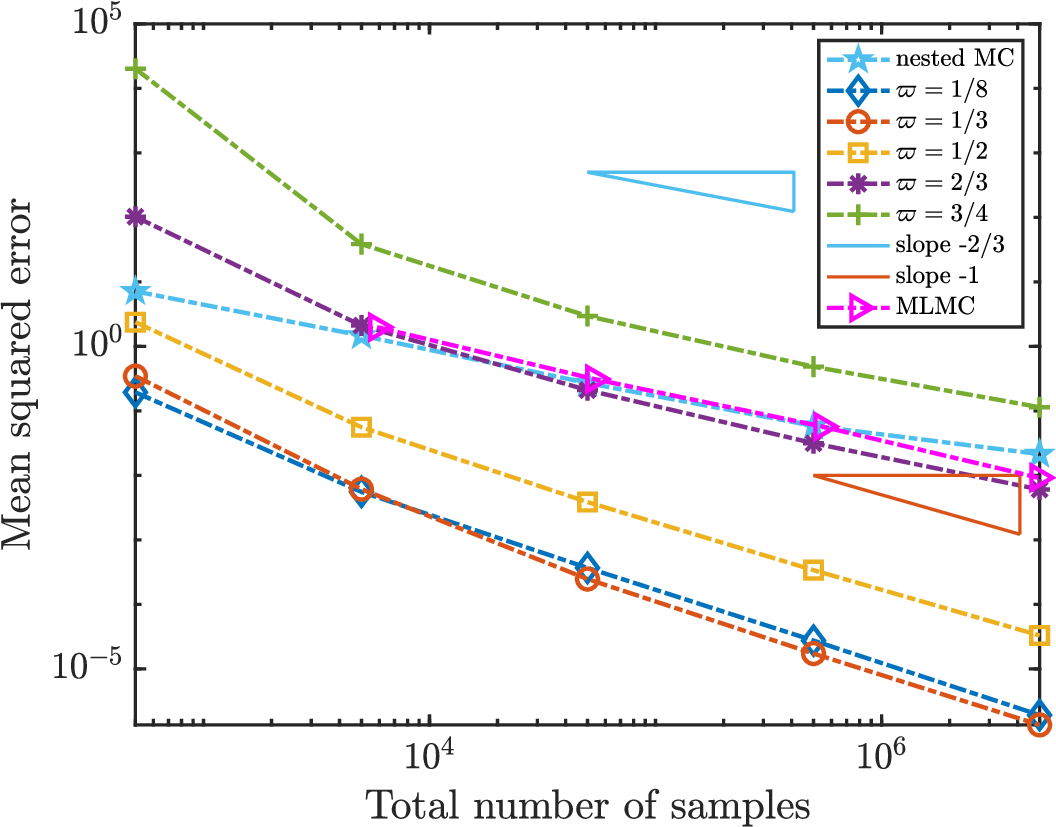}
  \caption{}
\end{subfigure}
\caption{Convergence of the bias, variance, and MSE of $\widehat{\mathrm{EIG}}_{\mathrm{lik}}$, compared also with NMC and MLMC.}
% \caption{Convergence of the bias, variance, and MSE of the EIG estimators by approximating the densities of both the likelihood and the marginal: $\widehat{\mathrm{EIG}}_{\mathrm{lik}}= \frac 1M \sum_{i = 1}^M\log \frac{\widehat \pi_{Y|X}( y^i| x^i)}{\widehat \pi_Y( y^i)}$.}
\label{fig:conv_lik_over_marg_likfr}
\end{figure}

\begin{figure}[!ht]
\centering
\begin{subfigure}{.45\textwidth}
  \centering
  \includegraphics[width=\linewidth]{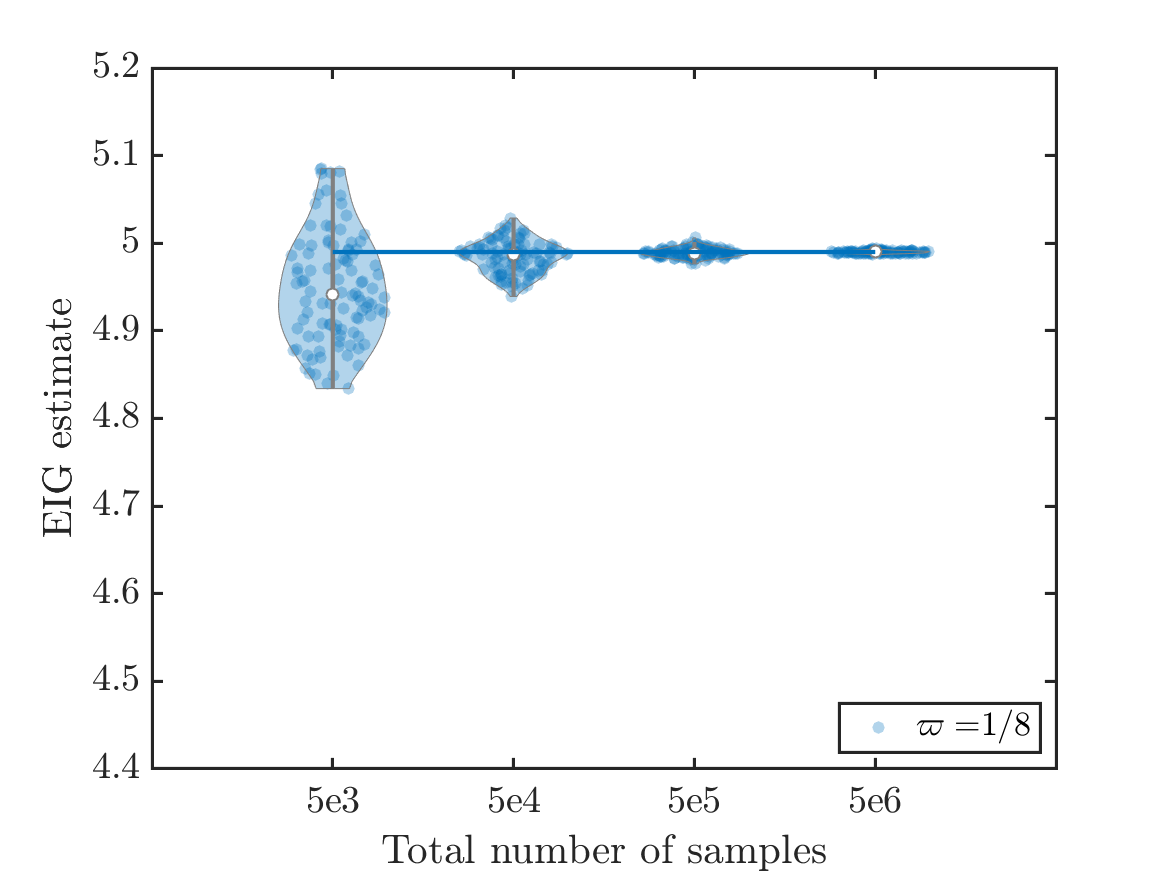}
  %\label{fig:sfig1}
\end{subfigure}%
\begin{subfigure}{.45\textwidth}
  \centering
  \includegraphics[width=\linewidth]{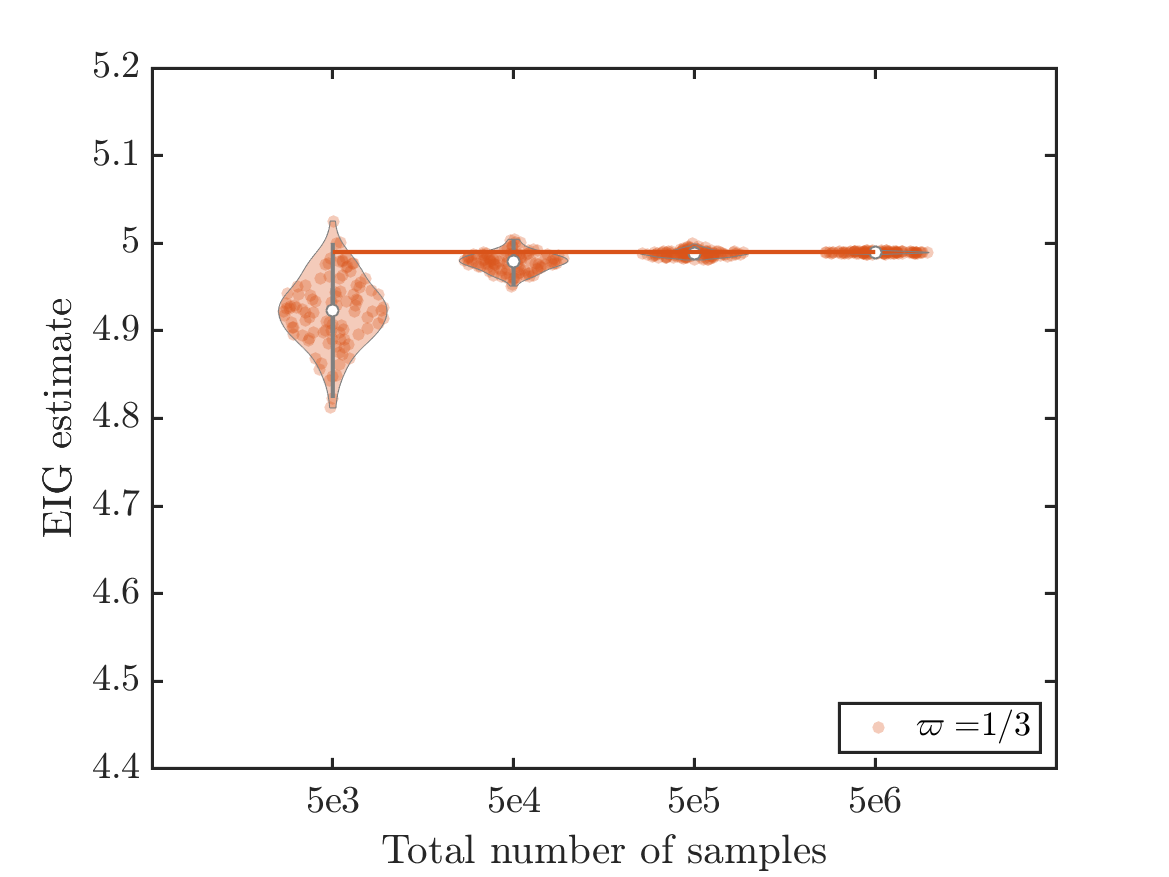}
  %\label{fig:sfig2}
\end{subfigure}
\\
\begin{subfigure}{.45\textwidth}
  \centering
  \includegraphics[width=\linewidth]{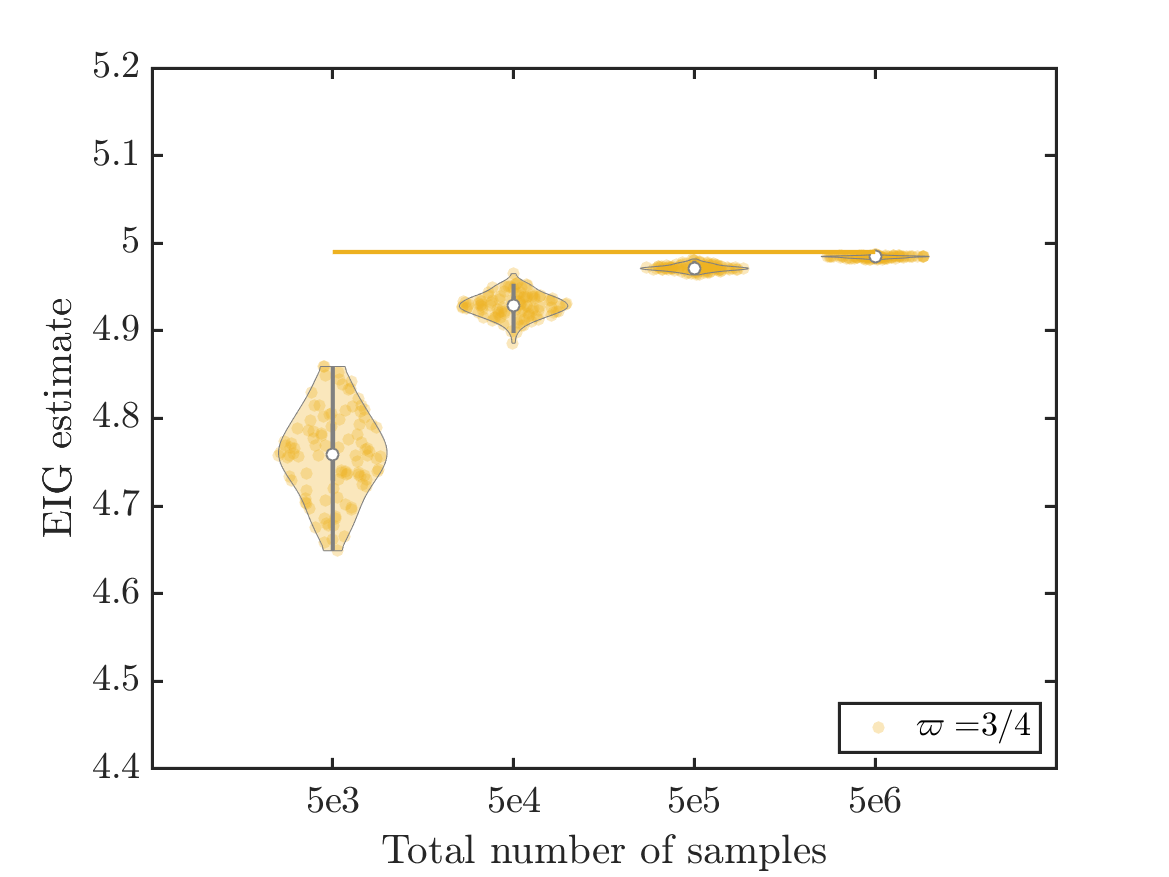}
  %\label{fig:sfig2}
\end{subfigure}
\begin{subfigure}{.45\textwidth}
  \centering
  \includegraphics[width=\linewidth]{violin_nmc.eps}
  %\label{fig:sfig2}
\end{subfigure}
\caption{Violin plot of $\widehat{\mathrm{EIG}}_{\mathrm{pr}}$, with different ratios between the number of training and evaluation samples, compared to NMC (bottom right). The solid lines represent the exact EIG value.}
% \caption{EIG estimators by approximating the densities of both the posterior and the prior: $\widehat{\mathrm{EIG}}_{\mathrm{pr}} = \frac 1M \sum_{i = 1}^M\log \frac{\widehat \pi_{X|Y}( x^i| y^i)}{\widehat \pi_X( x^i)}$, with different ratios between training and evaluation samples. The solid line denotes the EIG value computed from the closed-form expression.}
      \label{post_over_pr_likfr}
\end{figure}

\begin{figure}[!ht]
\centering
\begin{subfigure}{.31\textwidth}
  \centering
  \includegraphics[width=\linewidth]{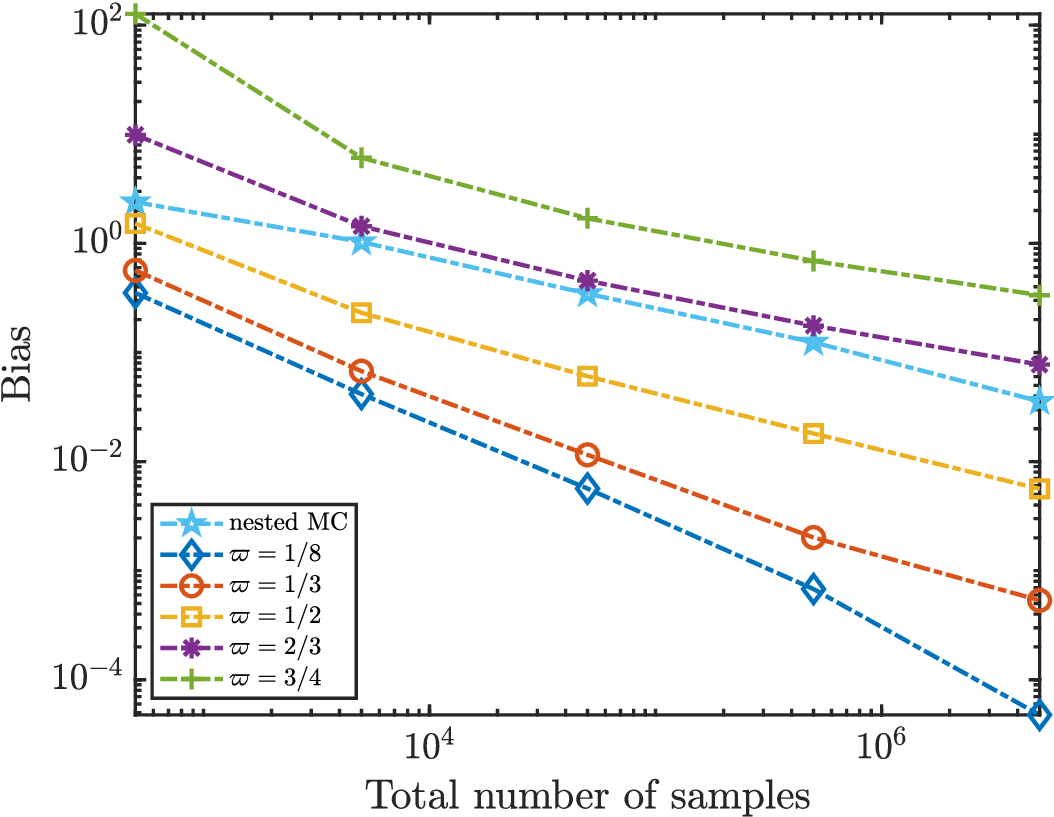}
  \caption{}
  %\label{fig:sfig1}
\end{subfigure}%
\hspace{.5em}
\begin{subfigure}{.31\textwidth}
  \centering
  \includegraphics[width=\linewidth]{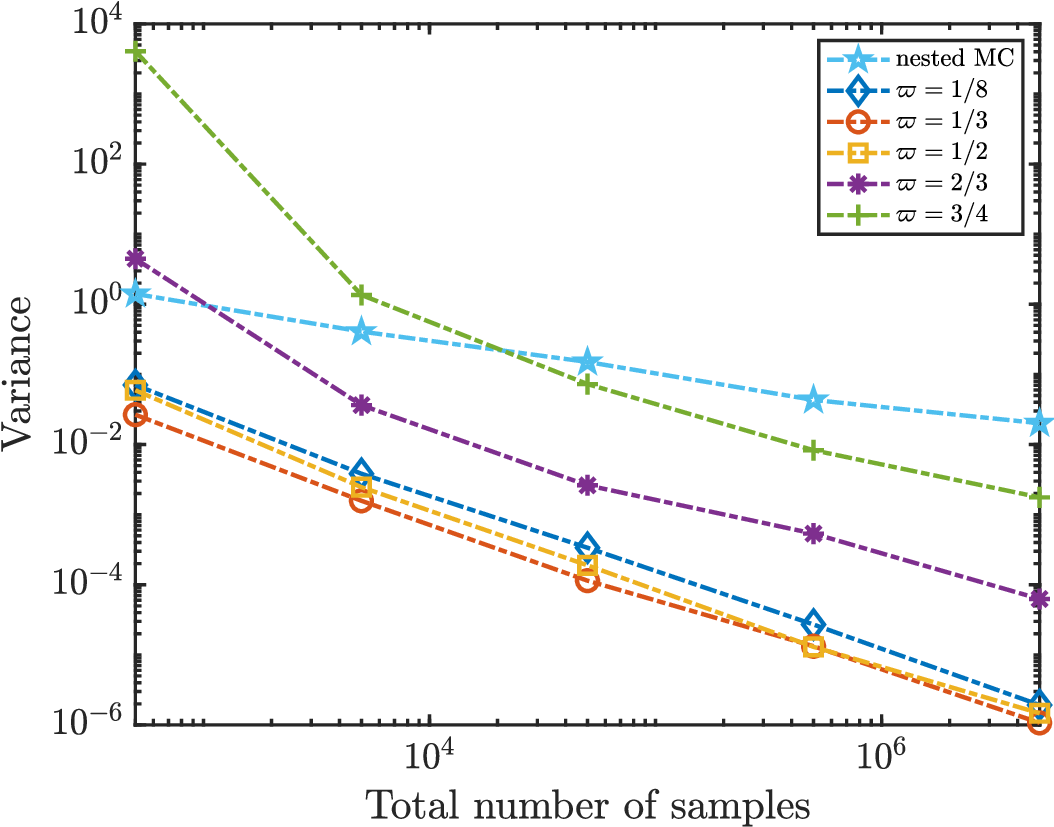}
  \caption{}
  %\label{fig:sfig2}
\end{subfigure}
\hspace{.5em}
\begin{subfigure}{.31\textwidth}
  \centering
  \includegraphics[width=\linewidth]{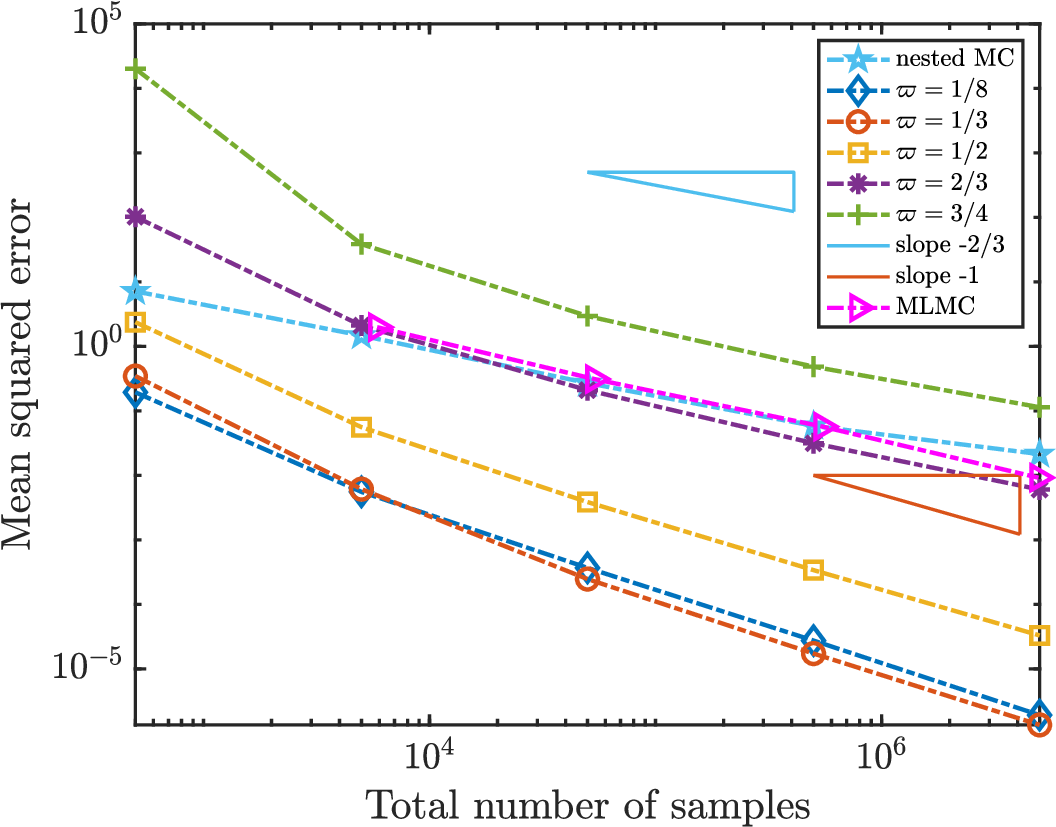}
  \caption{}
  %\label{fig:sfig2}
\end{subfigure}
\caption{Convergence of the bias, variance, and MSE of $\widehat{\mathrm{EIG}}_{\mathrm{pr}}$, compared also with NMC and MLMC.}
% \caption{Convergence of the bias, variance, and MSE of the EIG estimators by approximating the densities of both the posterior and the prior: $\widehat{\mathrm{EIG}}_{\mathrm{pr}} = \frac 1M \sum_{i = 1}^M\log \frac{\widehat \pi_{X|Y}( x^i| y^i)}{\widehat \pi_X( x^i)}$.}
\label{fig:conv_post_over_pr_likfr}
\end{figure}

% \subsection{Nonlinear example: M\"ossbauer spectroscopy}
\subsection{Transport-based estimators of focused and full EIG}\label{subsec:nonlinear_ex}
We next study a nonlinear problem, based on M\"ossbauer spectroscopy, to demonstrate the flexibility of transport maps. M\"ossbauer spectroscopy is a technique that leverages the M\"ossbauer effect, which refers to the nearly recoil-free emission and absorption of gamma radiation in solids. This technique is used to discover isomer shifts, quadrupole splitting, and magnetic Zeeman splitting. A detailed description of the model can be found in~\cite{chi} and~\cite{mossbauer1958}. The parameters in this model are ``center,'' ``width,'' ``height,'' and ``offset.'' The absorption peak is modeled using a Lorentzian profile, as per the standard parameterization described in~\cite{mossbauer_model}. This allows the number of detector counts $y_i$ at a given velocity $d$ to be expressed by the following equation, 
\begin{align*}
    Y_i(d;\cdot) = \mathrm{offset} - \mathrm{height}\frac{\mathrm{width}^2}{\mathrm{width}^2 + (\mathrm{center} - d)^2} + \mathcal{E}_i,
\end{align*}
and the parameters of the model are distributed according to the following priors:
\begin{align*}
    \mathrm{center} &\sim \mathcal N(0,1), \\ \log(\mathrm{width}) &\sim \mathcal N(0,0.3^2),\\
    \log(\mathrm{height}) &\sim \mathcal N(0,0.3^2),\\ \log(\mathrm{offset}) &\sim \mathcal N(1.0,0.2^2).
\end{align*}
The additive noise is assumed to be Gaussian, with $\mathcal{E}_i \sim \mathcal{N}(0, 0.1^2)$, and $d = [-1.3, 0, 1.3]$. For convenience, we rename the center, width, height, and offset parameters as $X_1$, $X_2$, $X_3$, and $X_4$, respectively. First, we note that this model is highly nonlinear, as evidenced by the joint distribution of the samples. While all the $X_i$ (or their logarithms) are jointly Gaussian, the joint distribution of the $Y_i$ and the joint distribution of the $Y_i$ and $X_i$ depart significantly from Gaussianity (see Figure~\ref{original_data}).

We consider the problem of estimating the following two quantities: $\I(X_1; Y)$ and $\I(X; Y)$, which we refer to as the ``focused'' and ``full'' cases, respectively. In each case, we apply both likelihood-based and likelihood-free methods.
% \todo{I wonder if these definitions should appear earlier in the paper and used throughout the numerics section.} 
\fl{For the focused case, we study the following three EIG estimators:
\begin{align*}    \widehat{\mathrm{EIG}}^{X_1}_{\mathrm{pos}}&=\frac 1M \sum_{i = 1}^M\log \frac{\widehat \pi_{X_1|Y}( x_1^i| y^i)}{\pi_{X_1}(x_1^i)},\\
\widehat{\mathrm{EIG}}^{X_1}_{\mathrm{pr}} &= \frac 1M \sum_{i = 1}^M\log \frac{\widehat \pi_{X_1|Y}(x_1^i| y^i)}{\widehat \pi_{X_1}(x_1^i)},\\
\widehat{\mathrm{EIG}}^{X_1}_{\mathrm{lik}} &= \frac 1M \sum_{i = 1}^M\log \frac{\widehat \pi_{Y|X_1}( y^i|x^i_1)}{\widehat \pi_Y( y^i)}.
\end{align*}}
% \begin{itemize}
%     \item The focused case: estimate $\I(X_1;Y)$
%     \begin{itemize}
%         \item $\widehat{\mathrm{EIG}}_{\mathrm{pos}}=\frac 1M \sum_{i = 1}^M\log \frac{\widehat \pi_{X_1|Y}( x_1^i| y^i)}{\pi_{X_1}(x_1^i)}$
%         \item $\widehat{\mathrm{EIG}}_{\mathrm{pr}} = \frac 1M \sum_{i = 1}^M\log \frac{\widehat \pi_{X_1|Y}(x_1^i| y^i)}{\widehat \pi_{X_1}(x_1^i)}$
%         \item $\widehat{\mathrm{EIG}}_{\mathrm{lik}} = \frac 1M \sum_{i = 1}^M\log \frac{\widehat \pi_{Y|X_1}( y^i|x^i_1)}{\widehat \pi_Y( y^i)}$
%     \end{itemize}
       
%     \item The full case: estimate $\I(X;Y)$
%     \begin{itemize}
%         \item $\widehat{\mathrm{EIG}}_{\mathrm{marg}} = \frac 1M \sum_{i = 1}^M\log \frac{\pi_{Y|X}( y^i| x^i)}{\widehat \pi_Y( y^i)}$
%         \item $\widehat{\mathrm{EIG}}_{\mathrm{pos}} = \frac 1M \sum_{i = 1}^M\log \frac{\widehat \pi_{X|Y}( x^i| y^i)}{\pi_X( x^i)}$
%         \item $\widehat{\mathrm{EIG}}_{\mathrm{lik}} = \frac 1M \sum_{i = 1}^M\log \frac{\widehat \pi_{Y|X}( y^i| x^i)}{\widehat \pi_Y( y^i)}$
%         \item $\widehat{\mathrm{EIG}}_{\mathrm{pr}} = \frac 1M \sum_{i = 1}^M\log \frac{\widehat \pi_{X|Y}( x^i| y^i)}{\widehat \pi_X( x^i)}$        
%     \end{itemize}
% \end{itemize}

All densities are estimated using $N$ pairs of samples, using the adaptive transport procedure described in Section~\ref{subsubsec:ATM}. We set the ``ground truth'' to be the EIG estimate obtained using NMC. To obtain a consistent estimator of EIG using Monte Carlo for the focused case, we employed layered multiple importance sampling (LMIS) from~\cite{chi} with $1.58 \times 10^6$ samples (5623 for the outer loop and 281 for the inner loops). For the full case, we used the NMC algorithm with $10^{10}$ samples, where the allocation of samples between the inner and outer loops is determined using the optimal allocation proposed in~\cite{chi}. In this case, we have $3.83\times 10^6$ samples for the inner loop and 2610 for the outer loop. We repeated the procedure 10 times to calculate the mean and variance of the EIG estimates obtained using LMIS and NMC. The means of the Monte Carlo estimates in the focused and full cases are $1.54$ and $4.52$ respectively, with corresponding variances of $4.73\times 10^{-5}$ and $4.59 \times 10^{-4}$.
We then obtained EIG estimators using transport maps with an increasing total number of samples: 500, 5000, and 50000, and with the number of training and evaluation samples split according to the optimal sample allocation $M/N \sim \mathcal O(L^{1/3})$, as derived in the previous section. We repeated the experiments and obtained $30$ transport map-based EIG estimates. We present the numerical results of both cases in Figures~\ref{nonlinear1} and~\ref{nonlinear2}.

From the plots, we observe that as the number of samples increases, the EIG estimates converge to the ground truth and the variance decreases considerably in both cases. The performance in the focused case is generally better since we only need to estimate a 4-dimensional map, whereas in the full case, we need to estimate a 7-dimensional map. 
\revise{When estimating the EIG using Monte Carlo methods, the focused case is more computationally expensive, as additional Monte Carlo estimation is required to approximate the conditional likelihood.} In Figure~\ref{comparison}, we see that both TM estimators converge. Since NMC is positively biased, and since $\widehat{\mathrm{EIG}}_{\mathrm{m}}$ and $\widehat{\mathrm{EIG}}_{\mathrm{pos}}$ are unbiased estimates of upper and lower bounds, respectively, the true EIG is likely between $4.1$ and $4.5$.

Finally, to illustrate the accuracy of the triangular transport maps yielding the density estimates used in the EIG calculations for this example, Figure~\ref{fig:push_forward_normal} illustrates the pushforward of the original joint distribution $\pi_{X,Y}$ by an estimated transport map $\widehat{S}$ (where the samples used to estimate the map are independent from the samples that are pushed forward). 
% \todo{alt, if you know: ``\ldots by a map $\widehat{S}$ estimated with $N=\textcolor{red}{XX}$ samples.} 
This pushforward distribution appears very close to the standard normal reference distribution $\eta$. To further illustrate the Gaussianity of the pushforward samples,  Figure~\ref{fig:gaussian_quantiles} shows Gaussian quantile-quantile (Q–Q) plots of projected pushforward samples. To be precise, we generate 20 random vectors of unit length in $\R^{n_x + n_y}$ (where $n_x + n_y = 7$), and project samples from the pushforward distribution onto each of these 20 random directions. We then plot the quantiles of these random projections against those of the standard normal distribution. If the pushforward distribution were exactly Gaussian, the Q–Q plots of all such random projections would align with the black line (see Box 2.4 in \cite{Santambrogio}). As shown in Figure~\ref{fig:gaussian_quantiles}, only the tails of the push-forward samples deviate slightly, demonstrating the effectiveness of transport maps in transforming highly non-Gaussian samples into a standard Gaussian distribution.

\begin{figure}
\begin{center}
      \includegraphics[width = 4in]{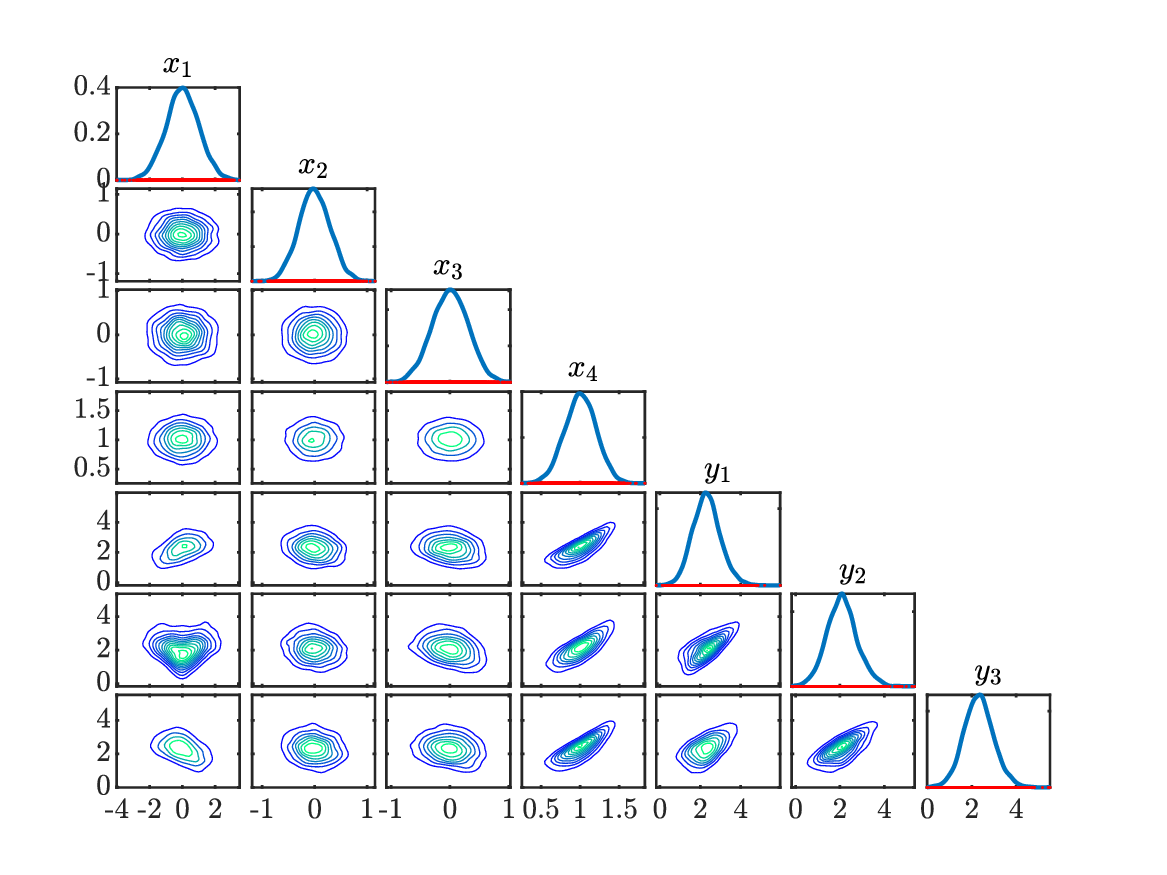}
      \caption{M\"ossbauer example. Samples from the joint distribution $\pi_{X,Y}$.}
      \label{original_data}
\end{center}
\end{figure}

\begin{figure}[!ht]
\centering
\begin{subfigure}{.33\textwidth}
  \centering
  \includegraphics[width=\linewidth]{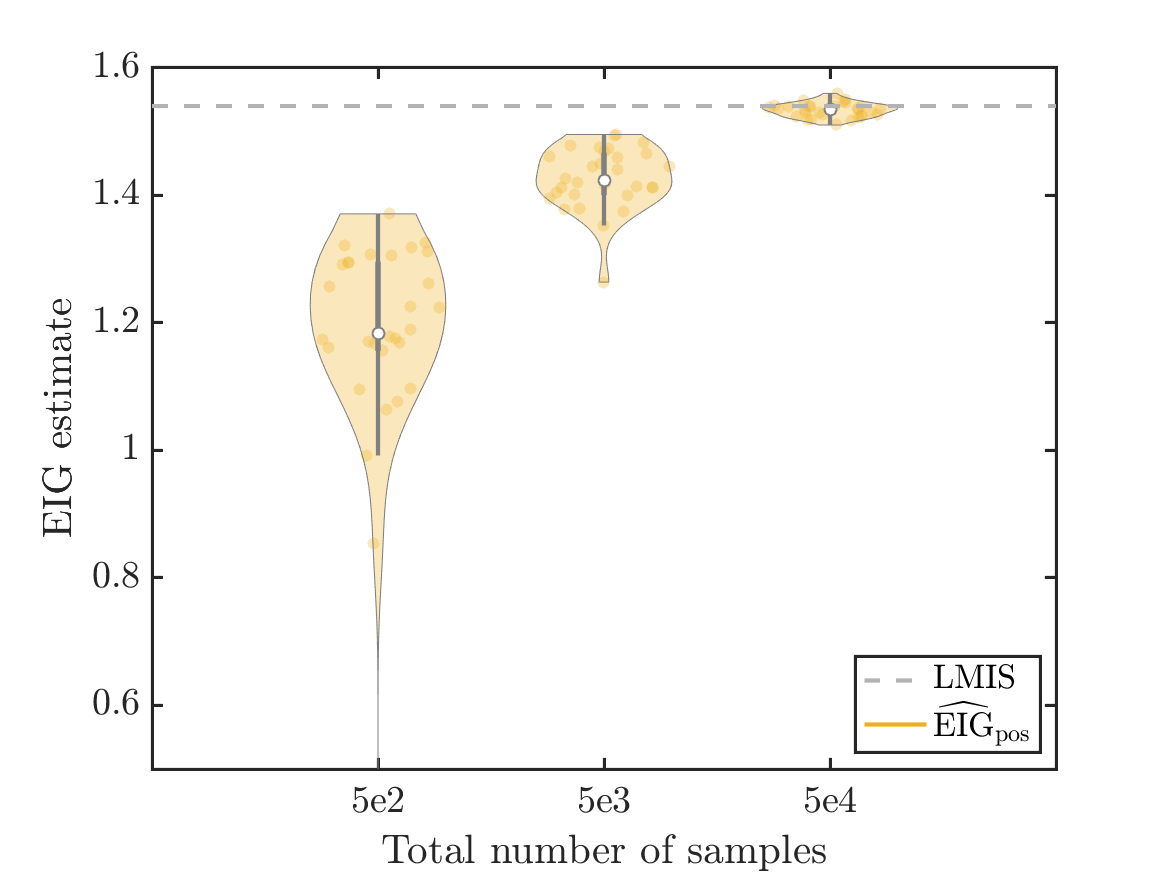}
  \caption{}
\end{subfigure}%
\begin{subfigure}{.33\textwidth}
  \centering  \includegraphics[width=\linewidth]{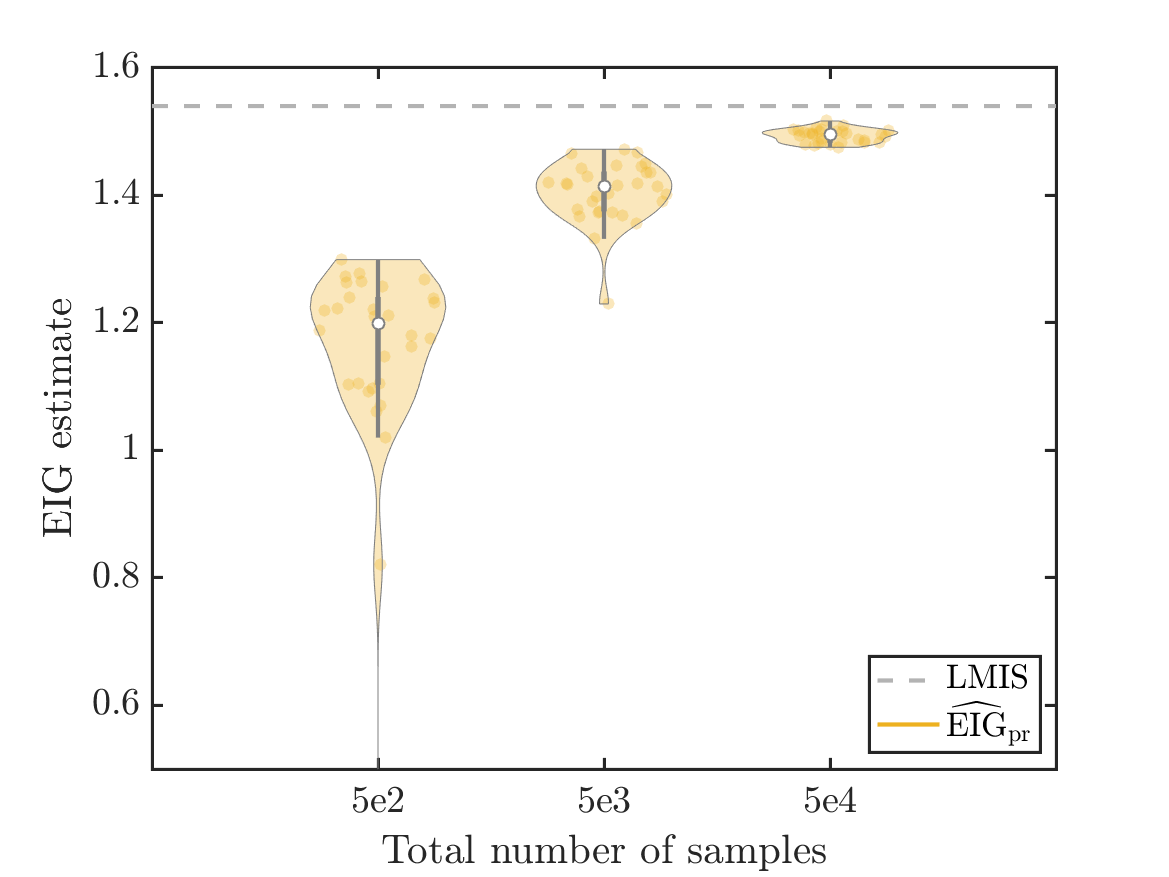}
  \caption{}
\end{subfigure}
\begin{subfigure}{.33\textwidth}
  \centering
 \includegraphics[width=\linewidth]{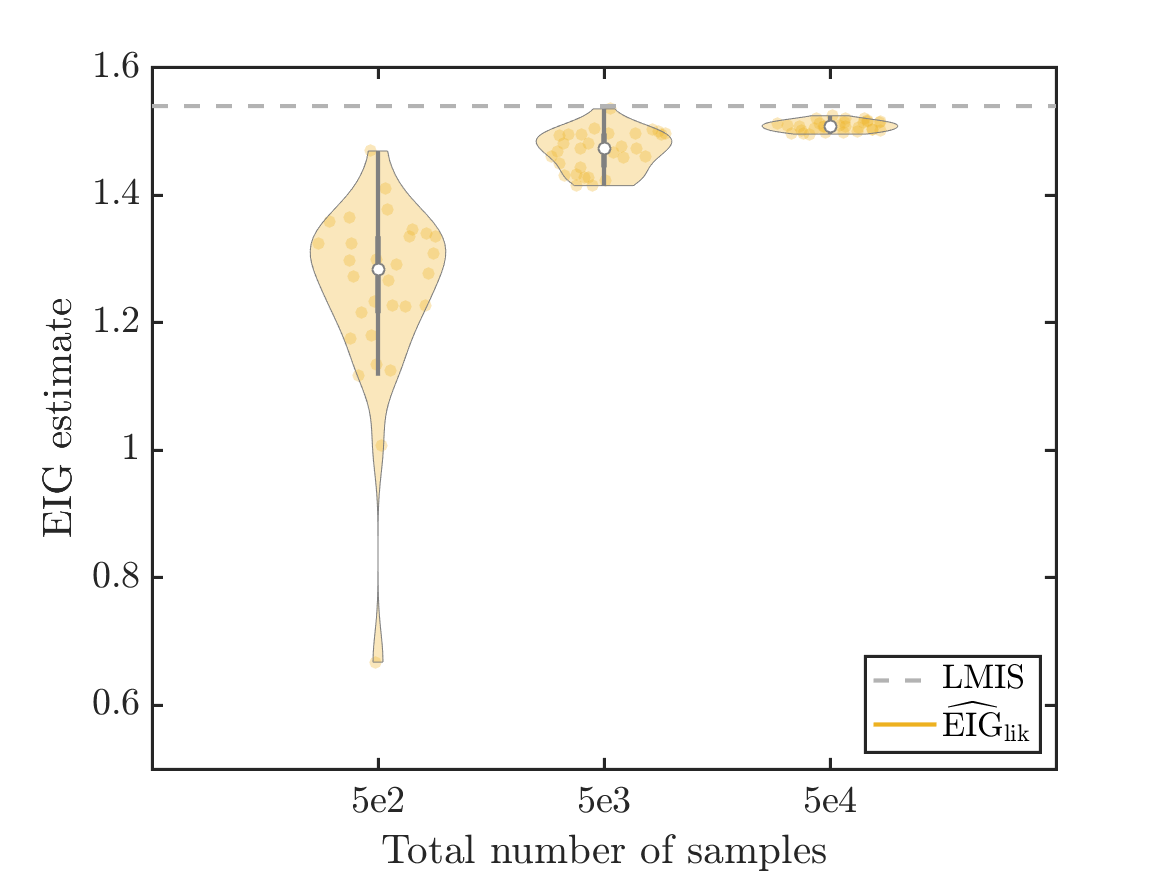}
  \caption{}
\end{subfigure}
\caption{M\"ossbauer example. Convergence of different EIG estimators in the focused case: (a) $\widehat{\mathrm{EIG}}^{X_1}_{\mathrm{pos}}$; (b) $\widehat{\mathrm{EIG}}^{X_1}_{\mathrm{pr}}$; (c) $\widehat{\mathrm{EIG}}^{X_1}_{\mathrm{lik}}$. The dashed line represents the ``ground truth'' obtained with LMIS.}
% \caption{Convergence of different EIG estimators in the focused case: (a) $\widehat{\mathrm{EIG}}_{\mathrm{pos}}=\frac 1M \sum_{i = 1}^M\log \frac{\widehat \pi_{X_1|Y}( x_1^i| y^i)}{\pi_{X_1}(x_1^i)}$; (b) $\widehat{\mathrm{EIG}}_{\mathrm{pr}} = \frac 1M \sum_{i = 1}^M\log \frac{\widehat \pi_{X_1|Y}(x_1^i| y^i)}{\widehat \pi_{X_1}(x_1^i)}$; (c) $\widehat{\mathrm{EIG}}_{\mathrm{lik}} = \frac 1M \sum_{i = 1}^M\log \frac{\widehat \pi_{Y|X_1}( y^i|x^i_1)}{\widehat \pi_Y( y^i)}$. The dashed line represents the ``ground truth'' obtained using LMIS.}
\label{nonlinear1}
\end{figure}
% \todo{RSB: This is a personal preference, but I find the equations in the legend difficult to read. How about just saying $EIG_lik$ and $EIG_pr$. We can even point the reader to the definitions in the section above.}

\begin{figure}[!ht]
\centering
\begin{subfigure}{.33\textwidth}
  \centering
  \includegraphics[width=\linewidth]{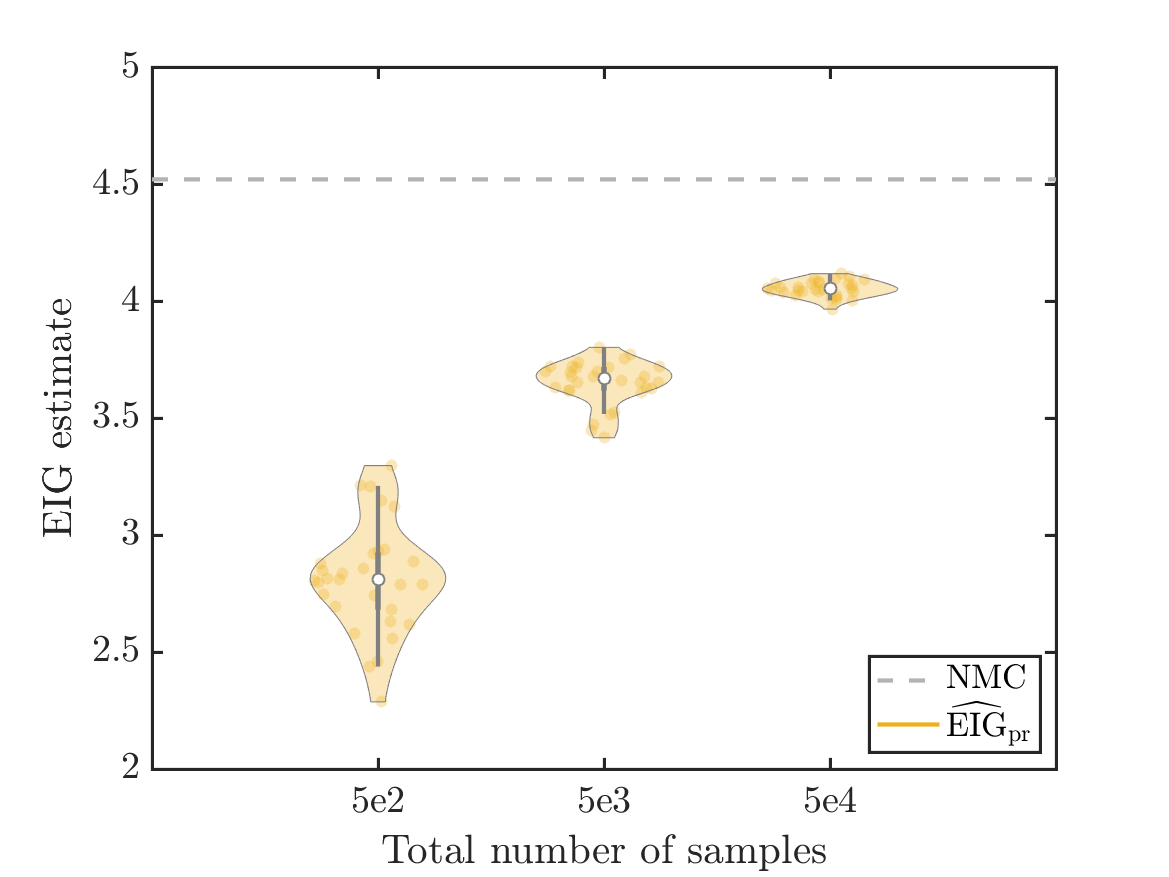}
  \caption{}
\end{subfigure}%
\begin{subfigure}{.33\textwidth}
  \centering
  \includegraphics[width=\linewidth]{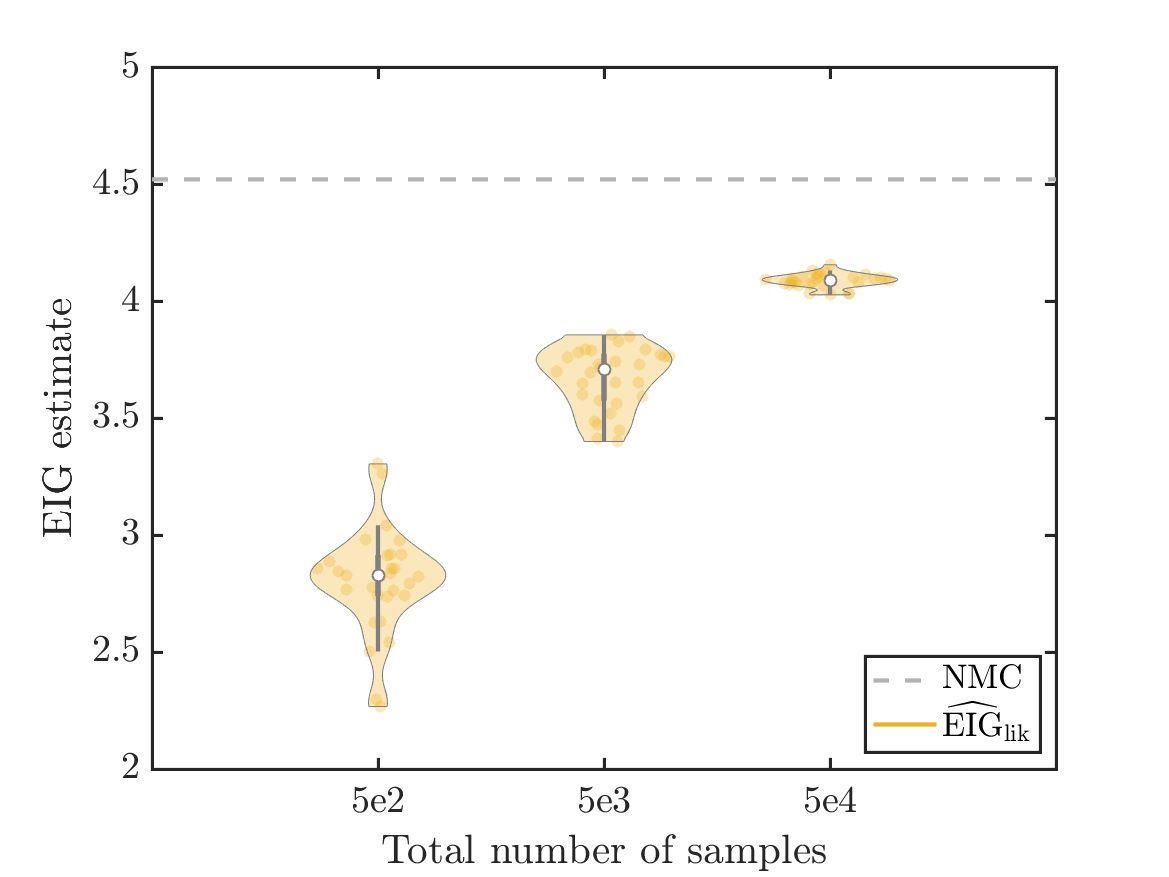}
  \caption{}
\end{subfigure}
\begin{subfigure}{.33\textwidth}
  \centering
  \includegraphics[width=\linewidth]{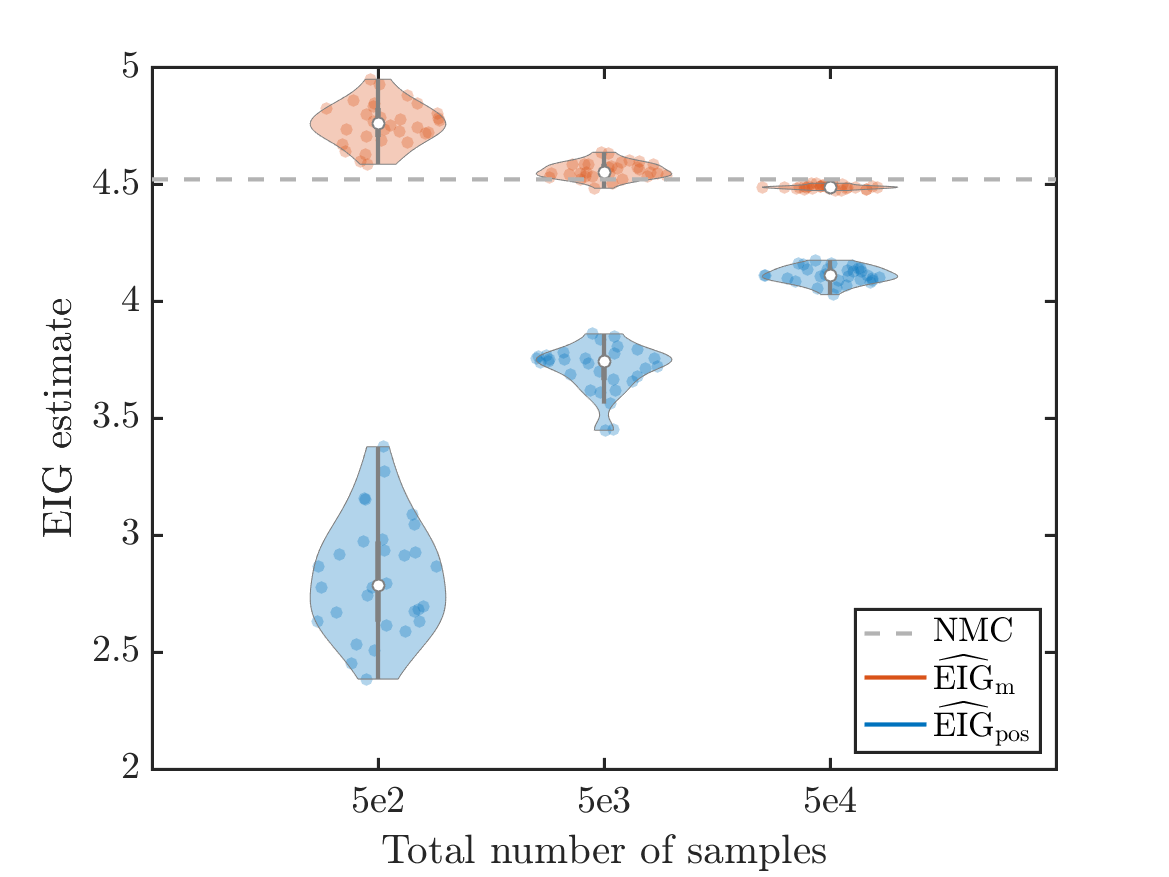}
  \caption{}
  \label{comparison}
\end{subfigure}
% \caption{Convergence of different EIG estimators in the full case: (a)  $\widehat{\mathrm{EIG}}_{\mathrm{lik}} = \frac 1M \sum_{i = 1}^M\log \frac{\widehat \pi_{Y|X}( y^i| x^i)}{\widehat \pi_Y( y^i)}$; (b) $\widehat{\mathrm{EIG}}_{\mathrm{pr}} = \frac 1M \sum_{i = 1}^M\log \frac{\widehat \pi_{X|Y}( x^i| y^i)}{\widehat \pi_X( x^i)}$; (c) $\widehat{\mathrm{EIG}}_{\mathrm{marg}} = \frac 1M \sum_{i = 1}^M\log \frac{\pi_{Y|X}( y^i| x^i)}{\widehat \pi_Y( y^i)}$ and $\widehat{\mathrm{EIG}}_{\mathrm{pos}} = \frac 1M \sum_{i = 1}^M\log \frac{\widehat \pi_{X|Y}( x^i| y^i)}{\pi_X( x^i)}$. The dashed line represents the ``ground truth'' obtained using NMC.}
\caption{M\"ossbauer example. Convergence of different EIG estimators in the full case: (a)  $\widehat{\mathrm{EIG}}_{\mathrm{pr}}$; (b) $\widehat{\mathrm{EIG}}_{\mathrm{lik}}$; (c) $\widehat{\mathrm{EIG}}_{\mathrm{m}}$ and $\widehat{\mathrm{EIG}}_{\mathrm{pos}}$. The dashed line represents the (biased) ``ground truth'' obtained with NMC.}
\label{nonlinear2}
\end{figure}

\begin{figure}
\begin{subfigure}{.5\textwidth}
  \centering
  \includegraphics[width=\linewidth]{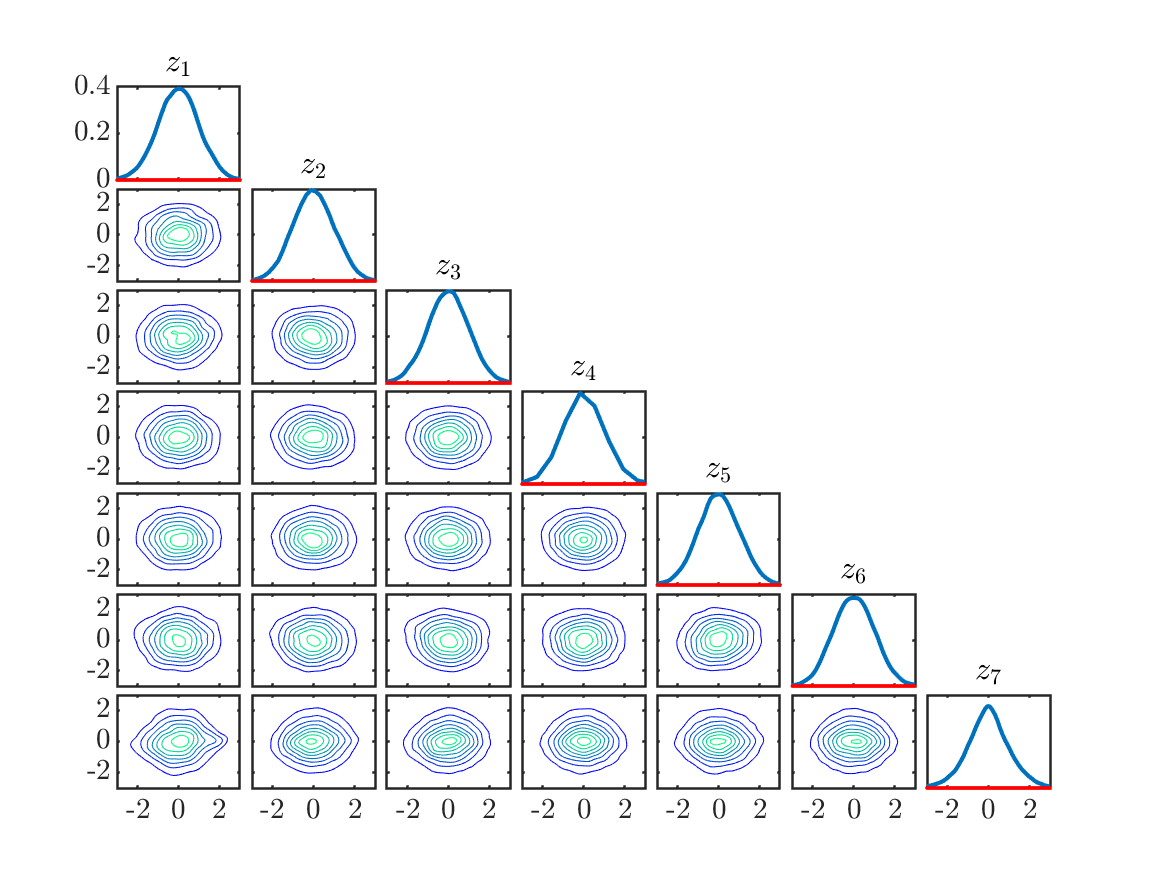}
  \caption{}
  \label{fig:push_forward_normal}
\end{subfigure}%
\begin{subfigure}{.5\textwidth}
  \centering
  \includegraphics[width=\linewidth]{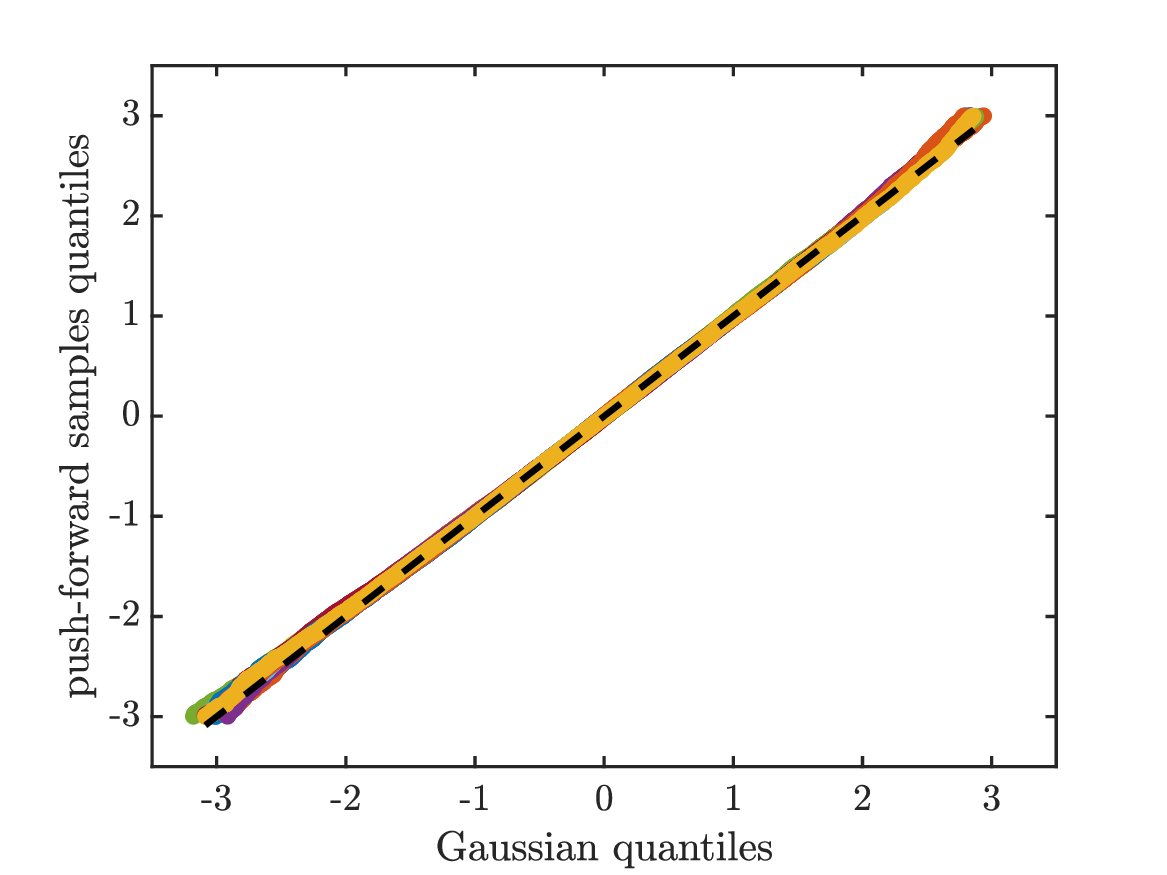}
  \caption{}
  \label{fig:gaussian_quantiles}
\end{subfigure}
\caption{M\"ossbauer example. (a) Density of the pushforward of the joint prior distribution under an estimated transport map, $\widehat{S}_\sharp \pi_{X,Y}$. (b) Q-Q plot of randomly projected pushforward marginals.}
\label{fig:push_forward_samples}
\end{figure}

\subsection{EIG estimation in high dimensions: linear elasticity inverse problem}\label{subsec:wrench_mark}
In this problem, we compare the proposed dimension reduction method for EIG estimation, referred to as conditional mutual information (CMI) dimension reduction, to alternative methods based on principal component analysis (PCA)~\cite{HotellingPCA} and canonical correlation analysis (CCA)~\cite{Hotelling36}. % for mutual information estimation in a high-dimensional setting. 
We consider the problem of estimating \revise{MI} between the Young's modulus field in a wrench-shaped domain and a noisy observation of its displacement on the upper boundary after applying an external force to a subset of the boundary $\partial \Omega$. Following the setup in~\cite{Ricarod_dimRed, remi_dimred, olivier_wrench}, the displacement $u: \Omega \rightarrow \mathbb{R}^2$ is related to the Young's modulus $E: \Omega \rightarrow \mathbb{R}_{\geq 0}$ through the partial differential equation
\begin{align*}
\nabla \cdot (K:\epsilon(u)) = 0,
\end{align*}
where
\begin{align*}
K:\epsilon(u) = \frac{E}{1 + \nu} \epsilon(u) + \frac{\nu E}{1 - \nu^2} \text{Tr}(\epsilon(u)) I_2,
\end{align*}
and $\nu = 0.3$ is the Poisson's ratio. The wrench is clamped at its right end, experiencing no displacement, i.e., $u = 0$. The Young's modulus field is assumed to have a log-normal prior distribution, specifically $\log E \sim \mathcal{N}(0, C)$, where $C$ is a squared exponential covariance kernel given by $C(z_1, z_2) = \exp(-\left\Vert z_1 - z_2\right\Vert^2)$ with $z_1, z_2\in \Omega$. The domain is discretized with 925 elements using the finite element method, resulting in $X\in \mathbb R^{925}$. The observation operator $D$ maps the Galerkin solution $u^h$ to a subset of 26 dimensions. We then write $Y = G(X) + \mathcal{E}$, where $G = Du^h(X)$, and $\mathcal{E} \sim \mathcal{N}(0, DR^{-1}D^\top)$, with $R$ being the Riesz map associated with the $H^1(\Omega)$ norm, such that $\Vert u^h \Vert_R^2 = \int_{\Omega} (u^h(s))^2 + \Vert \nabla u^h(s) \Vert^2 ds$~\cite{olivier_thesis}. Figure~\ref{wrench_para_sol} displays a realization of the parameter and the solution field.

\begin{figure}
\begin{subfigure}{.5\textwidth}
  \centering
  \includegraphics[trim={1cm 4cm 1cm 4cm},clip,width=\linewidth]{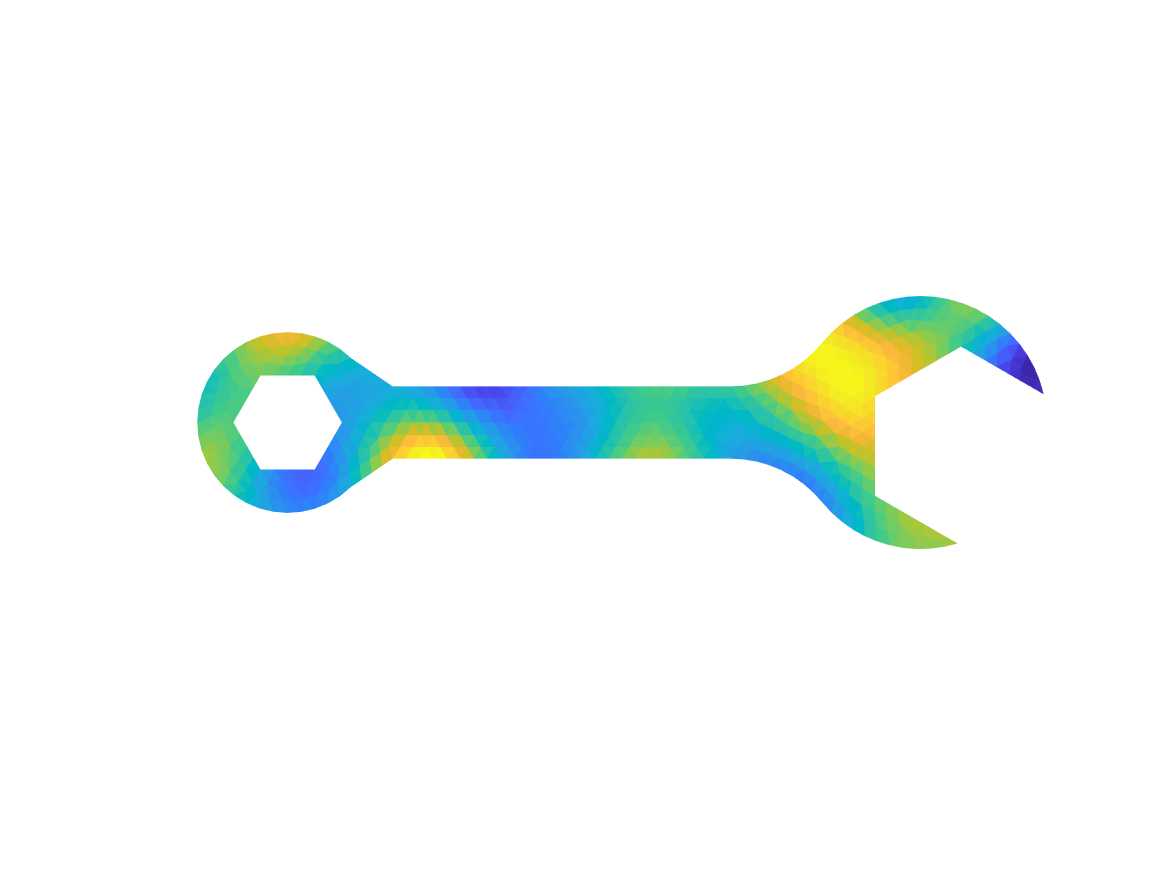}
  \caption{}
\end{subfigure}%
\begin{subfigure}{.5\textwidth}
  \centering
  \includegraphics[trim={1cm 4cm 1cm 4cm},clip,width=\linewidth]{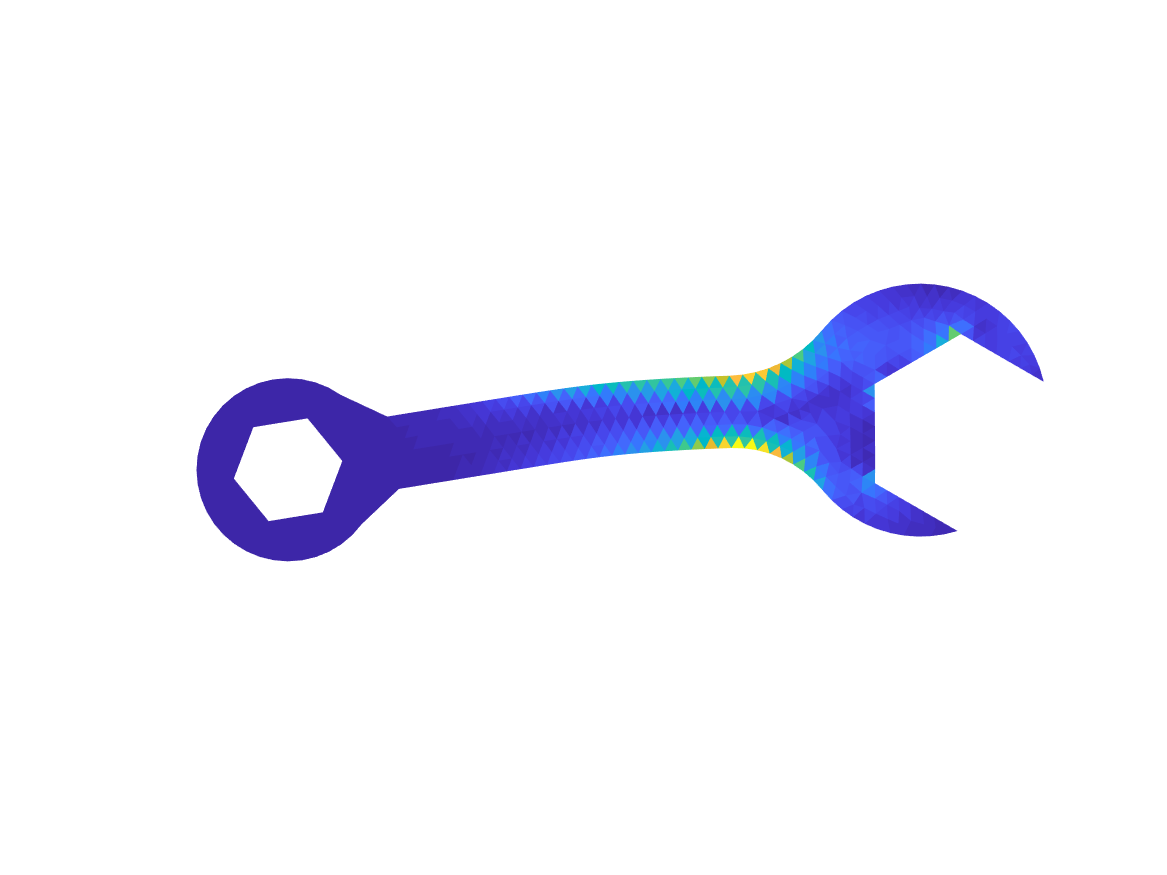}
  \caption{}
\end{subfigure}
\caption{Elasticity example. A realization of the parameter and solution field}
\label{wrench_para_sol}
\end{figure}
To compare the EIG estimators using different dimension reduction schemes, we use 500 samples to estimate $H_X$ and $H_Y$, and then compute the eigenvectors associated with these two matrices. Additionally, we compute the eigenvectors obtained through PCA and CCA. Recall that the principal components associated with PCA solve the following eigenvalue problems,
\begin{align*}
    \Sigma_X u^{\PCA} &= \lambda_X u^{\PCA}\\
    \Sigma_Y v^{\PCA} &= \lambda_Y v^{\PCA}.
\end{align*}
One the other hand, CCA finds the maximal correlation over all
possible linear transformations of the two random vectors $X$ and $Y$. The corresponding eigenvalue problems are:
\begin{align*}
    \Sigma_{XY} \Sigma_Y^{-1} \Sigma_{YX} u^{\CCA} = \rho \Sigma_X u ^{\CCA}\\
    \Sigma_{YX} \Sigma_X^{-1} \Sigma_{YX} v^{\CCA} = \rho \Sigma_Y v^{\CCA}.
\end{align*}
We then transform the samples $x^i$ and $y^i$ using these eigenvectors 
$\{u^{\CMI}, v^{\CMI} \}, \{u^{\PCA}, v^{\PCA} \}$, and $\{u^{\CCA}, v^{\CCA} \}$, respectively, where $\{u^{\CMI}, v^{\CMI} \}$ is computed following Algorithm~\ref{alg:EIG_alg_highDim}. Transport maps of input dimension $r+s$ are trained using $N = 2500$ samples. We then use $M = 10000$ evaluation samples, also projected as described in Section~\ref{sec:high_dim}, to compute the estimator $\widehat{\textrm{EIG}}_{\mathrm{pos}}(r,s) = \frac{1}{M} \sum_{i=1}^M \frac{\widehat \pi_{X_r|Y_s}(x_r^i|y_s^i)}{\pi_{X_r}(x_r^i)}$. To illustrate the effectiveness of our proposed nonlinear dimension reduction method, we also construct a simple ``Gaussian approximation'' of the EIG, i.e., an EIG estimate that would be consistent if $X$ and $Y$ were jointly Gaussian. Specifically, we approximate the EIG using the formula $\frac 12 \log \frac{\det(\widehat \Sigma_X)\det(  \widehat \Sigma_Y)}{\det( \widehat \Sigma_{X,Y})}$, where $\widehat \Sigma_X$, $\widehat \Sigma_Y$, and $\widehat \Sigma_{X,Y}$
denote the marginal and joint sample covariances of $X$ and $Y$. These sample covariance matrices are constructed from $10000$ samples. For reproducibility, the same experiment is repeated $10$ times and we show the error bars, representing two standard errors of the mean, in Figure~\ref{EIG_compare}.

The ground truth EIG value in this case is unknown. However, we know that the EIG estimators used here are \textit{lower bounds} to the true EIG. Therefore, the best EIG estimator should dominate the others---i.e., larger values are closer to the truth. We report the mean of each EIG estimator and its associated error in Appendix~\ref{wrench_table}. In Figure~\ref{EIG_compare}, we plot results with $r = s = 1$ to $r = s = 8$. As we can see, using CMI for dimension reduction yields the highest values and hence the best performance. At any given level of dimension reduction and for all three reduction schemes, results obtained with the Gaussian approximation are worse than their counterparts obtained using transport---illustrating the importance of treating non-Gaussianity and the ability of transport maps to do so. 

To expand this comparison, we also compute the EIG using the Gaussian approximation while holding the dimension of $X_r$ constant at $r = 300$ and varying the dimension of $Y_s$ from $s = 1$ to $s = 26$. Then we hold the dimension of $Y_s$ constant at $s = 26$ and vary the dimension of $X_r$ from $r = 1$ to $r = 300$. These results are presented in Figure~\ref{gaussian_1_26_1_300}, with the shaded region indicating two standard errors of the mean. We see that the performance of the Gaussian approximation remains inferior, even when using CMI and the retained dimension is high: the highest values in Figure~\ref{gaussian_1_26_1_300} are roughly 3.0, whereas Figure~\ref{EIG_compare} shows that the true EIG must be at least 5.0. This comparison further suggests the necessity of combining good dimension reduction techniques with non-Gaussian methods for EIG estimation: \textit{one can achieve better performance with a more expressive distribution representation and significantly more dimension truncation, than with a cruder information gain approximation and very little truncation.} It is also interesting to note that the benefit of the non-Gaussian transport representation is greatest when the subspaces for dimension reduction are obtained via CMI. This suggests that the non-Gaussian interaction of the parameters and data is well captured by such subspaces; see \cite{brennan2020greedy,li2025sharp} for comments on this point.

% the appropriate bases for estimating \revise{MI}. Further exploration of this phenomenon will be included in future work.

\begin{figure}[!ht]
\begin{center}
      \includegraphics[width = 4in]{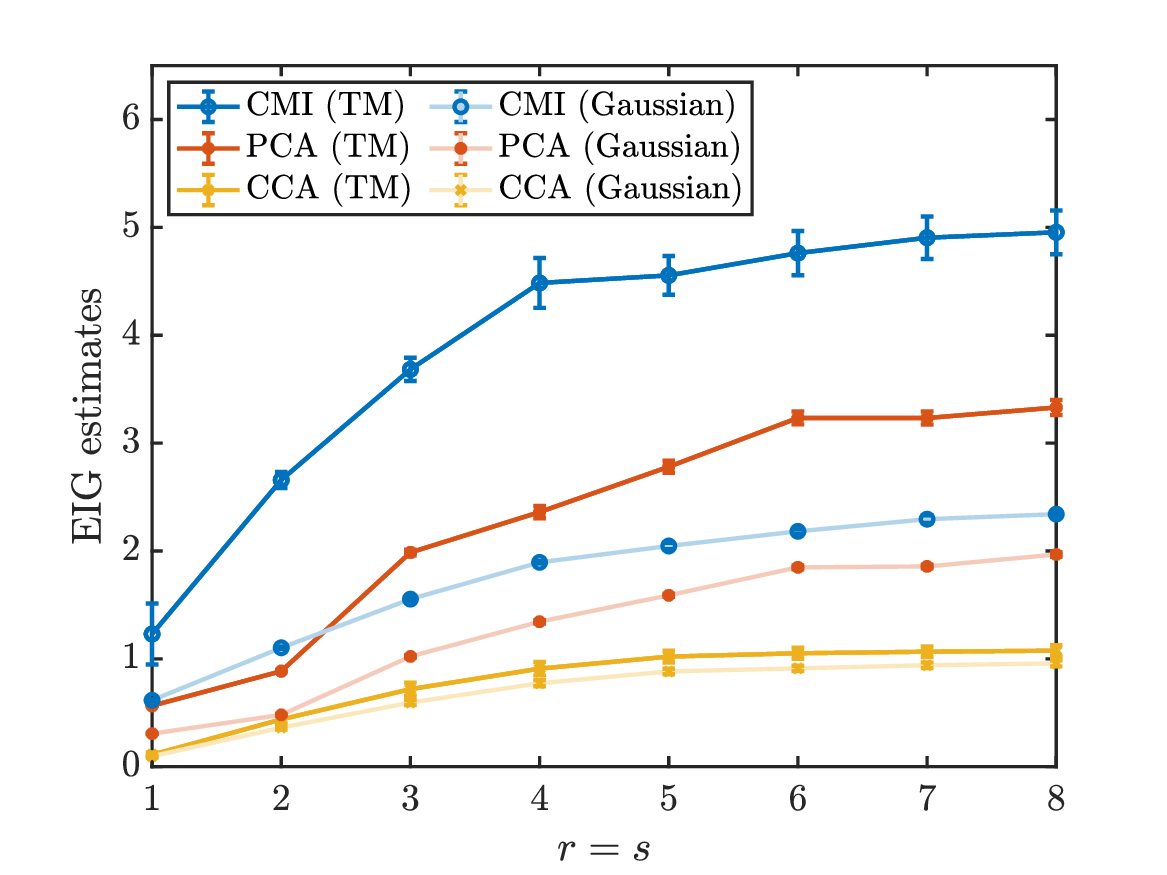}
      \caption{Elasticity example. EIG estimates using transport maps (darker lines) or a Gaussian approximation (lighter lines), with parameter and data dimensions truncated using CMI, PCA, or CCA with $(r = s) \in \{1, \ldots, 8\}$.}
      \label{EIG_compare}
\end{center}
\end{figure}

\begin{figure}[!ht]
\centering
\begin{subfigure}{.48\textwidth}
  \centering
  \includegraphics[width=\linewidth]{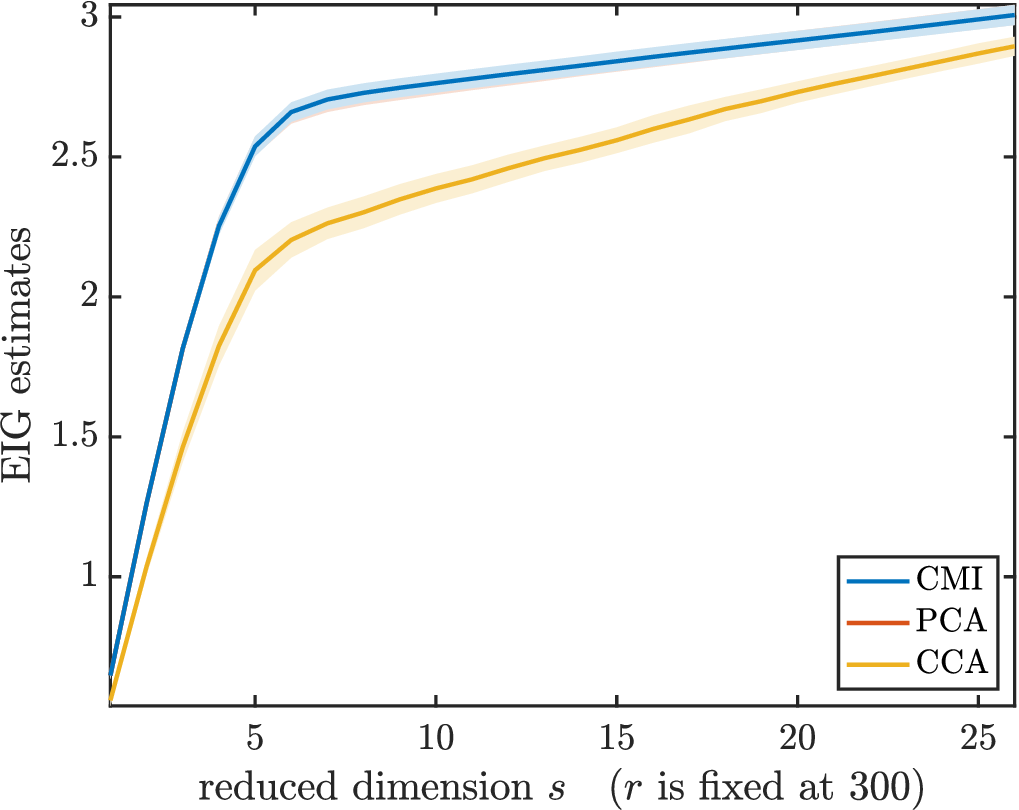}
  \caption{}
\end{subfigure}%
\hspace{5pt}
\begin{subfigure}{.48\textwidth}
  \centering
  \includegraphics[width=\linewidth]{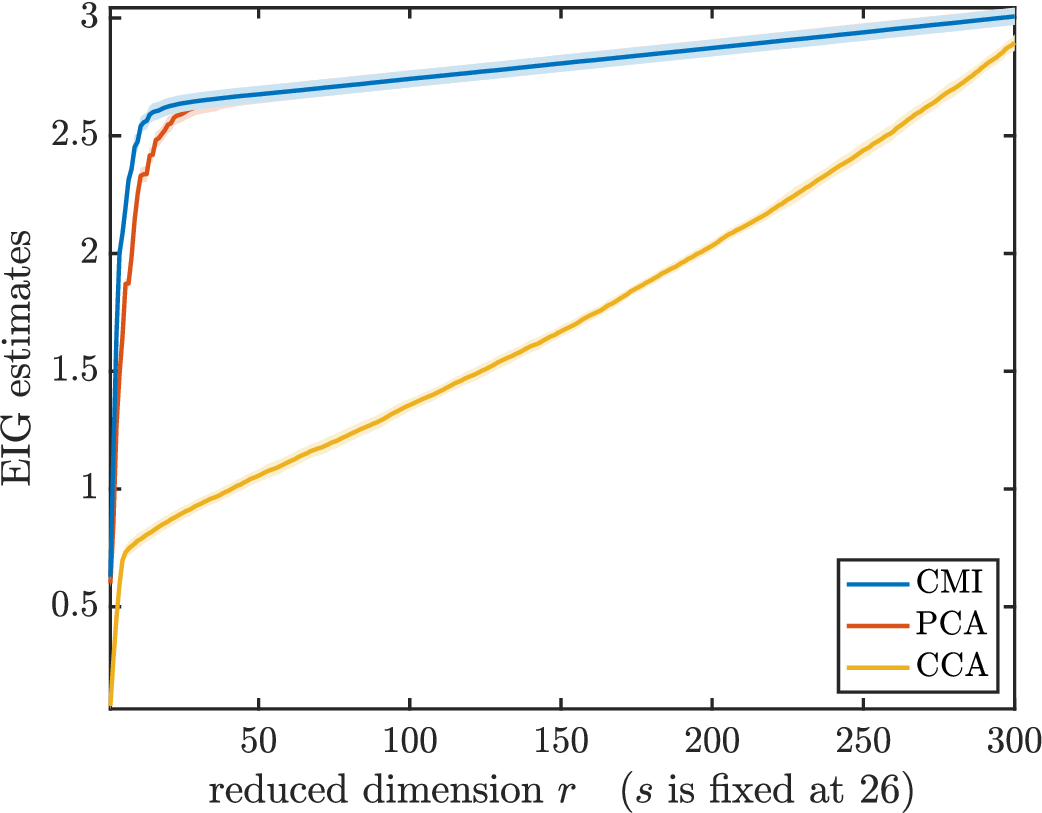}
  \caption{}
\end{subfigure}
\vspace{-0.4cm}
\caption{Elasticity example. EIG estimates using a Gaussian approximation, with parameter dimension $r$ or data dimension $s$ truncated using CMI, PCA, or CCA. \revise{(The PCA and CMI results are nearly identical in (a), with an absolute difference of less than 0.01.)}}
\label{gaussian_1_26_1_300}
\end{figure}

\section{Conclusion} \label{sec:conclusion}
\revise{We have studied a key problem in nonlinear Bayesian optimal experimental design: EIG estimation. We introduce a general transport-based framework for EIG estimation, which encompasses a wide variety of block-triangular map parameterizations (ranging from strictly triangular Knothe--Rosenblatt maps to conditional optimal transport to conditional normalizing flows) and enables estimation of EIG with both explicit and implicit models. Within this framework, we study the optimal allocation between training samples (used for marginal and/or conditional density estimation) and evaluation samples (used to approximate an outer prior expectation) and identify an asymptotic scaling that minimizes the MSE of the resulting EIG estimates. To improve the estimation of EIG in high dimensions, we use the gradient-based dimension reduction techniques of~\cite{Ricarod_dimRed} to construct ``projected'' estimators that can be understood as principled and efficient EIG lower bounds. Numerical examples verify the optimality of our proposed sample allocation and show improved efficiency over standard NMC and MLMC estimators of EIG. Numerical experiments also demonstrate the importance of capturing non-Gaussianity and the ability of transport to do so. We see that in high-dimensional problems, the combination of transport-based density estimation and our dimension reduction method significantly outperforms dimension reduction with PCA or CCA, as well as full-dimensional Gaussian approximations.}
% The optimality of this allocation is verified empirically in simple examples. 

% enabling EIG to be estimated in high-dimensional settings. The numerical example presented in Section~\ref{sec:num_res} show that our proposed method performs similarly comparing to nested Monte Carlo in terms of the EIG but with a much cheaper cost. We also demonstrate the performance of our method in the high-dimensional setting in Section~\ref{subsec:wrench_mark}. The dimension reduction technique proposed in this paper outperform PCA, CCA, and its Gaussian counterpart in terms of EIG estimation.

% For future work and directions, we would like to analyze the error in EIG estimation by considering a non-parametric approximation of the transport maps using the techniques presented in this paper \cite{wang2022minimax}. In addition, we would also like to build transport maps that incorporates the design variable $d$, making this framework suitable for the Bayesian optimal experimental design problem. 

\revise{For future work, we aim to extend our analysis in two key directions. First, we plan to investigate the non-asymptotic error of EIG estimates based on {nonparametric} estimation of transport maps, marrying the approaches introduced here with results in, e.g., \cite{wang2022minimax,irons2022triangular}, and again elucidating optimal sample allocations and even penalization schemes. Second, we will develop transport maps that explicitly incorporate the design variable~$d$. This extension would enable the construction of design-dependent maps that directly account for how the data-generating process varies with experimental conditions---though for the purposes of design optimization, it will be important to devise learning methods that \textit{control} how error in the EIG estimate depends on $d$. Such an approach could unify the map learning and design optimization steps, leading to more efficient end-to-end algorithms for Bayesian optimal experimental design.}

\section*{Acknowledgements}
FL and YM are grateful for support from the US Department of Energy (DOE), Office of Advanced Scientific Computing Research (ASCR), under grant numbers DE-SC0023188 and DE-SC0026245.
RB is grateful for support from the von K\'{a}rm\'{a}n instructorship at Caltech, the Air Force Office of Scientific
Research MURI on ``Machine Learning and Physics-Based Modeling and Simulation'' (award FA9550-20-1-0358), and a Department of Defense (DoD) Vannevar Bush Faculty Fellowship (award N00014-22-1-2790) held by Andrew M.~Stuart.

\bibliographystyle{plain}
\bibliography{references}

@article{baptista2025approximation,
  title={An approximation theory framework for measure-transport sampling algorithms},
  author={Baptista, Ricardo and Hosseini, Bamdad and Kovachki, Nikola and Marzouk, Youssef and Sagiv, Amir},
  journal={Mathematics of Computation},
  volume={94},
  number={354},
  pages={1863--1909},
  year={2025}
}

@thesis{li2019combinatorial,
  title={A combinatorial approach to goal-oriented optimal Bayesian experimental design},
  author={Li, Fengyi},
  type={Master Thesis},
  year={2019},
  school={Massachusetts Institute of Technology}
}

@article{Goda03072020,
author = {Takashi Goda and Tomohiko Hironaka and Takeru Iwamoto},
title = {Multilevel Monte Carlo estimation of expected information gains},
journal = {Stochastic Analysis and Applications},
volume = {38},
number = {4},
pages = {581--600},
year = {2020},
publisher = {Taylor \& Francis},
doi = {10.1080/07362994.2019.1705168},


URL = { 
    
        https://doi.org/10.1080/07362994.2019.1705168
    
    

},
eprint = { 
    
        https://doi.org/10.1080/07362994.2019.1705168
    
    

}

}

@article{OED_ABC,
author = {Drovandi, Christopher C. and Pettitt, Anthony N.},
title = {Bayesian Experimental Design for Models with Intractable Likelihoods},
journal = {Biometrics},
volume = {69},
number = {4},
pages = {937-948},
keywords = {Approximate Bayesian computation, Bayesian experimental design, Markov chain Monte Carlo, Robust experimental design},
doi = {https://doi.org/10.1111/biom.12081},
url = {https://onlinelibrary.wiley.com/doi/abs/10.1111/biom.12081},
eprint = {https://onlinelibrary.wiley.com/doi/pdf/10.1111/biom.12081},
abstract = {Summary In this paper we present a methodology for designing experiments for efficiently estimating the parameters of models with computationally intractable likelihoods. The approach combines a commonly used methodology for robust experimental design, based on Markov chain Monte Carlo sampling, with approximate Bayesian computation (ABC) to ensure that no likelihood evaluations are required. The utility function considered for precise parameter estimation is based upon the precision of the ABC posterior distribution, which we form efficiently via the ABC rejection algorithm based on pre-computed model simulations. Our focus is on stochastic models and, in particular, we investigate the methodology for Markov process models of epidemics and macroparasite population evolution. The macroparasite example involves a multivariate process and we assess the loss of information from not observing all variables.},
year = {2013}
}

@misc{chakraborty2024likelihoodfreeapproachgoalorientedbayesian,
      title={A Likelihood-Free Approach to Goal-Oriented Bayesian Optimal Experimental Design}, 
      author={Atlanta Chakraborty and Xun Huan and Tommie Catanach},
      year={2024},
      eprint={2408.09582},
      archivePrefix={arXiv},
      primaryClass={stat.CO},
      url={https://arxiv.org/abs/2408.09582}, 
}

@book{sisson2018handbook,
  title={Handbook of approximate Bayesian computation},
  author={Sisson, Scott A and Fan, Yanan and Beaumont, Mark},
  year={2018},
  publisher={CRC press}
}

@article{del2012adaptive,
  title={An adaptive sequential Monte Carlo method for approximate Bayesian computation},
  author={Del Moral, Pierre and Doucet, Arnaud and Jasra, Ajay},
  journal={Statistics and computing},
  volume={22},
  number={5},
  pages={1009--1020},
  year={2012},
  publisher={Springer}
}

@misc{cui2025subspaceacceleratedmeasuretransport,
      title={Subspace accelerated measure transport methods for fast and scalable sequential experimental design, with application to photoacoustic imaging}, 
      author={Tiangang Cui and Karina Koval and Roland Herzog and Robert Scheichl},
      year={2025},
      eprint={2502.20086},
      archivePrefix={arXiv},
      primaryClass={math.OC},
      url={https://arxiv.org/abs/2502.20086}, 
}

@article{carlier_brenier,
author = {Carlier, G. and Galichon, A. and Santambrogio, F.},
title = {From Knothe's Transport to Brenier's Map and a Continuation Method for Optimal Transport},
journal = {SIAM Journal on Mathematical Analysis},
volume = {41},
number = {6},
pages = {2554-2576},
year = {2010},
doi = {10.1137/080740647},

URL = { 
    
        https://doi.org/10.1137/080740647
    
    

},
eprint = { 
    
        https://doi.org/10.1137/080740647
    
    

}
}

@article{Papamakarios_NF,
author = {Papamakarios, George and Nalisnick, Eric and Rezende, Danilo Jimenez and Mohamed, Shakir and Lakshminarayanan, Balaji},
title = {Normalizing flows for probabilistic modeling and inference},
year = {2021},
issue_date = {January 2021},
publisher = {JMLR.org},
volume = {22},
number = {1},
issn = {1532-4435},
journal = {J. Mach. Learn. Res.},
month = {jan},
articleno = {57},
numpages = {64}
}

@InProceedings{pmlr-v37-rezende15,
  title = 	 {Variational Inference with Normalizing Flows},
  author = 	 {Rezende, Danilo and Mohamed, Shakir},
  booktitle = 	 {Proceedings of the 32nd International Conference on Machine Learning},
  pages = 	 {1530--1538},
  year = 	 {2015},
  editor = 	 {Bach, Francis and Blei, David},
  volume = 	 {37},
  series = 	 {Proceedings of Machine Learning Research},
  address = 	 {Lille, France},
  month = 	 {07--09 Jul},
  publisher =    {PMLR}
}

@book{Villani2008OptimalTO,
  title={Optimal Transport: Old and New},
  author={Villani, C{\'e}dric},
  volume={338},
  year={2009},
  publisher={Springer}
}

@article{bogachev,
  title={Triangular transformations of measures},
  author={Bogachev, Vladimir Igorevich and Kolesnikov, Aleksandr Viktorovich and Medvedev, Kirill Vladimirovich},
  journal={Sbornik: Mathematics},
  volume={196},
  number={3},
  pages={309},
  year={2005},
  publisher={IOP Publishing}
}

@article{baptista2020adaptive,
  title={On the representation and learning of monotone triangular transport maps},
  author={Baptista, Ricardo and Marzouk, Youssef and Zahm, Olivier},
  journal={Foundations of Computational Mathematics},
  pages={1--46},
  year={2023},
  publisher={Springer}
}

@article{spantini2018jmlr,
  author  = {Alessio Spantini and Daniele Bigoni and Youssef Marzouk},
  title   = {Inference via Low-Dimensional Couplings},
  journal = {Journal of Machine Learning Research},
  year    = {2018},
  volume  = {19},
  number  = {66},
  pages   = {1--71},
}

@inproceedings{Kleinegesse2019EfficientBE,
  title={Efficient Bayesian Experimental Design for Implicit Models},
  author={Steven Kleinegesse and Michael U. Gutmann},
  booktitle={AISTATS},
  year={2019}
}

@book{stat_inf,
  added-at = {2017-04-12T12:13:57.000+0200},
  author = {Casella, George and Berger, Roger L},
  biburl = {https://www.bibsonomy.org/bibtex/28d942c4c349606ebda04cff29987dcfb/becker},
  interhash = {cc2c21dc7019c25c9a203478411e42eb},
  intrahash = {8d942c4c349606ebda04cff29987dcfb},
  keywords = {citedby:scholar:count:8270 citedby:scholar:timestamp:2017-4-12 diss inthesis likelihood ratio test},
  publisher = {Duxbury Pacific Grove, CA},
  timestamp = {2017-04-12T12:14:31.000+0200},
  title = {Statistical inference},
  volume = 2,
  year = 2002
}

@misc{portmanteau, 
howpublished = "\url{http://www2.stat.duke.edu/~st118/sta732/convergence.pdf}",
title = {{Basic Probability Theory on Convergence}}, 
author={Tokdar, Surya T}
}

@misc{taylor, 
url={https://sites.math.washington.edu/~folland/Math425/taylor2.pdf},
howpublished = "\url{https://sites.math.washington.edu/~folland/Math425/taylor2.pdf}",
title = {{Higher-Order Derivatives and Taylor’s Formula in Several Variables}}, 
author={Folland, G B}
}

@article{Nguyen2010_MI,
author = {Nguyen, XuanLong and Wainwright, Martin and Jordan, Michael},
year = {2010},
month = {12},
pages = {5847 - 5861},
title = {Estimating Divergence Functionals and the Likelihood Ratio by Convex Risk Minimization},
volume = {56},
journal = {Information Theory, IEEE Transactions on},
doi = {10.1109/TIT.2010.2068870}
}

@article{jayanth,
  title={Batch greedy maximization of non-submodular functions: Guarantees and applications to experimental design},
  author={Jagalur-Mohan, Jayanth and Marzouk, Youssef},
  journal={Journal of Machine Learning Research},
  volume={22},
  number={252},
  pages={1--62},
  year={2021}
}

@phdthesis{li2024new,
  title={New tools for Bayesian optimal experimental design and kernel-based generative modeling},
  author={Li, Fengyi},
  year={2024},
  school={Massachusetts Institute of Technology}
}

@article{li2024nonlinear,
  title={Nonlinear Bayesian optimal experimental design using logarithmic Sobolev inequalities},
  author={Li, Fengyi and Belhadji, Ayoub and Marzouk, Youssef},
  journal={arXiv preprint arXiv:2402.15053},
  year={2024}
}

@article{HotellingPCA,
  title={Analysis of a complex of statistical variables into principal components.},
  author={Harold Hotelling},
  journal={Journal of Educational Psychology},
  year={1933},
  volume={24},
  pages={498-520}
}

@article{Hotelling36,
    author = {Hotelling, Harold},
    title = "{Relations between two sets of variates}",
    journal = {Biometrika},
    volume = {28},
    number = {3-4},
    pages = {321-377},
    year = {1936},
    month = {12},
    issn = {0006-3444},
    doi = {10.1093/biomet/28.3-4.321},
    url = {https://doi.org/10.1093/biomet/28.3-4.321},
    eprint = {http://oup.prod.sis.lan/biomet/article-pdf/28/3-4/321/586830/28-3-4-321.pdf},
}

@article{Cui_2021,
doi = {10.1088/1361-6420/abeafb},
url = {https://dx.doi.org/10.1088/1361-6420/abeafb},
year = {2021},
month = {mar},
publisher = {IOP Publishing},
volume = {37},
number = {4},
pages = {045009},
author = {Tiangang Cui and Olivier Zahm},
title = {Data-free likelihood-informed dimension reduction of Bayesian inverse problems},
journal = {Inverse Problems}
}

@misc{ihlerOEDNFs,
      title={Design Amortization for Bayesian Optimal Experimental Design}, 
      author={Noble Kennamer and Steven Walton and Alexander Ihler},
      year={2022},
      eprint={2210.03283},
      archivePrefix={arXiv},
      primaryClass={cs.LG},
      url={https://arxiv.org/abs/2210.03283}, 
}

@misc{wang2024COT,
      title={Efficient Neural Network Approaches for Conditional Optimal Transport with Applications in Bayesian Inference}, 
      author={Zheyu Oliver Wang and Ricardo Baptista and Youssef Marzouk and Lars Ruthotto and Deepanshu Verma},
      year={2024},
      eprint={2310.16975},
      archivePrefix={arXiv},
      primaryClass={stat.ML},
      url={https://arxiv.org/abs/2310.16975}, 
}

@article{brennan2020greedy,
  title={Greedy inference with structure-exploiting lazy maps},
  author={Brennan, Michael and Bigoni, Daniele and Zahm, Olivier and Spantini, Alessio and Marzouk, Youssef},
  journal={Advances in Neural Information Processing Systems},
  volume={33},
  pages={8330--8342},
  year={2020}
}

@article{zahm2022certified,
  title={Certified dimension reduction in nonlinear {B}ayesian inverse problems},
  author={Zahm, Olivier and Cui, Tiangang and Law, Kody and Spantini, Alessio and Marzouk, Youssef},
  journal={Mathematics of Computation},
  volume={91},
  number={336},
  pages={1789--1835},
  year={2022}
}

@article{LIS,
  title={Likelihood-informed dimension reduction for nonlinear inverse problems},
  author={Cui, Tiangang and Martin, James and Marzouk, Youssef M and Solonen, Antti and Spantini, Alessio},
  journal={Inverse Problems},
  volume={30},
  number={11},
  pages={114015},
  year={2014},
  publisher={IOP Publishing}
}

@article{condNFs,
  title={Learning likelihoods with conditional normalizing flows},
  author={Winkler, Christina and Worrall, Daniel E and Hoogeboom, Emiel and Welling, Max},
  year={2019},
  journal={arXiv:1912.00042}
}

@article{omar_oed_nn,
  title={Large-scale {B}ayesian optimal experimental design with derivative-informed projected neural network},
  author={Wu, Keyi and O’Leary-Roseberry, Thomas and Chen, Peng and Ghattas, Omar},
  journal={Journal of Scientific Computing},
  volume={95},
  number={1},
  pages={30},
  year={2023},
  publisher={Springer}
}

@article{olivier_wrench,
author = {Smetana, Kathrin and Zahm, Olivier},
title = {Randomized residual-based error estimators for the proper generalized decomposition approximation of parametrized problems},
journal = {International Journal for Numerical Methods in Engineering},
volume = {121},
number = {23},
pages = {5153-5177},
year = {2020}
}

@phdthesis{olivier_thesis,
author = {Zahm, Olivier},
year = {2015},
month = {11},
title = {Model order reduction methods for parameter-dependent equations -- Applications in Uncertainty Quantification.},
school={L'\'{e}cole Centrale Nantes}
}

@article{remi_dimred,
  title={Multifidelity dimension reduction via active subspaces},
  author={Lam, Remi R and Zahm, Olivier and Marzouk, Youssef M and Willcox, Karen E},
  journal={SIAM Journal on Scientific Computing},
  volume={42},
  number={2},
  pages={A929--A956},
  year={2020},
  publisher={SIAM}
}

@article{Ricarod_dimRed,
  author = {Baptista, Ricardo and Marzouk, Youssef and Zahm, Olivier},  
  title = {Gradient-based data and parameter dimension reduction for {B}ayesian models: an information theoretic perspective},
  journal = {arXiv:2207.08670},
  year = {2022},
}

@article{Attia17,
  title={Goal-oriented optimal design of experiments for large-scale {B}ayesian linear inverse problems},
  author={Attia, Ahmed and Alexanderian, Alen and Saibaba, Arvind K},
  journal={Inverse Problems},
  volume={34},
  number={9},
  pages={095009},
  year={2018},
  publisher={IOP Publishing}
}

@article{Krause08,
 author = {Krause, Andreas and Singh, Ajit and Guestrin, Carlos},
 title = {Near-Optimal Sensor Placements in Gaussian Processes: Theory, Efficient Algorithms and Empirical Studies},
 journal = {J. Mach. Learn. Res.},
 issue_date = {6/1/2008},
 volume = {9},
 month = jun,
 year = {2008},
 issn = {1532-4435},
 pages = {235--284},
 numpages = {50},
 acmid = {1390689},
 publisher = {JMLR.org},
}

@article{mossbauer1958,
  title={Kernresonanzfluoreszenz von gammastrahlung in Ir 191},
  author={M{\"o}ssbauer, Rudolf L},
  journal={Zeitschrift f{\"u}r Physik},
  volume={151},
  number={2},
  pages={124--143},
  year={1958},
  publisher={Springer}
}

@article{HuanJMacta2024,
  title={Optimal experimental design: Formulations and computations},
  author={Huan, Xun and Jagalur, Jayanth and Marzouk, Youssef},
  journal={Acta Numerica},
  volume={33},
  pages={715--840},
  year={2024},
  publisher={Cambridge University Press}
}

@article{Koval2024TractableOE,
    title={Tractable Optimal Experimental Design using Transport Maps},
    author={Karina Koval and Roland Herzog and Robert Scheichl},
    journal={ArXiv},
    year={2024},
    volume={abs/2401.07971},
}

@misc{choi2021,
      title={Featurized Density Ratio Estimation}, 
      author={Kristy Choi and Madeline Liao and Stefano Ermon},
      year={2021},
      eprint={2107.02212},
      archivePrefix={arXiv},
      primaryClass={cs.LG}
}

@article{chi,
      title={A layered multiple importance sampling scheme for focused optimal {B}ayesian experimental design}, 
      author={Chi Feng and Youssef M. Marzouk},
      year={2019},
      journal={arXiv: 1903.11187},
}

@inproceedings{McDonald2017MinimaxDE,
  title={Minimax Density Estimation for Growing Dimension},
  author={Daniel J. McDonald},
  booktitle={International Conference on Artificial Intelligence and Statistics},
  year={2017}
}

@article{wang2022minimax,
      title={On minimax density estimation via measure transport}, 
      author={Sven Wang and Youssef Marzouk},
      year={2022},
      journal={arXiv:2207.10231}
}

@article{qin1998inferences,
  title={Inferences for case-control and semiparametric two-sample density ratio models},
  author={Qin, Jing},
  journal={Biometrika},
  volume={85},
  number={3},
  pages={619--630},
  year={1998},
  publisher={Oxford University Press}
}

@inproceedings{bickel2007discriminative,
  title={Discriminative learning for differing training and test distributions},
  author={Bickel, Steffen and Br{\"u}ckner, Michael and Scheffer, Tobias},
  booktitle={Proceedings of the 24th international conference on Machine learning},
  pages={81--88},
  year={2007}
}

@article{baptista2022bayesian,
title = {Bayesian model calibration for block copolymer self-assembly: Likelihood-free inference and expected information gain computation via measure transport},
journal = {Journal of Computational Physics},
volume = {503},
pages = {112844},
year = {2024},
issn = {0021-9991},
doi = {https://doi.org/10.1016/j.jcp.2024.112844},
url = {https://www.sciencedirect.com/science/article/pii/S0021999124000937},
author = {Ricardo Baptista and Lianghao Cao and Joshua Chen and Omar Ghattas and Fengyi Li and Youssef M. Marzouk and J. Tinsley Oden},
keywords = {Block copolymers, Uncertainty quantification, Likelihood-free inference, Measure transport, Expected information gain, Ohta–Kawasaki model}
}

@article{overstall2022,
author = {Antony Overstall and James McGree},
title = {{Bayesian Decision-Theoretic Design of Experiments Under an Alternative Model}},
volume = {17},
journal = {Bayesian Analysis},
number = {4},
publisher = {International Society for Bayesian Analysis},
pages = {1021 -- 1041},
keywords = {cubic spline basis, expected loss function, full-treatment model, model discrepancy, Non-linear model, normal linear model},
year = {2022},
doi = {10.1214/21-BA1286},
URL = {https://doi.org/10.1214/21-BA1286}
}

@article{dong2024variational,
  title={Variational {B}ayesian Optimal Experimental Design with Normalizing Flows},
  author={Dong, Jiayuan and Jacobsen, Christian and Khalloufi, Mehdi and Akram, Maryam and Liu, Wanjiao and Duraisamy, Karthik and Huan, Xun},
  journal={arXiv preprint arXiv:2404.13056},
  year={2024}
}

@article{omar_JUQ,
author = {Wu, Keyi and Chen, Peng and Ghattas, Omar},
title = {A Fast and Scalable Computational Framework for Large-Scale High-Dimensional {B}ayesian Optimal Experimental Design},
journal = {SIAM/ASA Journal on Uncertainty Quantification},
volume = {11},
number = {1},
pages = {235-261},
year = {2023},
doi = {10.1137/21M1466499}
}

@article{spantini2015,
  title={Optimal low-rank approximations of {B}ayesian linear inverse problems},
  author={Spantini, Alessio and Solonen, Antti and Cui, Tiangang and Martin, James and Tenorio, Luis and Marzouk, Youssef},
  journal={SIAM Journal on Scientific Computing},
  volume={37},
  number={6},
  pages={A2451--A2487},
  year={2015},
  publisher={SIAM}
}

@article{giraldi,
title = {Optimal projection of observations in a {B}ayesian setting},
journal = {Computational Statistics \& Data Analysis},
volume = {124},
pages = {252-276},
year = {2018},
issn = {0167-9473},
author = {Loïc Giraldi and Olivier P. {Le Maître} and Ibrahim Hoteit and Omar M. Knio}
}

@article{dirt,
  title={Deep composition of tensor-trains using squared inverse rosenblatt transports},
  author={Cui, Tiangang and Dolgov, Sergey},
  journal={Foundations of Computational Mathematics},
  volume={22},
  number={6},
  pages={1863--1922},
  year={2022},
  publisher={Springer}
}

@article{letizia,
  title={Variational f-Divergence and Derangements for Discriminative Mutual Information Estimation},
  author={Nunzio Alexandro Letizia and Nicola Novello and Andrea M. Tonello},
  journal={ArXiv},
  year={2023},
  volume={abs/2305.20025},
}

@inproceedings{Herrmann,
author = {Rafael Orozco and Ali Siahkoohi and Gabrio Rizzuti and Tristan van Leeuwen and Felix J. Herrmann},
title = {{Adjoint operators enable fast and amortized machine learning based Bayesian uncertainty quantification}},
volume = {12464},
booktitle = {Medical Imaging 2023: Image Processing},
editor = {Olivier Colliot and Ivana Išgum},
organization = {International Society for Optics and Photonics},
publisher = {SPIE},
pages = {124641L},
year = {2023},
}

@inproceedings{GaoOhViswanath2018,
    author = {Gao, Weihao and Kannan, Sreeram and Oh, Sewoong and Viswanath, Pramod},
    booktitle = {Advances in Neural Information Processing Systems},
    editor = {I. Guyon and U. Von Luxburg and S. Bengio and H. Wallach and R. Fergus and S. Vishwanathan and R. Garnett},
    publisher = {Curran Associates, Inc.},
    title = {Estimating Mutual Information for Discrete-Continuous Mixtures},
    volume = {30},
    year = {2017}
}

@book{thomas2006elements,
    title={Elements of information theory},
    author={Thomas, MTCAJ and Joy, A Thomas},
    year={2006},
    publisher={Wiley-Interscience}
}

@article{Bernardo1979,
author = {Jose M. Bernardo},
title = {{Expected Information as Expected Utility}},
volume = {7},
journal = {The Annals of Statistics},
number = {3},
publisher = {Institute of Mathematical Statistics},
pages = {686 -- 690},
keywords = {Bayesian statistics, decision theory, Design of experiments, Information, scientific inference, Utility},
year = {1979},
doi = {10.1214/aos/1176344689},
URL = {https://doi.org/10.1214/aos/1176344689}
}

@article{prangle2023,
author = {Dennis Prangle and Sophie Harbisher and Colin S. Gillespie},
title = {{Bayesian Experimental Design Without Posterior Calculations: An Adversarial Approach}},
volume = {18},
journal = {Bayesian Analysis},
number = {1},
publisher = {International Society for Bayesian Analysis},
pages = {133 -- 163},
keywords = {automatic differentiation, Game theory, geostatistical regression, Hyvärinen score, minimax optimisation, pharmacokinetics},
year = {2023},
doi = {10.1214/22-BA1306},
URL = {https://doi.org/10.1214/22-BA1306}
}

@article{walker2016,
title = {Bayesian information in an experiment and the Fisher information distance},
journal = {Statistics \& Probability Letters},
volume = {112},
pages = {5-9},
year = {2016},
issn = {0167-7152},
doi = {https://doi.org/10.1016/j.spl.2016.01.014},
url = {https://www.sciencedirect.com/science/article/pii/S0167715216000109},
author = {Stephen G. Walker},
keywords = {Shannon entropy, Information, Bayesian design},
abstract = {There are two forms of Fisher information; for the parameter of a model and for the information in a density model. These two forms are shown to be fundamentally connected through a measure of gain in information from a Bayesian experiment.}
}

@ARTICLE{mossbauer_model,
       author = {{Hesse}, J.},
        title = "{Simple Arrangement for Educational M{\"o}ssbauer-Effect Measurements}",
      journal = {American Journal of Physics},
         year = 1973,
        month = jan,
       volume = {41},
       number = {1},
        pages = {127-129},
          doi = {10.1119/1.1987142},
       adsurl = {https://ui.adsabs.harvard.edu/abs/1973AmJPh..41..127H},
      adsnote = {Provided by the SAO/NASA Astrophysics Data System}
}

@article{lindley,
author = {D. V. Lindley},
title = {{On a Measure of the Information Provided by an Experiment}},
volume = {27},
journal = {The Annals of Mathematical Statistics},
number = {4},
publisher = {Institute of Mathematical Statistics},
pages = {986 -- 1005},
year = {1956},
doi = {10.1214/aoms/1177728069},
}

@article{baptista2024conditional,
  title={Conditional sampling with monotone GANs: from generative models to likelihood-free inference},
  author={Baptista, Ricardo and Hosseini, Bamdad and Kovachki, Nikola B and Marzouk, Youssef M},
  journal={SIAM/ASA Journal on Uncertainty Quantification},
  volume={12},
  number={3},
  pages={868--900},
  year={2024},
  publisher={SIAM}
}

@article{Tempone_NMC,
author = {Beck, Joakim and Mansour Dia, Ben and Espath, Luis and Tempone, Raúl},
title = {Multilevel double loop {M}onte {C}arlo and stochastic collocation methods with importance sampling for {B}ayesian optimal experimental design},
journal = {International Journal for Numerical Methods in Engineering},
volume = {121},
number = {15},
pages = {3482-3503},
keywords = {electrical impedance tomography, expected information gain, importance sampling, multilevel, stochastic collocation},
doi = {https://doi.org/10.1002/nme.6367},
url = {https://onlinelibrary.wiley.com/doi/abs/10.1002/nme.6367},
eprint = {https://onlinelibrary.wiley.com/doi/pdf/10.1002/nme.6367},
year = {2020}
}

@article{ryan,
author = {Kenneth J Ryan},
title = {Estimating Expected Information Gains for Experimental Designs With Application to the Random Fatigue-Limit Model},
journal = {Journal of Computational and Graphical Statistics},
volume = {12},
number = {3},
pages = {585-603},
year  = {2003},
publisher = {Taylor \& Francis},
doi = {10.1198/1061860032012},

URL = { 
    
        https://doi.org/10.1198/1061860032012
    
    

},
eprint = { 
    
        https://doi.org/10.1198/1061860032012
    
    

}

}

@article{pronzato2013design,
  title={Design of experiments in nonlinear models},
  author={Pronzato, Luc and P{\'a}zman, Andrej},
  journal={Lecture notes in statistics},
  volume={212},
  number={1},
  year={2013},
  publisher={Springer}
}

@article{nonlinear_alphabetic_regression,
author = {Holger Dette},
title = {{A note on Bayesian c- and D-optimal designs in nonlinear regression models}},
volume = {24},
journal = {The Annals of Statistics},
number = {3},
publisher = {Institute of Mathematical Statistics},
pages = {1225 -- 1234},
keywords = {Bayesian $D$-optimal designs, Elfving's theorem, geometric characterization, Nonlinear regression},
year = {1996},
doi = {10.1214/aos/1032526965},
URL = {https://doi.org/10.1214/aos/1032526965}
}

@article{social_application,
  title={Statistical power and optimal design in experiments in which samples of participants respond to samples of stimuli.},
  author={Jacob Westfall and David A. Kenny and Charles M Judd},
  journal={Journal of experimental psychology. General},
  year={2014},
  volume={143 5},
  pages={
          2020-45
        }
}

@article{clinical_application,
author = {Ogungbenro, Kayode and Dokoumetzidis, Aristides and Aarons, Leon},
title = {Application of optimal design methodologies in clinical pharmacology experiments},
journal = {Pharmaceutical Statistics},
volume = {8},
number = {3},
pages = {239-252},
doi = {https://doi.org/10.1002/pst.354},
year = {2009}
}

@article{cranmer,
author = {Kyle Cranmer  and Johann Brehmer  and Gilles Louppe },
title = {The frontier of simulation-based inference},
journal = {Proceedings of the National Academy of Sciences},
volume = {117},
number = {48},
pages = {30055-30062},
year = {2020},
doi = {10.1073/pnas.1912789117},
URL = {https://www.pnas.org/doi/abs/10.1073/pnas.1912789117},
eprint = {https://www.pnas.org/doi/pdf/10.1073/pnas.1912789117}}

@article{BOED_review,
author = {Kathryn Chaloner and Isabella Verdinelli},
title = {{Bayesian Experimental Design: A Review}},
volume = {10},
journal = {Statistical Science},
number = {3},
publisher = {Institute of Mathematical Statistics},
pages = {273 -- 304},
keywords = {decision theory, hierarchical linear models, logistic regression, nonlinear design, nonlinear models, optimal design, optimality criteria, utility functions},
year = {1995},
doi = {10.1214/ss/1177009939},
URL = {https://doi.org/10.1214/ss/1177009939}
}

@Inbook{Marzouk2016,
author="Marzouk, Youssef
and Moselhy, Tarek
and Parno, Matthew
and Spantini, Alessio",
title="Sampling via Measure Transport: An Introduction",
bookTitle="Handbook of Uncertainty Quantification",
year="2016",
pages="785--825",
publisher="Springer International Publishing",
address="Cham"
}

@article{knn_MI,
  title = {Estimating mutual information},
  author = {Kraskov, Alexander and St\"ogbauer, Harald and Grassberger, Peter},
  journal = {Phys. Rev. E},
  volume = {69},
  issue = {6},
  pages = {066138},
  numpages = {16},
  year = {2004},
  month = {Jun},
  publisher = {American Physical Society},
  doi = {10.1103/PhysRevE.69.066138},
  url = {https://link.aps.org/doi/10.1103/PhysRevE.69.066138}
}

@InProceedings{MINE,
  title = 	 {Mutual Information Neural Estimation},
  author =       {Belghazi, Mohamed Ishmael and Baratin, Aristide and Rajeshwar, Sai and Ozair, Sherjil and Bengio, Yoshua and Courville, Aaron and Hjelm, Devon},
  booktitle = 	 {Proceedings of the 35th International Conference on Machine Learning},
  pages = 	 {531--540},
  year = 	 {2018},
  editor = 	 {Dy, Jennifer and Krause, Andreas},
  volume = 	 {80},
  series = 	 {Proceedings of Machine Learning Research},
  month = 	 {10--15 Jul},
  publisher =    {PMLR},
  pdf = 	 {http://proceedings.mlr.press/v80/belghazi18a/belghazi18a.pdf},
  url = 	 {https://proceedings.mlr.press/v80/belghazi18a.html}
}

@book{density_ratio_estimation, place={Cambridge}, title={Density Ratio Estimation in Machine Learning}, DOI={10.1017/CBO9781139035613}, publisher={Cambridge University Press}, author={Sugiyama, Masashi and Suzuki, Taiji and Kanamori, Takafumi}, year={2012}}

@article{Beck_laplace,
author = {Beck, Joakim and Dia, Ben Mansour and Espath, Luis and Long, Quan and Tempone, Raul},
year = {2018},
month = {06},
pages = {523-553},
title = {Fast Bayesian experimental design: Laplace-based importance sampling for the expected information gain},
volume = {334},
journal = {Computer Methods in Applied Mechanics and Engineering},
doi = {10.1016/j.cma.2018.01.053}
}

@inproceedings{Rainforth2018OnNM,
  title={On Nesting {M}onte {C}arlo Estimators},
  author={Tom Rainforth and Robert Cornish and Hongseok Yang and Andrew Warrington},
  booktitle={International Conference on Machine Learning},
  pages={4267--4276},
  year={2018},
  organization={PMLR}
}

@book{Santambrogio,
  author={Filippo Santambrogio},
  title={Optimal Transport for Applied Mathematicians: Calculus of Variations, PDEs, and Modeling},
  publisher={Springer International Publishing},
  year={2015},
  isbn={978-3-319-20827-5},
  doi={10.1007/978-3-319-20828-2}
}

@article{zech2022sparse,
  title={Sparse approximation of triangular transports, part i: The finite-dimensional case},
  author={Zech, Jakob and Marzouk, Youssef},
  journal={Constructive Approximation},
  volume={55},
  number={3},
  pages={919--986},
  year={2022},
  publisher={Springer}
}

@inproceedings{irons2022triangular,
  title={Triangular flows for generative modeling: Statistical consistency, smoothness classes, and fast rates},
  author={Irons, Nicholas J and Scetbon, Meyer and Pal, Soumik and Harchaoui, Zaid},
  booktitle={International Conference on Artificial Intelligence and Statistics},
  pages={10161--10195},
  year={2022},
  organization={PMLR}
}

@inproceedings{Foster2019VariationalBO,
  title={Variational {B}ayesian optimal experimental design},
  author={Foster, Adam and Jankowiak, Martin and Bingham, Elias and Horsfall, Paul and Teh, Yee Whye and Rainforth, Thomas and Goodman, Noah},
  booktitle={Advances in Neural Information Processing Systems},
  volume={32},
  year={2019}
}

@inproceedings{Poole2019OnVB,
  title={On Variational Bounds of Mutual Information},
  author={Ben Poole and Sherjil Ozair and A{\"a}ron van den Oord and Alexander Amir Alemi and G. Tucker},
  booktitle={ICML},
  year={2019}
}

@article{li2025sharp,
  title={Sharp detection of low-dimensional structure in probability measures via dimensional logarithmic {S}obolev inequalities},
  author={Li, Matthew TC and Cui, Tiangang and Li, Fengyi and Marzouk, Youssef and Zahm, Olivier},
  journal={Information and Inference: A Journal of the IMA},
  volume={14},
  number={3},
  pages={iaaf021},
  year={2025},
  publisher={Oxford University Press}
}

\appendix
\section{MLE and the second order delta method}\label{app_second_order_delta}
For completeness, we first state the multivariate second order delta method.
\begin{theorem}[Second Order Delta Method]\label{thm:second_order_delta}
Let $g \colon\mathbb R^{p} \to \mathbb R$ be a twice differentiable function at $\alpha^*$ that satisfies $\nabla g(\alpha^*) = 0$. If $r_N (\widehat{\alpha}_N - \alpha^\ast) \xrightarrow{d} Z$, where $r_N \rightarrow \infty$ as $N \rightarrow \infty$ and $Z$ is a random variable in $R^{p}$ whose distribution is independent of $N$, then we have
\begin{align*}
    r_N^2 (g(\widehat{\alpha}_N) - g(\alpha^\ast) )\xrightarrow{d} \frac{1}{2}Z^\top \nabla^2 g( \alpha^*) Z.
\end{align*}
\end{theorem}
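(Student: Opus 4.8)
The plan is the classical Taylor-expansion argument behind the delta method, specialized to the second-order case where the first-order term drops out because $\nabla g(\alpha^*)=0$.

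First I would write down the second-order (Peano form) Taylor expansion of $g$ at $\alpha^*$. Twice differentiability of $g$ at $\alpha^*$ together with $\nabla g(\alpha^*)=0$ gives, for $\alpha$ in a neighborhood of $\alpha^*$,
\begin{align*}
  g(\alpha)-g(\alpha^*)=\tfrac12(\alpha-\alpha^*)^\top\nabla^2 g(\alpha^*)(\alpha-\alpha^*)+\|\alpha-\alpha^*\|^2\,\psi(\alpha-\alpha^*),
\end{align*}
where $\psi\colon\mathbb R^p\to\mathbb R$ is defined by dividing the remainder by $\|\alpha-\alpha^*\|^2$ for $\alpha\neq\alpha^*$ and by $\psi(0)\coloneqq 0$, so that $\psi(h)\to0$ as $h\to0$. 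Substituting $\alpha=\widehat\alpha_N$, multiplying through by $r_N^2$, and writing $W_N\coloneqq r_N(\widehat\alpha_N-\alpha^*)$, this becomes
\begin{align*}
  r_N^2\big(g(\widehat\alpha_N)-g(\alpha^*)\big)=\tfrac12\,W_N^\top\nabla^2 g(\alpha^*)\,W_N+\|W_N\|^2\,\psi(\widehat\alpha_N-\alpha^*).
\end{align*}

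Next I would treat the two terms on the right separately. For the quadratic term, the map $w\mapsto\tfrac12 w^\top\nabla^2 g(\alpha^*)w$ is continuous, so the hypothesis $W_N\xrightarrow{d}Z$ and the continuous mapping theorem give $\tfrac12 W_N^\top\nabla^2 g(\alpha^*)W_N\xrightarrow{d}\tfrac12 Z^\top\nabla^2 g(\alpha^*)Z$. For the remainder term, $W_N\xrightarrow{d}Z$ implies $W_N=O_p(1)$, hence $\|W_N\|^2=O_p(1)$; moreover $\widehat\alpha_N-\alpha^*=r_N^{-1}W_N\xrightarrow{p}0$ since $r_N\to\infty$, and because $\psi$ is continuous at the origin with $\psi(0)=0$ it follows that $\psi(\widehat\alpha_N-\alpha^*)\xrightarrow{p}0$. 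The product of an $O_p(1)$ sequence and an $o_p(1)$ sequence is $o_p(1)$, so $\|W_N\|^2\,\psi(\widehat\alpha_N-\alpha^*)\xrightarrow{p}0$, and Slutsky's theorem then yields $r_N^2\big(g(\widehat\alpha_N)-g(\alpha^*)\big)\xrightarrow{d}\tfrac12 Z^\top\nabla^2 g(\alpha^*)Z$, as claimed.

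The step I expect to be most delicate is the control of the remainder under the weak hypothesis that $g$ is merely twice differentiable \emph{at the single point} $\alpha^*$ rather than on a neighborhood: this rules out a mean-value remainder involving $\nabla^2 g$ at an intermediate point, so the $o_p$-bound on $\|W_N\|^2\,\psi(\widehat\alpha_N-\alpha^*)$ must be extracted purely from the behavior of $\psi$ near the origin. The one small lemma required there — that $h_N\xrightarrow{p}0$ together with continuity of $\psi$ at $0$ and $\psi(0)=0$ imply $\psi(h_N)\xrightarrow{p}0$ — is immediate from the $\varepsilon$--$\delta$ definition of continuity and the definition of convergence in probability: given $\varepsilon>0$, choose $\delta$ with $\|h\|<\delta\Rightarrow|\psi(h)|<\varepsilon$, so that $\mathbb P(|\psi(h_N)|\ge\varepsilon)\le\mathbb P(\|h_N\|\ge\delta)\to0$. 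Everything else is a routine application of the continuous mapping and Slutsky theorems.
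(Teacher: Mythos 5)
Your proof is correct. Note, however, that the paper does not actually prove this statement: it is presented in Appendix A ``for completeness'' as the standard multivariate second-order delta method and is used as a black box, so there is no in-paper argument to compare against. Your write-up is the classical textbook proof (Peano-form Taylor expansion at $\alpha^*$, continuous mapping applied to the quadratic form $w \mapsto \tfrac12 w^\top \nabla^2 g(\alpha^*) w$, and an $O_p(1)\cdot o_p(1)$ bound on the remainder followed by Slutsky), and all the steps are sound. You are also right to flag the remainder control as the only delicate point: under the standard reading of ``twice differentiable at $\alpha^*$'' (namely, $g$ differentiable on a neighborhood with $\nabla g$ differentiable at $\alpha^*$), Young's form of Taylor's theorem gives exactly the Peano expansion $g(\alpha)-g(\alpha^*)=\tfrac12(\alpha-\alpha^*)^\top\nabla^2 g(\alpha^*)(\alpha-\alpha^*)+o(\|\alpha-\alpha^*\|^2)$ that your function $\psi$ encodes, and your $\varepsilon$--$\delta$ lemma for $\psi(h_N)\xrightarrow{p}0$ is exactly what is needed since a mean-value form of the remainder is unavailable under this minimal hypothesis. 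This supplies a self-contained justification for a result the paper only cites.
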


From the theory of MLE and asymptotic normality, we have
\begin{align*}
    \sqrt{N} \left( \widehat{\alpha}_N -  \alpha^* \right) \xrightarrow{d} \mathcal N(0, I^{-1}( \alpha^*)).
\end{align*}
In addition, recall that $g$ is defined as $g( \alpha) = \mathbb E_{\pi_Y} \left [ \log q_Y( y; \alpha) \right]$ and that $\nabla g(\alpha^*) = 0$. Applying Theorem~\ref{thm:second_order_delta}, we have that 
\begin{align*}
    N\left( g( \widehat{\alpha}_N) - g( \alpha^*)\right) \xrightarrow{d} \frac{1}{2} Z^\top \nabla^2 g( \alpha^*) Z,
\end{align*}
where $Z\sim \mathcal N(0, I^{-1}( \alpha^*))$ in this case. We further note that the right hand side is the sum of Chi-square distributions. To be precise, let $\{\lambda_i\}_{i=1}^{p}$ be the generalized eigenvalues of the matrix pencil $\left(I( \alpha^*)^{-1}, \left(\nabla_{ \alpha}^2 g( \alpha^*)\right)\right)$.

Then $\mathbb E\left[N(g( \widehat{\alpha}_N) - g( \alpha^*)) \right]$ converges to $\sum_{i=1}^{p}\lambda_i$, and $\mathbb V\left[ N(g( \widehat{\alpha}_N) - g( \alpha^*)) \right]$ converges to $2\sum_{i=1}^{p} \lambda_i$ with the assumption that $\mathbb E\left[ N(g( \widehat{\alpha}_N) - g( \alpha^*))^2\right] \leq C$ for some $C$. Please refer to Appendix~\ref{app_chisqure} for more details.

\section{Proofs of the main results}\label{proof_of_thms}
%\subsection{Proof of Theorem~\ref{thm:EIG_rate}}

\begin{proof}[Proof of Theorem~\ref{thm:EIG_rate}]
\label{proof:EIG_rate}
Recall that EIG can be computed as 
\begin{align*}
    \mathrm{EIG} = \mathbb E_{\pi_{X,Y}} \left[ \log \frac{\pi_{Y|X}( y| x)}{\pi_Y( y)}\right].
\end{align*}
This quantity can be estimated using samples, i.e., $\frac{1}{M}\sum_{i = 1}^M \log \frac{\pi_{Y|X}( y^i| x^i)}{\pi_Y( y^i)}$.

Suppose we can evaluate $\pi_{Y|X}( y| x)$ exactly for any $ x$ and $ y$, and suppose that the distribution is parametrized by $ \widehat{\alpha}_N$. Then the EIG estimator can be written as 
\begin{align*}
    \widehat{\mathrm{EIG}}_{M,N} = \frac{1}{M}\sum_{i = 1}^M \log \frac{\pi_{Y|X}( y^i| x^i)}{q_Y( y^i;  \widehat{\alpha}_N)},
\end{align*}
whose bias is 
\begin{align*}
   \mathbb E\left[  \widehat{\mathrm{EIG}}_{M,N} - \mathrm{EIG} \right] 
   &= \mathbb E_{\widehat{\alpha}_N} \mathbb E_{\pi_Y} \left[  \log \pi_Y( y) -  \log q_Y( y;  \widehat{\alpha}_N)\right] \\ &= \mathbb E_{\pi_Y} \left[  \log \pi_Y( y) \right] -\mathbb E_{\widehat{\alpha}_N}  \mathbb E_{\pi_Y}\left[ \log q_Y( y;  \widehat{\alpha}_N)\right].
\end{align*}
Let $\alpha^*$ be the true parameter that generates the data. That is $\pi_Y(y) = q_Y(y;\alpha^*)$. We define 
\begin{align*}
    g( \alpha) := \mathbb E_{a^*} \left [ \log q_Y( y; \alpha) \right] = \mathbb E_{q_{Y}(y ;a^*)} \left [ \log \revise{q_Y( y; \alpha) }\right] =  \mathbb E_{\pi_Y} \left [ \log \revise{q_Y( y; \alpha)} \right],
\end{align*}
which is a function of $\alpha$ only. Using Taylor expansion, we have 
\begin{align*}
    g( \widehat{\alpha}_N) = g( \alpha^*) + \nabla g( \alpha^*) ( \widehat{\alpha}_N - \alpha^*) + \frac 12 ( \widehat{\alpha}_N -  \alpha^*)^\top \nabla^2 g( \alpha^*) ( \widehat{\alpha}_N -  \alpha^*) + o_p(\left \Vert  \widehat{\alpha}_N -  \alpha^*\right\Vert^2).
\end{align*}
Here we also have $ \alpha^* =  \arg\max_{ \alpha} g( \alpha)$, assuming that $\left| \partial^3 g( \alpha) \right| \leq C$ for all $ \alpha$ \cite{taylor}. Note that by the property of MLE, 
\begin{align*}
    \nabla g( \alpha^*) = 0.
\end{align*}
Then using the asymptotic normality of MLE, we have that 
\begin{align*}
    \sqrt N ( \widehat{\alpha}_N -  \alpha^*) \rightarrow \mathcal N(0, I( \alpha^*)^{-1})
\end{align*}
where $I( \alpha^*)$ is the Fisher information matrix, evaluated at $ \alpha^*$. Since $\nabla_{\alpha} g( \alpha^*) = 0$, the Taylor expansion above can be written as 
\begin{align*}
     g( \widehat{\alpha}_N) = g( \alpha^*) + \frac 12 (  \widehat{\alpha}_N -  \alpha^*)^\top \nabla^2_{\alpha} g( \alpha^*) (\widehat{\alpha}_N - \alpha^*) + o_p(\left \Vert \widehat{\alpha}_N - \alpha^*\right\Vert^2).
\end{align*}
Rearrange we have
\begin{align*}
    N(g(\widehat{\alpha}_N) - g(\alpha^*)) &= \frac N2 ( \widehat{\alpha}_N - \alpha^*)^\top \nabla^2_\alpha g(\alpha^*) ( \widehat{\alpha}_N - \alpha^*) +  o_p\left(\left \Vert \sqrt N \left(\widehat{\alpha}_N -  \alpha^*\right)\right\Vert^2\right)\\
    &=\frac N2 ( \widehat{\alpha}_N - \alpha^*)^\top \nabla^2_\alpha g(\alpha^*) ( \widehat{\alpha}_N - \alpha^*) +  o_p\left(1\right),
\end{align*}
where we have used that $\sqrt{N}\left(\widehat{\alpha}_N -  \alpha^*\right) $ converges in distribution, so $\left \Vert \sqrt N \left(\widehat{\alpha}_N -  \alpha^*\right)\right\Vert^2 = O_p(1)$ and that $o_p\left( O_p(1)\right) = o_p(1)$.
The first term on the right hand side converges to the weighted sum of Chi-square distributions with one degree of freedom $\sum_{i = 1}^{p} c_i \chi_{1_i}^2$ as $N \rightarrow \infty$, where we assume $\alpha \in \mathbb R^{p}$, and the limit is independent of $N$. (See Theorem~\ref{thm:second_order_delta} and Appendix~\ref{app_chisqure}). Note that the limiting distribution  depends only on the dimension of $\alpha^*$, not the value of $ \alpha^*$, thus fixed. We then have
\begin{align*}
    g( \widehat{\alpha}_N) - g( \alpha^*) \xrightarrow{d} \frac 1N \sum_{i = 1}^{p} c_i \chi_{1_i}^2.
\end{align*}
Let $\mathbb E\left[\sum_{i = 1}^{p} c_i \chi_{1_i}^2 \right] = C_{p}$, and assume $g$ is a continuous and bounded function. Using Portmanteau theorem \cite{portmanteau}, we have that 
\begin{align*}
    \mathbb{E}_{\widehat{\alpha}_N} \left[g(\widehat{\alpha}_N)\right] - g(\alpha^*)\sim \mathcal O(1/N).
\end{align*}
We thus conclude that the bias converges at a rate of $ \mathcal O(1/N)$. 

We then study the variance of $\widehat{\mathrm{EIG}}_{M,N}$.
\begin{align*}
    \mathbb V \left[ \widehat{\mathrm{EIG}}_{M,N}\right] &= \mathbb V \left[ \frac 1M \sum_{i = 1}^M \log \pi_{Y|X}( y^i| x^i) - \frac 1M \sum_{i = 1}^M \log q_Y( y^i; \widehat{\alpha}_N)\right]\\
    &=\mathbb V \left[ \frac 1M \sum_{i = 1}^M \log \pi_{Y|X}( y^i| x^i) \right]+ \mathbb V \left[\frac 1M \sum_{i = 1}^M \log q_Y( y^i;  \widehat{\alpha}_N)\right] \\
    & - 2\mathbb{C}\text{ov} \left[\frac 1M \sum_{i = 1}^M \log \pi_{Y|X}( y^i| x^i),  \frac 1M \sum_{i = 1}^M \log q_Y( y^i;  \widehat{\alpha}_N)\right].
\end{align*}
We then analyze each of the three terms separately. The term
\begin{align*}
    \mathbb V \left[ \frac 1M \sum_{i = 1}^M \log \pi_{Y|X}( y^i| x^i) \right] = \frac 1M \mathbb V \left[\log \pi_{Y|X}( y| x) \right] \sim \mathcal O(1/M),
\end{align*}
as the samples are independent and the variance of $\log \pi_{Y|X}( y| x)$ is a fixed quantity. To analyze $\mathbb V \left[\frac 1M \sum_{i = 1}^M \log q_Y( y^i;  \widehat{\alpha}_N)\right]$, we use the law of total variance:
\begin{align*}
    &\mathbb V \left[ \frac 1M \sum_{i = 1}^M \log q_Y( y^i;  \widehat{\alpha}_N) \right]
    = \frac {1}{M^2} \mathbb V \left[  \sum_{i = 1}^M \log q_Y( y^i;  \widehat{\alpha}_N ) \right]\\
    = &\frac {1}{M^2} \left(\mathbb E_{  \widehat{\alpha}_N} \left[\mathbb V_{ \pi_Y} \left[ \sum_{i = 1}^M \log q_Y( y^i;  \widehat{\alpha}_N ) \right] \right] +  \mathbb V_{  \widehat{\alpha}_N} \left[\mathbb E_{ \pi_Y} \left[ \sum_{i = 1}^M \log q_Y( y^i;  \widehat{\alpha}_N ) \right] \right]\right)\\
    = &\frac {1}{M^2} \left(M\mathbb E_{  \widehat{\alpha}_N} \left[\mathbb V_{ \pi_Y} \left[ \log q_Y( y;   \widehat{\alpha}_N ) \right] \right] + M^2 \mathbb V_{  \widehat{\alpha}_N} \left[\mathbb E_{\pi_Y} \left[  \log q_Y( y;   \widehat{\alpha}_N ) \right] \right]\right)\\
    = &\left(\frac {1}{M} \mathbb E_{  \widehat{\alpha}_N} \left[\mathbb V_{\pi_Y} \left[ \log q_Y( y;   \widehat{\alpha}_N ) \right] \right]\right) + \left(\mathbb V_{  \widehat{\alpha}_N} \left[\mathbb E_{\pi_Y} \left[  \log q_Y( y;   \widehat{\alpha}_N ) \right] \right]\right),
\end{align*}
where we have used the fact that $\{\log q_Y( y^i;  \widehat{\alpha}_N )\}_{i=1}^M$ are independent given $\widehat{\alpha}_N$.
We then study two terms in the last equality separately. Define
\begin{align*}
    f( \alpha) = \mathbb V_{\pi_Y} \left[ \log q_Y( y;  \alpha) \right],
\end{align*}
then 
\begin{align*}
    f( \widehat{\alpha}_N) = f( \alpha^*) + \nabla_{ \alpha} f( \alpha^*) (  \widehat{\alpha}_N -  \alpha^*) + o_p\left(\left\Vert \widehat{\alpha}_N - \alpha^* \right\Vert^2\right)
\end{align*}
and 
\begin{align*}
    \mathbb E_{\widehat{\alpha}_N}\left[ f( \widehat{\alpha}_N)\right] = f( \alpha^*) + \nabla_{ \alpha} f( \alpha^*) \mathbb E_{\widehat{\alpha}_N}\left[  \widehat{\alpha}_N -  \alpha^*\right] + o_p\left(\mathbb E_{\widehat{\alpha}_N}\left\Vert \widehat{\alpha}_N - \alpha^* \right\Vert^2\right) \sim \mathcal O(1).
\end{align*}
For the second term, recall that $g(\alpha)$ was previously defined as
\begin{align*}
    g( \alpha)  = \mathbb E_{\pi_Y} \left [ \log q_Y( y; \alpha) \right] = \mathbb E_{a^*} \left [ \log q_Y( y; \alpha) \right].
\end{align*}
% Note that 
% \begin{align*}
%      \mathbb E_{\pi_Y} \left [ \log q_Y( y; \alpha) \right] =  \mathbb E_{\pi_Y} \left [ \log q_Y( y; \alpha) \vert \alpha \right],
% \end{align*}
% as $\log q_Y(y;\alpha)$ is already parametrized by $\alpha$ and the distribution $\pi_Y$ is not a function of $\alpha$. 
Using Portmanteau theorem again \cite{portmanteau}, we have that the second term 
\begin{align*}
    \mathbb V_{  \widehat{\alpha}_N} \left[\mathbb E_{\pi_Y
    } \left[  \log q_Y( y;  \widehat{\alpha}_N ) \right] \right]\sim \mathcal O\left(\frac{1}{N^2}\right),
\end{align*}
as we know that $\mathbb V[N (g(  \widehat{\alpha}_N) - g( \alpha^*))] = N^2 \mathbb V(g( \widehat{\alpha}_N))$ converge to the variance of the weighted sum of chi-square distribution as $N$ goes to infinity, which is a constant. Therefore, 
\begin{align*}
    \mathbb V \left[ \frac 1M \sum_{i = 1}^M \log q_Y( y^i;  \widehat{\alpha}_N) \right] \sim \mathcal O\left(\frac{1}{M} + \frac{1}{N^2}\right).
\end{align*}
We then study the covariance term $\mathbb{C}\text{ov} \left(\frac 1M \sum_{i = 1}^M \log \pi_{Y|X}( y^i| x^i),  \frac 1M \sum_{i = 1}^M \log q_Y( y^i; \widehat{\alpha}_N)\right)$. Note that $\mathbb{C}\text{ov}  \left( \log \pi_{Y|X}( y^i| x^i), \log q_Y( y^j;   \widehat{\alpha}_N)\right) = 0$ for $i\neq j$. Therefore the expression can be simplified to
\begin{align*}
    \frac {1}{M^2} \sum_{i=1}^M\mathbb{C}\text{ov}  \left( \log \pi_{Y|X}( y^i| x^i), \log q_Y( y^i;   \widehat{\alpha}_N)\right).
\end{align*}
Using Cauchy-Schwarz, we have
\begin{align*}
     &\mathbb{C}\text{ov}  \left( \log \pi_{Y|X}( y^i| x^i), \log q_Y( y^i;   \widehat{\alpha}_N)\right) \\
     \leq & \sqrt{\mathbb V\left[\log \pi_{Y|X}( y^i| x^i) \right] \mathbb V\left[\log q_Y( y^i; \widehat{\alpha}_N)\right]}.
\end{align*}
From previous analysis, the first term $\mathbb V\left[\log \pi_{Y|X}( y^i| x^i) \right]$ is a constant, and the second term $\mathbb V\left[\log q_Y( y^i; \widehat{\alpha}_N)\right]$ converges at a rate of $\mathcal O(1+\frac{1}{N^2})$ using the law of total variance. Therefore, 
\begin{align*}
    \frac {1}{M^2} \sum_{i=1}^M\mathbb{C}\text{ov}  \left( \log \pi_{Y|X}( y^i| x^i), \log q_Y( y^i;   \widehat{\alpha}_N)\right) \sim \frac 1M \mathcal O\left(1+\frac{1}{N^2}\right)=\mathcal O\left(\frac 1M \right ).
\end{align*}

% Using the same idea, we can write 
% \begin{align*}
%     \log q_Y( y;  \widehat{\alpha}_N) = \log q_Y( y; \alpha^*) + \nabla_{ \alpha} \log q_Y( y; \alpha^*) (  \widehat{\alpha}_N -  \alpha^*) + o\left(\left\Vert \widehat{\alpha}_N - \alpha^* \right\Vert\right)
% \end{align*}
% and the term-wise covariance is dominated by $\mathbb{C}\text{ov} \left(\log \pi_{Y|X}( y| x), \log q_Y( y;  \alpha^*)\right) \sim O(1)$. Therefore, the covariance term is of the order $O(1/M^2)$.
To conclude, we have
\begin{align*}
 &\mathbb V \left[ \frac 1M \sum_{i = 1}^M \log \pi_{Y|X}( y^i| x^i) \right]  \sim \mathcal O\left(\frac{1}{M} \right),\\
&\mathbb V \left[ \frac 1M \sum_{i = 1}^M \log q_Y( y^i;  \widehat{\alpha}_N) \right] \sim  \mathcal O\left(\frac{1}{M} + \frac{1}{N^2}\right),\\
&\mathbb{C}\text{ov} \left(\frac 1M \sum_{i = 1}^M \log \pi_{Y|X}( y^i| x^i),  \frac 1M \sum_{i = 1}^M \log q_Y( y^i; \widehat{\alpha}_N)\right) \sim \mathcal O\left(\frac{1}{M} \right).
\end{align*}
Consequently, $\mathbb V \left[ \widehat{\mathrm{EIG}}_{M,N}\right] \sim \mathcal O\left(\frac{1}{M} + \frac{1}{N^2}\right)$.
\end{proof}

\begin{proof}[Proof of Theorem~\ref{thm:EIG_rate_2}]
    This theorem can be proved using the same techniques as in this \hyperref[proof:EIG_rate]{proof}, thus omitted. 
\end{proof}
%\subsection{Proof of Corollary~\ref{cor:MSE_rate}}\label{proof:MSE_rate}
\begin{proof}[Proof of Theorem~\ref{cor:MSE_rate}]
We first analyze the bias term,
\begin{align*}
    &\mathbb E\left[\widehat{\mathrm{EIG}}_{M,N} - {\mathrm{EIG}}\right]\\
    = &\revise{ \mathbb E_{\widehat{\alpha}_N} \mathbb E_{\pi_Y}}\left[\log \pi_Y( y) - \log q_Y( y; \widehat{\alpha}_N)\right] + \revise{ \mathbb E_{\widehat{\beta}_N} \mathbb E_{ \pi_{X,Y}}}\left[\log q_{Y|X}( y| x; \widehat{\beta}_N) - \log \pi_{Y|X}( y| x)  \right].
\end{align*}
Similar to the analysis in shown in the \hyperref[proof:EIG_rate]{proof} of Theorem~\ref{thm:EIG_rate}, we have that the bias term converges at a rate of $\mathcal O(1/N)$. We then analyze the variance term. 
\begin{align*}
    \mathbb V\left[\widehat{\mathrm{EIG}}_{M,N}\right]
    =& \mathbb V\left[\frac 1M \sum_{i = 1}^M \log q_{Y|X}( y^i| x^i;\widehat{\beta}_N) - \frac 1M \sum_{i = 1}^M \log q_Y( y^i; \widehat{\alpha}_N)\right]\\
    =& \mathbb V\left[ \frac 1M \sum_{i = 1}^M \log q_{Y|X}( y^i| x^i;\widehat{\beta}_N)\right] + \mathbb V\left[\frac 1M \sum_{i = 1}^M \log q_Y( y^i; \widehat{\alpha}_N)\right] \\
    - &2\mathbb{C}\text{ov} \left(\frac 1M \sum_{i = 1}^M \log q_{Y|X}( y^i| x^i; \widehat{\beta}_N),\frac 1M \sum_{i = 1}^M \log q_Y( y^i; \widehat{\alpha}_N)\right).
\end{align*}
Recall from the previous \hyperref[proof:EIG_rate]{proof} that the first and the second term converge at a rate of $\mathcal O(1/M + 1/N^2)$. For the covariance term, we use Cauchy-Schwarz and obtain
\begin{align*}
    &\mathbb{C}\text{ov} \left( \frac 1M\sum_{i = 1}^M \log q_{Y|X}( y^i| x^i; \widehat{\beta}_N), \frac 1M\sum_{i = 1}^M \log q_Y( y^i; \widehat{\alpha}_N)\right)\\
    \leq & \sqrt{\mathbb V\left[ \frac 1M\sum_{i = 1}^M \log q_{Y|X}( y^i| x^i; \widehat{\beta}_N) \right] \mathbb V\left[\frac 1M\sum_{i = 1}^M \log q_Y( y^i; \widehat{\alpha}_N)\right]}.
\end{align*}
We then know from the previous \hyperref[proof:EIG_rate]{proof} that both terms converge at a rate of $\mathcal O(1/M + 1/N^2)$. Therefore, we conclude that 
\begin{align*}
    \mathbb V\left[ \widehat{\mathrm{EIG}}_{M,N}\right] \sim \mathcal O\left(\frac 1M + \frac{1}{N^2}\right).
\end{align*}
\end{proof}

%\subsection{Proof of Corollary~\ref{cor:opt_allocation}}\label{proof:opt_allocation}
\begin{proof}[Proof of Corollary~\ref{cor:opt_allocation}]\label{proof:opt_allocation}
Suppose we have $L$ samples in total, and let $\gamma = \frac MN$. We then get $M = \frac{\gamma}{\gamma + 1} L$ and $N = \frac{1}{\gamma + 1}L$. We would then minimize $\frac{A}{\frac{\gamma}{\gamma + 1}L} + \frac{B}{\frac{L^2}{(\gamma + 1)^2}}$ over $\gamma$, where $A, B$ are constants. We set the derivative of the above expression to $0$ and get $\frac{2\gamma^3 B + 2\gamma^2 B - AL}{\gamma^2L^2} = 0$. Solving for $\gamma$, we get $\gamma = \frac 13 (\zeta + \frac{1}{\zeta} - 1)$, where $\zeta = \frac{(3 \sqrt 3 \sqrt{27A^2 B^4L^2 - 8AB^5L} + 27AB^2L - 4B^3)^{1/3}}{2^{2/3}B} \sim\mathcal O(L^{1/3})$. We see that if $L$ is large, then $\gamma \approx \frac{1}{3}(\zeta - 1)$. Therefore, if we let $L$ represent the total number of samples, the ratio of $M$ to $N$ should grow proportional to $L^{1/3}$. Using this optimal sample allocation, we obtain $M = \frac{L}{L^{-1/3} + 1}$ and $N = \frac{L}{L^{1/3} + 1}$. Consequently, the MSE decays at a rate of $\frac{2L^{2/3} + L + 2L^{1/3} + 1}{L^2}$, with the dominant term being of the order $L^{-1}$. 
\end{proof}

\section{Chi-square distribution}\label{app_chisqure}
In this section we derive the mean and variance of the quadratic form $(X - \mu)^\top \Sigma_2^{-1} (X - \mu)$ given a multivariate Gaussian random variable $X\sim \mathcal N(\mu, \Sigma_1)$.
% \todo{Can we also write this as a lemma with proof?}
\begin{theorem}
    Suppose $X \in\mathbb R^n$ is normally distributed, i.e., $X \sim \mathcal N(\mu, \Sigma_1)$, then the random variable $Y=(X - \mu)^\top \Sigma_2^{-1} (X - \mu)$ is the sum of Chi-square random variables, i.e., $(X - \mu)^\top \Sigma_2^{-1}(X - \mu) \sim \sum_{i=1}^n \lambda_i \chi_{1_i}^2$, where $\{\chi^2_{1_i}\}_{i=1}^n$ are independent Chi-square distributions with one degree of freedom 
    % \todo{It would be good to define the $\chi$ as being chi-squared, and possibly add an index $i$. We should then say they are independent. will do!}, 
   and $\lambda_i$'s are the generalized eigenvalues of the matrix pencil $(\Sigma_1, \Sigma_2)$. Consequently, the mean and the variance of $Y$ are $\sum_{i=1}^n \lambda_i$ and $2 \sum_{i=1}^n \lambda_i$, respectively.
\end{theorem}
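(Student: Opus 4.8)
The plan is to diagonalize the quadratic form by a whitening change of variables, reducing $Y$ to a weighted sum of independent squared standard normals whose weights are exactly the generalized eigenvalues of the pencil $(\Sigma_1,\Sigma_2)$. First I would write $X-\mu=\Sigma_1^{1/2}W$ with $W\sim\mathcal N(0,I_n)$, where $\Sigma_1^{1/2}$ is the symmetric positive-definite square root of $\Sigma_1$; this is available because $\Sigma_1\succ 0$. Substituting yields $Y=W^\top M W$ with $M:=\Sigma_1^{1/2}\Sigma_2^{-1}\Sigma_1^{1/2}$, which is symmetric positive-definite since $\Sigma_2\succ 0$.

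Next I would identify the spectrum of $M$ with the generalized eigenvalues $\lambda_i$ of the pencil $(\Sigma_1,\Sigma_2)$. Since $\Sigma_1^{-1/2}M\Sigma_1^{1/2}=\Sigma_2^{-1}\Sigma_1$, the matrix $M$ is similar to $\Sigma_2^{-1}\Sigma_1$, and the eigenvalues of the latter are precisely the roots $\lambda$ of $\det(\Sigma_1-\lambda\Sigma_2)=0$, i.e.\ the generalized eigenvalues of $(\Sigma_1,\Sigma_2)$; all of them are real and strictly positive because both matrices are positive-definite. Taking the spectral decomposition $M=Q\Lambda Q^\top$ with $Q$ orthogonal and $\Lambda=\diag(\lambda_1,\dots,\lambda_n)$, and setting $\widetilde W:=Q^\top W$, orthogonal invariance of the standard Gaussian gives $\widetilde W\sim\mathcal N(0,I_n)$, so $Y=\widetilde W^\top\Lambda\widetilde W=\sum_{i=1}^n\lambda_i\,\widetilde W_i^2$. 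The coordinates $\widetilde W_i$ are i.i.d.\ $\mathcal N(0,1)$, hence $\widetilde W_i^2\sim\chi_1^2$ independently, which is exactly the asserted representation $Y\sim\sum_{i=1}^n\lambda_i\chi_{1_i}^2$. The moments then follow from $\mathbb E[\chi_1^2]=1$, $\mathbb V[\chi_1^2]=2$, and independence: $\mathbb E[Y]=\sum_{i=1}^n\lambda_i$ and $\mathbb V[Y]=2\sum_{i=1}^n\lambda_i^2$.

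I do not expect a genuine analytic obstacle here; the content is essentially linear-algebra bookkeeping. The one step that needs care is the identification: one must fix the convention for the pencil $(\Sigma_1,\Sigma_2)$ and verify through the similarity $M\sim\Sigma_2^{-1}\Sigma_1$ that the weights appearing in the chi-square expansion are the generalized eigenvalues themselves and not, say, their reciprocals. A secondary point is that the argument relies on $\Sigma_1,\Sigma_2\succ 0$ so that $\Sigma_1^{1/2}$ and $\Sigma_2^{-1}$ exist and all $\lambda_i>0$, consistent with the paper's standing assumption that the relevant distributions admit Lebesgue densities; if one only had $\Sigma_1\succeq 0$, the same reduction would still work after restricting to the range of $\Sigma_1$, simply appending zeros to the list of weights.
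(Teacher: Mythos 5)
Your proof is correct and follows essentially the same route as the paper: both arguments simultaneously diagonalize the quadratic form and the Gaussian covariance, the paper via the congruence $\Phi^\top\Sigma_1\Phi=\Lambda$, $\Phi^\top\Sigma_2\Phi=I$ with the generalized eigenvector matrix, and you via the symmetric square root $\Sigma_1^{1/2}$ followed by an orthogonal spectral decomposition of $M=\Sigma_1^{1/2}\Sigma_2^{-1}\Sigma_1^{1/2}$; the similarity $M\sim\Sigma_2^{-1}\Sigma_1$ correctly identifies the weights with the generalized eigenvalues of $(\Sigma_1,\Sigma_2)$. One substantive remark: your variance computation gives $\mathbb V[Y]=2\sum_{i=1}^n\lambda_i^2$, which is the correct value, whereas the theorem statement (and the paper's own proof) asserts $2\sum_{i=1}^n\lambda_i$ --- that is a typo in the paper, since $\mathbb V\bigl[\sum_i\lambda_i\chi^2_{1_i}\bigr]=\sum_i\lambda_i^2\,\mathbb V[\chi^2_1]=2\sum_i\lambda_i^2$. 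So your derivation does not match the claim as literally written, but only because the claim itself is in error; your version is the one that should stand.
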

\begin{proof}
    Consider the matrix pencil $(\Sigma_1, \Sigma_2)$, with generalized eigenvalue $\Lambda$ and eigenvector $\Phi$, where $\Lambda$ is a diagonal matrix with eigenvalues being its diagonal entries and the corresponding eigenvector being the column of $\Phi$. We then have $\Phi^\top \Sigma_1 \Phi = \Lambda, \Phi^\top \Sigma_2 \Phi = I$. Therefore, 
\begin{align*}
    \Sigma_1^{1/2} &= \Lambda^{1/2} \Phi^{-1}\\
    \Sigma_2^{1/2} &= \Phi^{-1}.
\end{align*}
Since $\Sigma_1^{-\top/2}(X - \mu) \sim \mathcal N(0, I)$, plugging in $\Sigma_1^{-\top/2} = \Lambda^{-1/2} \Phi^{\top}$, we have 
\begin{align*}
     \Phi^{\top}(X - \mu) \sim \mathcal N(0, \Lambda),
\end{align*}
from which we get 
\begin{align*}
    (X - \mu)^\top \Phi \Phi^{\top}(X - \mu)  =  (X - \mu)^\top \Sigma_2^{-1}(X - \mu) \sim \sum_{i=1}^n \lambda_i \chi_{1_i}^2.
\end{align*}
The mean is $\sum_{i=1}^n \lambda_i$, and the variance is $2 \sum_{i=1}^n \lambda_i$. 
\end{proof}

\section{EIG estimates results of the linear elasticity inverse problem \ref{subsec:wrench_mark}}\label{wrench_table}
\begin{table}[!ht]
\caption{Mean $\pm$ $2$ standard error using CMI for dimension reduction}
\centering
\begin{tabular}{|m{1cm}
| m{1.2cm}| m{1.2cm}| m{1.2cm}| m{1.2cm}| m{1.2cm}| m{1.2cm}| m{1.2cm}| m{1.2cm}|}\hline
& $s = 1$& $s = 2$& $s = 3$& $s = 4$& $s = 5$& $s = 6$& $s = 7$& $s = 8$\\
\hline
$r = 1$&$1.192 \pm 0.294$&$1.786 \pm 0.033$&$1.963 \pm 0.039$&$2.005 \pm 0.094$&$1.950 \pm 0.087$&$1.941 \pm 0.081$&$1.941 \pm 0.081$&$1.941 \pm 0.081$ \\
\hline
$r = 2$&$1.761 \pm 0.063$&$2.642 \pm 0.073$&$2.942 \pm 0.073$&$3.006 \pm 0.154$&$2.956 \pm 0.161$&$2.950 \pm 0.172$&$2.952 \pm 0.174$&$2.952 \pm 0.174$ \\
\hline
$r = 3$&$1.891 \pm 0.341$&$3.050 \pm 0.072$&$3.723 \pm 0.111$&$3.946 \pm 0.138$&$3.868 \pm 0.167$&$3.864 \pm 0.176$&$3.866 \pm 0.179$&$3.866 \pm 0.179$ \\
\hline
$r = 4$&$1.864 \pm 0.309$&$3.532 \pm 0.123$&$4.325 \pm 0.174$&$4.524 \pm 0.239$&$4.557 \pm 0.190$&$4.527 \pm 0.178$&$4.504 \pm 0.199$&$4.504 \pm 0.199$ \\
\hline
$r = 5$&$1.865 \pm 0.313$&$3.552 \pm 0.114$&$4.421 \pm 0.179$&$4.625 \pm 0.258$&$4.672 \pm 0.202$&$4.651 \pm 0.190$&$4.628 \pm 0.209$&$4.627 \pm 0.208$ \\
\hline
$r = 6$&$1.871 \pm 0.317$&$3.558 \pm 0.109$&$4.644 \pm 0.169$&$4.832 \pm 0.239$&$4.894 \pm 0.213$&$4.876 \pm 0.203$&$4.854 \pm 0.220$&$4.854 \pm 0.220$ \\
\hline
$r = 7$&$1.869 \pm 0.317$&$3.569 \pm 0.106$&$4.690 \pm 0.166$&$4.935 \pm 0.229$&$5.044 \pm 0.184$&$5.029 \pm 0.180$&$5.007 \pm 0.198$&$5.007 \pm 0.197$ \\
\hline
$r = 8$&$2.042 \pm 0.270$&$3.573 \pm 0.106$&$4.691 \pm 0.165$&$4.982 \pm 0.237$&$5.094 \pm 0.194$&$5.077 \pm 0.187$&$5.050 \pm 0.202$&$5.055 \pm 0.205$ \\
\hline
\end{tabular} 
\end{table}

\begin{table}[!ht]
\caption{Mean $\pm$ $2$ standard error using PCA for dimension reduction}
\centering
\begin{tabular}{|m{1cm}
| m{1.2cm}| m{1.2cm}| m{1.2cm}| m{1.2cm}| m{1.2cm}| m{1.2cm}| m{1.2cm}| m{1.2cm}|}\hline
& $s = 1$& $s = 2$& $s = 3$& $s = 4$& $s = 5$& $s = 6$& $s = 7$& $s = 8$\\
\hline
$r = 1$&$0.575 \pm 0.003$&$0.752 \pm 0.002$&$1.014 \pm 0.011$&$1.050 \pm 0.014$&$1.082 \pm 0.013$&$1.082 \pm 0.013$&$1.083 \pm 0.013$&$1.083 \pm 0.013$ \\
\hline
$r = 2$&$0.586 \pm 0.003$&$0.898 \pm 0.004$&$1.193 \pm 0.011$&$1.294 \pm 0.015$&$1.335 \pm 0.019$&$1.343 \pm 0.023$&$1.344 \pm 0.023$&$1.344 \pm 0.023$ \\
\hline
$r = 3$&$0.844 \pm 0.004$&$1.562 \pm 0.008$&$1.986 \pm 0.025$&$2.097 \pm 0.053$&$2.138 \pm 0.047$&$2.146 \pm 0.054$&$2.147 \pm 0.053$&$2.147 \pm 0.053$ \\
\hline
$r = 4$&$1.097 \pm 0.004$&$1.816 \pm 0.009$&$2.293 \pm 0.030$&$2.362 \pm 0.053$&$2.410 \pm 0.052$&$2.418 \pm 0.059$&$2.419 \pm 0.059$&$2.419 \pm 0.059$ \\
\hline
$r = 5$&$1.142 \pm 0.009$&$2.164 \pm 0.014$&$2.631 \pm 0.043$&$2.739 \pm 0.056$&$2.787 \pm 0.054$&$2.795 \pm 0.060$&$2.796 \pm 0.060$&$2.796 \pm 0.060$ \\
\hline
$r = 6$&$1.376 \pm 0.013$&$2.413 \pm 0.024$&$3.159 \pm 0.049$&$3.179 \pm 0.055$&$3.227 \pm 0.052$&$3.234 \pm 0.059$&$3.234 \pm 0.058$&$3.234 \pm 0.058$ \\
\hline
$r = 7$&$1.375 \pm 0.013$&$2.411 \pm 0.024$&$3.158 \pm 0.049$&$3.178 \pm 0.055$&$3.226 \pm 0.051$&$3.233 \pm 0.059$&$3.233 \pm 0.059$&$3.233 \pm 0.059$ \\
\hline
$r = 8$&$1.373 \pm 0.013$&$2.414 \pm 0.025$&$3.160 \pm 0.050$&$3.260 \pm 0.062$&$3.322 \pm 0.063$&$3.329 \pm 0.071$&$3.329 \pm 0.070$&$3.329 \pm 0.070$ \\
\hline
\end{tabular} 
\end{table}

\begin{table}[!ht]
\caption{Mean $\pm$ $2$ standard error using CCA for dimension reduction}
\centering
\begin{tabular}{|m{1cm}
| m{1.2cm}| m{1.2cm}| m{1.2cm}| m{1.2cm}| m{1.2cm}| m{1.2cm}| m{1.2cm}| m{1.2cm}|}\hline
& $s = 1$& $s = 2$& $s = 3$& $s = 4$& $s = 5$& $s = 6$& $s = 7$& $s = 8$\\
\hline
$r = 1$&$0.110 \pm 0.021$&$0.114 \pm 0.020$&$0.118 \pm 0.020$&$0.118 \pm 0.020$&$0.121 \pm 0.019$&$0.121 \pm 0.019$&$0.123 \pm 0.019$&$0.123 \pm 0.019$ \\
\hline
$r = 2$&$0.111 \pm 0.023$&$0.431 \pm 0.035$&$0.434 \pm 0.035$&$0.432 \pm 0.035$&$0.439 \pm 0.033$&$0.442 \pm 0.033$&$0.445 \pm 0.031$&$0.442 \pm 0.032$ \\
\hline
$r = 3$&$0.111 \pm 0.023$&$0.440 \pm 0.035$&$0.712 \pm 0.058$&$0.712 \pm 0.056$&$0.722 \pm 0.055$&$0.728 \pm 0.051$&$0.733 \pm 0.049$&$0.735 \pm 0.051$ \\
\hline
$r = 4$&$0.111 \pm 0.023$&$0.444 \pm 0.035$&$0.715 \pm 0.057$&$0.903 \pm 0.060$&$0.915 \pm 0.062$&$0.926 \pm 0.056$&$0.933 \pm 0.055$&$0.939 \pm 0.055$ \\
\hline
$r = 5$&$0.112 \pm 0.023$&$0.445 \pm 0.035$&$0.720 \pm 0.058$&$0.910 \pm 0.061$&$1.014 \pm 0.054$&$1.030 \pm 0.049$&$1.037 \pm 0.048$&$1.041 \pm 0.049$ \\
\hline
$r = 6$&$0.110 \pm 0.023$&$0.443 \pm 0.035$&$0.719 \pm 0.058$&$0.911 \pm 0.061$&$1.018 \pm 0.053$&$1.045 \pm 0.051$&$1.052 \pm 0.049$&$1.057 \pm 0.050$ \\
\hline
$r = 7$&$0.110 \pm 0.022$&$0.442 \pm 0.035$&$0.720 \pm 0.057$&$0.912 \pm 0.061$&$1.020 \pm 0.053$&$1.048 \pm 0.051$&$1.059 \pm 0.050$&$1.063 \pm 0.051$ \\
\hline
$r = 8$&$0.110 \pm 0.022$&$0.443 \pm 0.034$&$0.720 \pm 0.057$&$0.914 \pm 0.062$&$1.021 \pm 0.054$&$1.049 \pm 0.052$&$1.060 \pm 0.050$&$1.067 \pm 0.052$ \\
\hline
\end{tabular} 
\end{table}

\end{document}